\documentclass[11pt, oneside]{article} 
\usepackage[margin=2cm]{geometry}                		
\usepackage[parfill]{parskip}    		
\usepackage{graphicx}			
\usepackage[utf8]{inputenc}
\usepackage[english]{babel} 
\usepackage{amsmath,amssymb,amsthm,bm}
\usepackage[margin=2cm]{geometry}
\usepackage[color=yellow]{todonotes}
\usepackage{mathrsfs}
\usepackage{url}	
\usepackage{esint}
\usepackage[colorlinks]{hyperref}
\usepackage{enumerate}
\usepackage{outlines}[enumerate]
\usepackage{framed,comment,enumerate}
\usepackage{upgreek}
\usepackage{pgfplots}
\usepackage{float}
\usepackage{tikz,tikz-cd}
\usepackage{rotating}
\usepackage{graphicx}
\usepackage{caption}
\usepackage{multicol}
\setlength{\columnseprule}{1pt}
\usepackage{cancel}
\usepackage{subcaption}
\usepackage[title]{appendix}
\usepackage{tikz-cd}
\usepackage{pgf, pgffor}
\usepackage{subcaption}
\usepackage{stmaryrd}
\setlength{\parindent}{20pt}
\usetikzlibrary{decorations.pathmorphing,arrows.meta,decorations.markings}

\DeclareFontFamily{U}{MnSymbolC}{}
\DeclareSymbolFont{MnSyC}{U}{MnSymbolC}{m}{n}
\DeclareFontShape{U}{MnSymbolC}{m}{n}{
    <-6>  MnSymbolC5
   <6-7>  MnSymbolC6
   <7-8>  MnSymbolC7
   <8-9>  MnSymbolC8
   <9-10> MnSymbolC9
  <10-12> MnSymbolC10
  <12->   MnSymbolC12}{}
\DeclareMathSymbol{\intprod}{\mathbin}{MnSyC}{'270}

\maxdeadcycles=200
\extrafloats{100}
\numberwithin{equation}{section}
\newtheorem{theorem}{Theorem}[section]
\newtheorem{lemma}{Lemma}[section]
\newtheorem{corollary}{Corollary}[section]
\newtheorem{proposition}{Proposition}[section]
\newtheorem{remark}{Remark}[section]

\theoremstyle{definition}
\newtheorem{definition}{Definition}[section]
\newtheorem{assumption}[theorem]{Assumption}

\usepackage[color=yellow]{todonotes}

\makeatletter
\@tfor\next:=abcdefghijklmnopqrstuvwxyzABCDEFGHIJKLMNOPQRSTUVWXYZ\do{
\def\command@factory#1{%
\expandafter\def\csname b#1\endcsname{\mathbf{#1}}
\expandafter\def\csname fk#1\endcsname{\mathfrak{#1}}
\expandafter\def\csname bb#1\endcsname{\mathbb{#1}}
\expandafter\def\csname cl#1\endcsname{\mathcal{#1}}
\expandafter\def\csname bcl#1\endcsname{\mathbfcal{#1}}
}
\expandafter\command@factory\next
}

\newcommand{\dede}[2]{\frac{\delta #1}{\delta #2}}

\newcommand{\wt}[1]{\widetilde{#1}}

\newcommand{\mb}[1]{\mathbf{#1}}

\newcommand{\mc}[1]{\mathcal{#1}}

\newcommand{\mcal}[1]{\mc{#1}}
\newcommand{\scp}[2]{\left<#1\,,\,#2\right>}
\newcommand{\qv}[2]{\left[#1\,,\,#2\right]}

\newcommand{\ad}{\operatorname{ad}}
\newcommand{\Ad}{\operatorname{Ad}}
\newcommand{\dd}{\mathrm{d}}

\def\dt{\mathrm{d}t}
\def\dW{\mathrm{d}W}


\def\p{{\partial}}

\def\rmd{{\rm d}}

\def\bu{{\mathbf{u}}}

\makeatother


\def\p{\partial}

\pgfplotsset{compat=1.16}

\begin{document}
	\title{\vspace{-2em}
    Variational closures for composite homogenised fluid flows}
    \author{Theo Diamantakis$^1$, Ruiao Hu$^1$\footnote{Corresponding author, email: ruiao.hu15@imperial.ac.uk} \, and James-Michael Leahy$^2$ \vspace{0.4cm}\\
1. Department of Mathematics, Imperial College London, UK\\
2. Department of Mathematics, Imperial College London, UK and PhysicsX, UK
}
	\date{\today}
	\maketitle
\begin{abstract}
Homogenisation theory has seen recent applications \cite{CGH2017, FP2021, DH2023} in deriving stochastic transport models for fluid dynamics. In this work, we first derive the stochastic Lagrange-to-Euler map that underpins stochastic transport noise in fluid dynamics as the homogenisation limit of a parameterised flow map decomposing into rapidly fluctuating and slow components. Specifically, we prove convergence of this parameterised flow map to a scale-separated limit under the assumptions of a weak invariance principle for the rapidly fluctuating component and path continuity for the slow component. In this limit, the rapidly fluctuating component converges to a stochastic flow of diffeomorphisms that transforms the full flow dynamics into an SDE-governed stochastic flow through composition, while the slow component requires closure.

Our second contribution formulates two distinct variational closures for the slow component of the homogenised flow that exploit the composite structure of the stochastic flow. For the first closure, the critical points of a new variational principle satisfy a system of random-coefficient PDEs, which can be transformed into a system of stochastic PDEs via the coadjoint action of the stochastic flow map obtained from homogenising the fluctuating component. We show that these equations coincide with the stochastic Euler-Poincaré equations previously derived in \cite{Holm2015}. For the second closure, we modify the assumptions on the slow component and the associated variational principle to derive averaged models inspired by previous work on mean flow dynamics such as the Generalized Lagrangian Mean.
\end{abstract}

\textbf{Keywords:} Homogenisation, flow representation, stochastic variational principles, Euler equations. 

\textbf{MSC Classification (2020):} 60L20, 60L90, 60H15, 49S05, 35Q31, 70Hxx.

\tableofcontents

\section{Introduction}

When modelling complex and often turbulent Geophysical Fluid Dynamics (GFD), one inevitably encounters physical processes that exist over a wide range of spatial and temporal scales. To accurately represent these processes in finite-resolution numerical simulations, parameterisation schemes are required. By making these schemes stochastic, model uncertainties can be introduced naturally. Stochastic parameterisation schemes have seen operational use in numerical weather forecasting \cite{BMP1999, BTB2015}, where they have improved ensemble forecasting reliability and probabilistic skill scores. To guide the design of stochastic parameterisation schemes, one design principle is to preserve the geometric structures of the underlying deterministic fluid models.

The geometric structures of ideal fluid models are revealed by Lie-group invariant variational principles. As first noted by Arnold \cite{Arnold1966}, given a domain $\mcal{D}\subset \mathbb{R}^d$, solutions of the incompressible Euler fluid equations constitute time-dependent geodesic flow $g_t$ on the manifold of volume-preserving diffeomorphisms ($g_t\in \operatorname{SDiff}(\mcal{D})\, \forall t$). Building on Arnold's idea of modelling fluid dynamics as flow on the Lie group of diffeomorphisms, a modern approach was developed in \cite{HMR1998}, in which ideal fluid dynamics are derived from a Lie-group symmetry reduced Euler--Poincar\'e variational principle involving advected quantities. These advected quantities consist of fluid properties such as volume density, mass density, and heat that follow the Lagrangian trajectories of fluid parcels and contribute to the overall fluid dynamics. The presence of advected quantities introduces potential energies such that solutions are no longer free geodesic curves but rather forced geodesics in the manifold of diffeomorphisms. Nevertheless, solutions remain as smooth, time-parameterised curves on the manifold of diffeomorphisms.

One stochastic parameterisation approach for ideal fluid models replaces the smooth time-dependent curves of diffeomorphisms with Stratonovich stochastic processes. In this case, it is natural to consider the time-dependent stochastic flow of diffeomorphisms $g_t \in \operatorname{Diff}(\mcal{D})$ as a Stratonovich stochastic differential equation (SDE)
\begin{align}
    \dd g_t(X) := u_t(g_t(X)) \dd t + \sum_k \xi_k(g_t(X))\circ \dd W^k_t\,, \quad g_0(X) = X \,, \quad X \in \mcal{D}\,,\label{eqn:eulerian decomp g}
\end{align}
where $\xi_k$ is a collection of prescribed noise vector fields and $u_t$ is the drift velocity vector field whose evolution is to be determined. Example closures for the dynamics of $u_t$ include \cite{MR2004}, which obtains the evolution of $u_t$ by formally taking the time derivative of \eqref{eqn:eulerian decomp g} and using momentum balance arguments similar to the deterministic case while postulating an It\^o decomposition of forces, velocity fields, pressure fields, and Wiener processes. Another closure is the Location Uncertainty (LU) framework \cite{Memin2014, debussche2024variational}, which derives the dynamics of $u_t$ through a stochastic Reynolds transport theorem with additional forces required for energy conservation. For the scope of this paper, we consider the stochastic parameterisation framework known as Stochastic Advection by Lie Transport (SALT) \cite{Holm2015}, which derives the dynamics of $u_t$ through variational principles. To date, there are several derivations of SALT using stochastic variational principles, including Clebsch \cite{Holm2015, H2021} and Hamilton-Pontryagin \cite{Arnaudon2018, GH18a} approaches. These capabilities in structure-preserving stochastic parameterisations have led to numerous works on implementing SALT parameterisation in GFD models \cite{CCHOS18a, CCHOS18, HP2023, CLLLP2023, HP2025} and analysing SALT SPDEs \cite{CFH19, CL2023}. Additionally, these stochastic parameterisations have driven new developments in stochastic data assimilation methods for GFD models using particle filters \cite{CCHPS2021}.

The solution to the SDE \eqref{eqn:eulerian decomp g} can be expressed as the composition of two semi-martingale flows, $\Xi_t , \bar{g}_t \in \operatorname{Diff}(\mcal{D})$, such that $g_t$ is defined by
\begin{align}
    g_t(X) = (\Xi_t \circ \bar{g}_t)(X)\,, \quad X \in \mcal{D}\,. \label{eqn:lag decomp g}
\end{align}
Here, $\Xi_t$ is a prescribed flow of diffeomorphisms satisfying $\dd \Xi_t(X) = \sum_{k}\xi_k(\Xi_t(X))\circ \dd W^k_t$, and $\bar{g}_t$ is to be determined from the dynamics of $u_t$. The approach of decomposing the flow map is popular in studying multi-scale and multi-physics systems, such as wave-current interactions \cite{HHS2023} and plasma dynamics \cite{H2001}. In \cite{CGH2017}, the Eulerian decomposition \eqref{eqn:eulerian decomp g} of $g_t$ is motivated by the homogenisation of a fast-slow decomposition $g_t = \Xi_{t/\varepsilon} \circ \bar{g}^\varepsilon_t$, where $\varepsilon \in \mathbb{R}^+$ is a small parameter and $\Xi_{t/\varepsilon}$ is assumed to be a rapidly fluctuating map of the form $\Xi_{t/\varepsilon}(X) = X + \zeta_{t/\varepsilon}(X)$ for a smooth function $\zeta_{t/\varepsilon} : \mcal{D} \rightarrow \mcal{D}$ acting on the local coordinates $X$ of the domain. The present work extends the analysis in \cite{CGH2017, diamantakis2024levy} by considering $g_t$ as the homogenisation limit of the $\varepsilon$-parameterised flow $g_t^\varepsilon$ defined by the composition $g^\varepsilon_t = \Xi^\varepsilon_t \circ \bar{g}^\varepsilon_t$, where we do not assume a linear decomposition of the fluctuating map $\Xi^\varepsilon_t$. Additionally, all three flows have non-trivial dependencies on $\varepsilon$. As we shall explain in Section \ref{sec:homog}, the $\varepsilon$-dependence of $\bar{g}^\varepsilon_t$ is crucial for the consistency of models arising from stochastic variational closures of the dynamics of $g_t$.

Complementary to this approach, the application of homogenisation theory to derive stochastic fluid dynamics with transport noise is not limited to homogenising slow-fast equations of Lagrangian particle trajectories. For example, \cite{flandoli2022additive, FP2021} applies stochastic homogenisation to a slow-fast system of 2D Euler-like stochastic PDEs to derive the 2D Euler equation with transport noise. This is extended in \cite{debussche2024second,DH2023} for a general class of PDE systems, from which stochastic Navier-Stokes equations in 2D and 3D can be obtained.

\paragraph{Main contributions.} 
The two main contributions of this work are as follows. First, we use the theory of rough flows to establish sufficient conditions on the flows of diffeomorphisms $\Xi^\varepsilon$ and $\bar{g}^\varepsilon$ to rigorously obtain a stochastic flow $g$ in the $\varepsilon \rightarrow 0$ limit of the composition $g^\varepsilon = \Xi^\varepsilon \circ \bar{g}^\varepsilon$. The precise convergence result is stated in Theorem \ref{thm:convergence}. Second, in Section \ref{sec:vp}, we derive stochastic closure dynamics for the drift velocity $u_t$ in \eqref{eqn:eulerian decomp g} through variational principles that exploit the geometric structure of the homogenised flow $g = \Xi \circ \bar{g}$ obtained in Section \ref{sec:homog}. These variational principles do not require explicit stochastic constraints and directly reveal the connections between stochastic and random-coefficient Euler--Poincar\'e equations based on flow map representations. This work opens a new avenue for analysing stochastic Euler--Poincar\'e equations, as the analysis of stochastic PDEs based on their flow maps is well established among analysts.

\paragraph{Outline of the paper}

We summarise the main content and results in the following sections. The homogenisation contribution in Section \ref{sec:homog} and the variational closures contribution in Section \ref{sec:vp} may be read independently according to the reader's particular interests.

\begin{itemize}
    \item In Section \ref{sec:homog}, we construct a stochastic flow of diffeomorphisms $\Xi$ as the limit of a fast, chaotic $\varepsilon$-dependent flow $\Xi^\varepsilon$ using deterministic homogenisation. This is achieved by assuming an iterated Weak Invariance Principle (WIP) for the chaotic dynamics and utilising the continuity of the rough flow associated with $\Xi^\varepsilon$. For the mean flow $\bar{g}^\varepsilon$, we assume it is $\varepsilon$-dependent and converges to a stochastic flow $\bar{g}$. When the fast flow $\Xi^\varepsilon$ is composed with the mean flow $\bar{g}^\varepsilon$, the composite flow $g^\varepsilon = \Xi^\varepsilon \circ \bar{g}^\varepsilon$ describes a fluid with $\mcal{O}(\varepsilon^{-1})$ scale separation in the velocity fields defined by the flows $\Xi^\varepsilon$ and $\bar{g}^\varepsilon$. By proving the continuous dependence of $g^\varepsilon$ on the pathwise noise data, we establish convergence of the composite flow $g^\varepsilon$ to a stochastic flow of diffeomorphisms $g = \Xi \circ \bar{g}$. This ensures that the stochastic vector field associated with the flow $g$, $dg_tg_t^{-1} = \Xi_{t *}\bar{u}_t \, \dd t + \xi_k \circ \dd W^k_t + \frac{1}{2}\Gamma^{kl}\qv{\xi_k}{\xi_l} \, \dt$, operates on a single timescale. This coincides with the stochastic Lagrangian trajectories of fluid particles \eqref{eqn:eulerian decomp g} upon identifying the drift coefficient $u_t = \Xi_{t *}\bar{u}_t$ with an additional drift contribution $\frac{1}{2}\Gamma^{kl}\qv{\xi_k}{\xi_l}$ induced by the noise properties.
    
    \item In Section \ref{sec:vp}, we derive closure dynamics for the mean vector field $\bar{u}_t$ associated with the mean flow $\bar{g}$. First, we construct variations of the stochastic flow of diffeomorphisms $g = \Xi \circ \bar{g}$ and fluid advected quantities $a_t = a_0 g_t^{-1}$ that respect the flow composition structure. Starting from a Lagrangian that depends on the non-singular drift velocity $u_t = \Xi_{t*} \bar{u}_t$ and using these variations, we derive a stochastic closure for the dynamics of $u_t$ that coincides with the stochastic Euler--Poincar\'e equations obtained through the SALT approach. Additionally, we show that these stochastic Euler--Poincar\'e equations are equivalent to random-coefficient Euler--Poincar\'e equations for the dynamics of $\bar{u}_t$ via a flow map transformation arising from a time-dependent Lagrangian that depends on $\bar{u}_t$. The random dynamics of $\bar{u}_t$ thus justify the initial $\varepsilon$-dependence assumption of $\bar{g}^\varepsilon$ made in Section \ref{sec:homog}. When the flow of diffeomorphisms $\Xi$ is an isometry at all times, the resulting stochastic equations preserve the deterministic energy. We further study possible deterministic closures for $\bar{u}_t$ by proposing averaged models in Section \ref{sec:det vp}, which may be motivated by relaxing the convergence assumptions outlined in Section \ref{sec:homog}.
    
    \item In Section \ref{sec:examples}, we consider the illustrative example of the incompressible Euler equations to demonstrate the results of Section \ref{sec:vp}. We show the equivalence between the stochastic incompressible Euler equations for $u_t$ and the random-coefficient counterpart for $\bar{u}_t$. In this case, the mean dynamics can be interpreted as introducing stochastic time dependence to the Riemannian metric, which can then be averaged to produce a deterministic PDE model. The relative equilibrium solutions of a stochastic point vortex model studied in \cite{diamantakis2024levy} can be viewed as an application of the mean formulation proposed here.

    \item In Section \ref{prevwork}, we compare the present work with previous studies on homogenisation of Lagrangian particle flow maps. We clarify the differences in modelling approaches between the current work and \cite{CGH2017}, address a gap in the proof of the homogenisation limit in \cite{CGH2017} by reformulating the analysis using the construction presented in Section \ref{sec:homog}, and discuss interpretations of these modelling approaches in the context of stochastic parameterisation schemes.
    
    \item Section \ref{sec:conclusion} contains concluding remarks and future work.

    \item Appendix \ref{app} contains a review of rough path theory used for the homogenisation analysis in Section \ref{sec:homog}. We have also included a finite-dimensional example of rigid body rotation to demonstrate the correspondence between random-coefficient and stochastic Euler--Poincar\'e equations obtained in Section \ref{sec:vp} for a left-invariant system defined on the Lie algebra $\mathfrak{so}(3)$.
\end{itemize}

\section{Deterministic homogenisation of flow maps}\label{sec:homog}

In this section, we derive the stochastic Lagrangian flow map of SALT \eqref{eq:SALTansatz} as the homogenised limit of a parameterised flow map $g^\varepsilon$ that decomposes into rapidly fluctuating and slow components. Specifically, we construct $g^\varepsilon$ as a composition $g^\varepsilon = \Xi^\varepsilon \circ \overline{g}^\varepsilon$, where $\Xi^\varepsilon$ represents the rapidly fluctuating component and $\overline{g}^\varepsilon$ represents the slow component, with two distinct time scales distinguished by a parameter $\varepsilon$. We then prove that in the limit $\varepsilon \rightarrow 0$, corresponding to infinitely large scale separation, this parameterised composite flow converges to the stochastic flow of diffeomorphisms satisfying the ansatz \eqref{eq:SALTansatz}.

Let $(\clD, \bg)$ denote a smooth compact $d$-dimensional boundaryless Riemannian manifold with metric $\mathbf{g}$. Let $T>0$ and $(\Omega,\mathcal{F},\bbP)$ denote a probability space supporting an $K$-dimensional Brownian motion $W=(W^1,\ldots, W^K)$ with identity covariance.  Let $\bbF$ denote the filtration generated by $W$. Let $n\in \bbN$, $\fkX_{C^{n}}(\clD)$ denote the space of $n$-times continuously differentiable vector fields on $\clD$ and $\textnormal{Diff}_{C^n}(\clD)$ denote the space of $C^n$-diffeomorphisms (see, Definition \ref{def:diffeo}). 

The base assumption of the Stochastic Advection by Lie Transport (SALT) modelling approach \cite{Holm2015} is that the Lagrange-to-Euler map for a fluid is governed by an $\bbF$-adapted stochastic flow of diffeomorphisms $g: \Omega \times [0,T]\rightarrow \textnormal{Diff}_{C^n}(\clD)$ such that for each $X\in \clD$, $g(X):\Omega\times [0,T]\rightarrow \mathcal{D}$ is the unique strong solution of the SDE
\begin{align}\label{eq:SALTansatz}
    \dd g_t(X) = u_t(g_t(X)) \dd t + \sum\limits_{k=1}^K \xi_k(g_t(X))\circ \dd W^k_t \,, \quad g_0(X)=X\,,
\end{align}
where  $u: \Omega\times [0,T]\rightarrow \fkX_{C^n}(\clD)$ is an $\bbF$-progressively measurable vector field, $\xi \in \fkX_{C^{n+2}}(\clD)^K$ is a collection of $K$ vector fields, and the stochastic differential is understood in the Stratonovich sense. Indeed, under the aforementioned conditions on the datum $(u,\xi)$, it is well-known that there exists an $\bbF$-adapted 
$C^n$ stochastic flow of diffeomorphisms satisfying \eqref{eq:SALTansatz} \cite[Theorem 3.4]{kunita1990stochastic, kunita1996stochastic}. In particular, strong and pathwise uniqueness hold for the SDE \eqref{eq:SALTansatz} for each $X\in \clD$. The goal of this section is to show that this Lagrange-to-Euler model arises as a (deterministic) homogenised limit of a multi-scale factorisation of a parameterised Lagrange-to-Euler map. 

Towards, this end, let $\clM$ denote a smooth compact Riemannian manifold isometrically embedded in $\bbR^K$ and $h\in \mathfrak{X}_{C^2}(\clM)$ denote a $C^2$-vector-field on $\clM$. Let $\phi: \bbR\rightarrow \operatorname{Diff}_{C^{2}}(\mcal{M})$ denote the solution flow of the ODE $\dot{\lambda}=h(\lambda)$. Assume that $\phi$ has a closed attracting set $\Omega\subset \clM$, and there exists a $\phi_t$-invariant ergodic SRB measure $\bbP\in \mathcal{P}(\clM)$ supported on $\Omega$ such that the following centering condition holds
$$\int_{\Omega} \lambda \bbP(\dd \lambda)=0\,,$$
where the integral is understood in the embedding space $\bbR^K$.\footnote{One can avoid embedding the manifold in $\mathbb{R}^K$ using observables, however, we avoid this for simplicity.} Let $\varepsilon\in (0,1)$ denote a parameter that will dictate the scale at which certain quantities fluctuate in time. Let $\bar{u}^{\varepsilon}:\Omega\times [0, T]\rightarrow \fkX_{C^n}(\clD)$ denote a measurable map and $\sigma \in \fkX_{C^{n+2}}(\clD)^K$. We postulate a parameterised Lagrange-to-Euler map $g^{\varepsilon}: \Omega\times [0,T]\rightarrow  \operatorname{Diff}_{C^n}(\mcal{D})$ that factorises as
\begin{align}\label{eq:CoMdef}
g^\varepsilon_t(\omega)=\Xi^\varepsilon_{t}(\omega)\circ \overline{g}^\varepsilon_t(\omega)\,,
\end{align}
where $\Xi^{\varepsilon}, \bar{g}^{\varepsilon}: \Omega\times [0, T] \rightarrow \operatorname{Diff}_{C^n}(\mcal{D})$ are  random flows of (at least) $C^n$-diffeomorphisms satisfying
\begin{align}
\dot{\Xi}^{\varepsilon}_t(\omega,\bar{X})&=\varepsilon^{-1}\sum\limits_{k=1}^K\sigma_k(\Xi_t^{\varepsilon}(\omega,\bar{X}))\lambda_t^{\varepsilon,k}(\omega)\,, \quad 
\Xi^{\varepsilon}_0(\bar{X})=\bar{X}\in \clD\,,\label{eq:Xi_eps}\\
\dot{\bar{g}}_t^{\varepsilon}(\omega, X) &= \bar{u}_t^{\varepsilon}(\omega,\bar{g}_t^{\varepsilon}(\omega, X)), \quad \bar{g}_0^{\varepsilon}(X)=X\in \clD\,,\label{eq:bar_g_eps}\\
\dot{\lambda}_t^{\varepsilon}(\omega) &= \varepsilon^{-2} h(\lambda^\varepsilon_t(\omega))\,, \quad \lambda_0=\omega \in \Omega \,,\label{eq:lambda_eps}
\end{align}
Henceforth, we will drop the $\omega$-dependence to simplify notation. 
 
For a given label $\bar{X} \in \clD$, the coupled system $(\Xi^{\varepsilon}(\bar{X}),\lambda^{\varepsilon})$ forms a fast-slow skew-product system in which $\Xi^{\varepsilon}(\bar{X})$ is the ``slow'' variable and $\lambda^{\varepsilon}$ is the ``fast'' variable \cite{melbourne2011note}. The flow $\bar{g}^{\varepsilon}$ is generated from a vector field $\bar{u}^{\varepsilon}$, which itself depends on $\varepsilon$ and $\omega$. The trajectories $\bar{g}^{\varepsilon}(X)$ are the slowest of the system $(\Xi^{\varepsilon}(\bar{X}), \bar{g}^{\varepsilon}(X), \lambda^{\varepsilon})$. In this section, we do not specify the specific model for the vector field $\bar{u}^{\varepsilon}$, but in Section \ref{sec:vp}, we propose a variational principle for the limiting vector field $\bar{u}$ (see, also, Remark \ref{rem:why_gbar_is_random}).

Letting $X^\varepsilon_t=g^\varepsilon_t(X)$, $\bar{X}_t^{\varepsilon}=\bar{g}^{\varepsilon}_t(X)$ and applying the chain rule  (c.f., \cite{CGH2017, H2019}), we find
\begin{align}\label{eq:X_eps}
\dot{X}^\varepsilon_t = T{\Xi^\varepsilon_t}\dot{\bar{X}}^\varepsilon_t + \varepsilon^{-1}\sum\limits_{k=1}^K\sigma_k(\Xi_t^{\varepsilon}(\bar{X}^\varepsilon_t))\lambda_t^{\varepsilon,k}=(\Xi^\varepsilon_{t *}\bar{u}^\varepsilon_t)(X^\varepsilon_t) + \varepsilon^{-1}\sum\limits_{k=1}^K\sigma_k(X^\varepsilon_t) \lambda_t^{\varepsilon,k}\,,
\end{align}
such that
\begin{align}\label{eq:g_eps}
\dot{g}^{\varepsilon}_t(X)=\Xi^{\varepsilon}_{t*}\bar{u}^{\varepsilon}_t(g^{\varepsilon}(X)) + {\varepsilon^{-1}}\sum\limits_{k=1}^K \sigma_k(g^{\varepsilon}_t(X))\lambda_t^{\varepsilon}\,,  \quad g_0^{\varepsilon}(X)=X\in \clD\,.
\end{align}
Here, we use the notation $\phi^* v:=T\phi^{-1} \cdot v \circ \phi $ for the pullback of a vector field $v\in \mathfrak{X}_{C^n}(\clD) \simeq T_e \operatorname{Diff}_{C^n}(\clD)$ along the diffeomorphism  $\phi \in \operatorname{Diff}_{C^n}(\mcal{D})$ and  denote by $ \phi_*v=(\phi^{-1})^*v $ the push forward. 

The decompositions \eqref{eq:CoMdef} and \eqref{eq:X_eps} are similar to the factorisation assumption in the Generalized Lagrangian Mean (GLM) theory  \cite{andrews_mcintyre_1978, SOWARD_ROBERTS_2010, Holm2002, GV2018, gilbert2024geometric}. In these works, a decomposition of $X_t = \bar{X}_t + \zeta( \bar{X}_t)$ is made such that a perturbation map $\zeta : \clD \rightarrow \clD$ evaluated at $\bar{X}_t$ is assumed to be mean zero at all times. Normally, the validity of summing Lagrangian particles requires assuming a flat configuration space. Like the geometric GLM theory \cite{SOWARD_ROBERTS_2010, GV2018, gilbert2024geometric} that avoids a flat space decomposition of the decomposition map, the average of the vector field associated with the fast map $\Xi^{\varepsilon}$ over the invariant measure $\bbP$ is zero for $\varepsilon>0$:
$$
\int_{\Omega}\dot{\Xi}_t^{\varepsilon}\Xi_t^{\varepsilon;-1}(\omega, x)\bbP(d\omega)=\sum_{k=1}^{K}\sigma_k(x)\int_{\Omega}\lambda^{\varepsilon,k}_t(\omega)\bbP(d\omega)=0.
$$
However, unlike GLM, the map $\bar{g}^{\varepsilon}$ is assumed to be random.   From the modelling perspective, the main differentiating factor between $\Xi^{\varepsilon}$ and $\bar{g}^{\varepsilon}$ is the speed at which they fluctuate, which is evidenced in the parameterisation by the fact that $\dot{\Xi}_{t}^{\varepsilon} \Xi_{t}^{\varepsilon; -1}$ is of order $\mc{O}(\varepsilon^{-1})$ and  $\dot{\bar{g}}_{t}^{\varepsilon} \bar{g}_{t}^{\varepsilon; -1}$  is of order $\mc{O}(1)$. 

In the next subsection, under the assumption that the dynamical system induced by $\dot{\lambda}=h(\lambda)$ possesses a weak invariance principle (see Assumption \ref{asm:weak_invariance} (ii))  and $\bar{u}^{\varepsilon}$ has additional structure such that $\bar{u}^{\varepsilon}\rightarrow_\bbP \bar{u}$ in $C([0,T] ; \fkX_{C^{n}}(\bbT^d))$  as $\varepsilon\rightarrow 0$ (see Assumption \ref{asm:u_bar_eps}), we will show that $g^{\varepsilon}\rightarrow_{\bbP} g$ in $C([0,T] ; \operatorname{Diff}_{C^{n-2}}(\bbT^d))$ as $\varepsilon\rightarrow 0$ (see Theorem \ref{thm:convergence}), where $g$ satisfies \eqref{eq:SALTansatz}  with
$$
u_t:=\Xi_{t*}\bar{u}_t + \frac{1}{2}\sum_{k,l=1}^K\Gamma^{kl}[\xi_k, \xi_l] \quad \textnormal{and} \quad \xi :=  \sigma \sqrt{\Sigma} \,.
$$
Here, $\Sigma \in \operatorname{Sym}_+ (\bbR^K)$ is a symmetric positive-definite matrix and  $\Gamma\in \mathfrak{s} \mathfrak{o}(K)$ is an  anti-symmetric matrix that arises from the weak invariance principle.

\begin{remark}
It is not that surprising that \eqref{eq:SALTansatz} is the limit of a composition of maps. Indeed,  any stochastic flow of the form \eqref{eq:SALTansatz} can be factorised into a composition of a stochastic flow and a random flow. Indeed, let $\beta: \Omega\times [0,T]\rightarrow \fkX_{C^n}(\clD)$ denote a $\bbF$-progressively measurable vector field. Define a stochastic flow of diffeomorphisms $\Xi^{(\beta)}: \Omega \times [0,T]\rightarrow \textnormal{Diff}_{C^n}(\clD)$ by 
$$
\dd\Xi^{(\beta)}_t(\bar{X})= \beta_t(\Xi_t^{(\beta)}(\bar{X}))\dd t + \sum\limits_{k=1}^K \xi_k(\Xi_t^{(\beta)}(\bar{X}))\circ  \dd W^k_t\,,  \quad \Xi^{(\beta)}_0(\bar{X})=\bar{X}\in \clD\,,
$$
and a random flow of diffeomorphisms $\bar{g}:  \Omega \times [0,T]\rightarrow \textnormal{Diff}_{C^n}(\clD)$ by
$$
\dot{\bar{g}}_t(X) = \bar{u}_t(\bar{g}_t(X)), \quad \bar{u}_t=\Xi_t^* (u_t- \beta_t)\,, \quad \bar{g}_0(X)=X\in \clD\,.
$$
Applying the Stratonovich It\^o-Wentzel formula \cite{ventzel1965equations, de2020implications, len2023geometric}, we find  
\begin{align*}
\begin{split}
\dd (\Xi_{t}^{(\beta)}\circ \bar{g}_t(X))&= (T \Xi_{t}^{(\beta)}) \dot{\bar{g}}_t (X) + \dd \Xi_{t}^{(\beta)} \circ \bar{g}_t(X)  = (T \Xi_{t}^{(\beta)})  \bar{u}_t(\bar{g}_t(X)) + \dd \Xi_{t}^{(\beta)} \circ \bar{g}_t(X)\\
&= \Xi_{t *} \bar{u}_t(\Xi_{t}^{(\beta)}\circ \bar{g}_t(X)) \dd t + \beta_t(\Xi_{t}^{(\beta)}\circ \bar{g}_t(X)) \dd t+  \sum\limits_{k=1}^K \xi_k(\Xi_{t}^{(\beta)}\circ \bar{g}_t(X))\circ  \dd W^k_t\\
&=u_t(\Xi_{t}^{(\beta)}\circ \bar{g}_t(X)) \dd t + \sum\limits_{k=1}^K \xi_k(\Xi_{t}^{(\beta)}\circ \bar{g}_t(X))\circ \dd W^k_t\,,
\end{split}
\end{align*}
and hence $g=\Xi^{(\beta)}\circ \bar{g}$ by strong uniqueness of the SDE \eqref{eq:SALTansatz}. Since  $\beta$ is arbitrary, the factorisation of $g$ into a composition of a  stochastic and random flow map is not unique. However, there is a natural choice of $\beta$ that arises from a homogenisation limit perspective under the modelling assumption \eqref{eq:Xi_eps}, \eqref{eq:bar_g_eps} and \eqref{eq:lambda_eps}. 
\end{remark}

\subsection{Statement and proof of homogenised limit}\label{sec:proof_of_homog}
We will now assume $\mcal{D}= \bbT^d$ for simplicity. First notice that we can re-write \eqref{eq:X_eps} as 
\begin{align*}
\dot{X}^{\varepsilon}_t &= a^{\varepsilon}_t(X^{\varepsilon}_t)\dt  + \varepsilon^{-1}b(X_t^{\varepsilon}, \lambda_t^{\varepsilon}) \,,\\
\dot{\lambda}_t^{\varepsilon} &= \varepsilon^{-2} h(\lambda^\varepsilon_t)\,, \quad \lambda_0=\omega \in \Omega \,.
\end{align*}
where $$a^{\varepsilon}_t:=\Xi_{t*}^{\varepsilon}\bar{u}_t^{\varepsilon} \quad \textnormal{and} \quad b^{\varepsilon}(x,\lambda)=\sum\limits_{k=1}^K \sigma_k(x)\lambda^k\,.$$ 

In general, there does not exist a vector field $\gamma\in \fkX(\bbT^d)$ and a finite set of labels $X_i, Y_i\in \bbT^d$ such that for all $x\in \bbT^d$
$$
a_t^{\varepsilon}(x)=\gamma(\{\bar{g}_t^{\varepsilon}(X_i)\}_{i=1}^{N_1}, \{\Xi_t^{\varepsilon}(Y_i)\}_{i=1}^{N_2},\lambda_t^{\varepsilon}, x)\,.
$$
Thus, the lifted system does not immediately satisfy the assumptions of the finite-dimensional deterministic homogenisation literature \cite{melbourne2011note, melbourne2015correction, 10.1214/14-AOP979, kelly2017deterministic} required to pass to the limit as $\varepsilon\downarrow 0$. In \cite{CGH2017}, the authors consider a similar multi-scale  decomposition of the Lagrange-to-Euler map \eqref{eq:CoMdef}. The authors also wish to pass to the limit using deterministic homogenisation theory. However, the same aforementioned issue arises, nevertheless, they apply the standard deterministic homogenisation theory. Thus, in \cite{CGH2017}, a gap remains to pass to the limit. Our approach can be taken to close this gap, and we explain this in Section \ref{sec:otherapproach}. The key to overcoming this difficulty is to first establish $\Xi^{\varepsilon}\rightarrow_{\bbP}\Xi$ in $C^\alpha([0,T] ; \operatorname{Diff}_{C^{n}}(\bbT^d))$ as $\varepsilon\downarrow 0$ for $\alpha<1/2$.  It is worth mentioning that our case is slightly more complicated due to the $\varepsilon$-dependence on $\bar{u}^{\varepsilon}$ (see Remark \ref{rem:why_gbar_is_random} for the reason). 

As originally discovered in \cite{kelly2017deterministic}, and elaborated in \cite{10.1214/21-AIHP1202, 10.1214/21-AIHP1203, chevyrev2019multiscale}, it is convenient to recast the homogenisation problem as an application of the stability results of rough differential equations (RDEs) with respect to its driving path and a weak invariance principle. To do this, we define the canonically lifted rough path $\bB^{\varepsilon}=(B^{\varepsilon}, \bbB^{\varepsilon}) \in \clC_g^{\alpha}([0,T]; \mathbb{R}^K)$  (see Definition \ref{defRP}) by
$$B^{\varepsilon}_t= \varepsilon \int_0^{t\varepsilon^{-2}}\lambda_s^{\varepsilon}\dd s, \quad \bbB^{\varepsilon}_{st}= \int_s^t \delta B^{\varepsilon}_{su} \otimes \dd B^{\varepsilon}_u .$$  
Then \eqref{eq:Xi_eps} can be  recast as a flow of rough diffeomorphisms  (see Definition \ref{eq:rde}) given by
\begin{equation}
\dd \Xi^{\varepsilon}_t(\bar{X})=\sum\limits_{k=1}^K\sigma_k(\Xi_t^{\varepsilon}(\bar{X}))\rmd \bB^{\varepsilon;k}_t\,,  \quad \Xi_0^{\varepsilon}(\bar{X})=\bar{X}\in \bbT^d\,. \label{eq:RDE Xi}
\end{equation}
In light of Theorem \ref{thm:rough_flow_map}, which states the stability of solutions and flows of RDEs in the rough path topology, the convergence of $\Xi^{\varepsilon}$ will follow from the convergence in law of $\bB^{\varepsilon}$. We introduce the requisite weak invariance principle assumption. 
\begin{assumption}[Ergodicity and Weak Invariance Principle]\label{asm:weak_invariance}
\mbox{} 
Let $\clM$ denote a smooth compact Riemannian manifold embedded in $\bbR^K$ and $h\in \mathfrak{X}_{C^2}(\clM)$ denote a $C^2$-vector field. Let $\phi: \bbR\rightarrow \operatorname{Diff}_{C^{2}}(\mcal{M})$ denote the solution flow of the ODE $\dot{\lambda}=h(\lambda)$.
\begin{enumerate}[(i)]
\item (Ergodicity)  $\phi$ has a closed attracting set $\Omega\subset \clM$ and there exists a $\phi$-invariant ergodic SRB measure $\bbP\in \mathcal{P}(\clM)$ supported on $\Omega$ such that the following centering condition holds
$$\int_{\Omega} \lambda \bbP(\dd \lambda)=0\,,$$
where the integrand is understood to take values in the embedding space $\bbR^K$.
\item (Weak invariance principle) $\bB^{\varepsilon}\rightarrow_\bbP \bB^{\wt{\Gamma}}$ as $\varepsilon\rightarrow 0$ in $\clC_g^{\alpha}([0,T]; \mathbb{R}^K)$, where $\bB^{\wt{\Gamma}}=(B, \bbB^{\wt{\Gamma}})\in \clC_g^{\alpha}([0,T]; \mathbb{R}^K)$, $\alpha \in [\frac13,\frac12)$, is the lift of a $K$-dimensional Brownian motion supported on the probability space $(\Omega, \clF:=\clB(\clM)\cap \Omega, \bbP)$ with covariance $\Sigma \in \operatorname{Sym}_+ (\bbR^K)$ and 
$$
\bbB_{st}^{\wt{\Gamma}}= \int_s^t \delta B_{su} \otimes \circ \dd B_u + \wt{\Gamma} (t-s)\,,
$$
for an anti-symmetric matrix  $\wt{\Gamma} \in \mathfrak{s} \mathfrak{o}(K)$. Moreover, $$\Sigma=\lim_{\varepsilon\rightarrow 0}\bbE_{\bbP}[B^{\varepsilon}_1\otimes B^{\varepsilon}_1] \quad \textnormal{and} \quad \wt{\Gamma}=\lim_{\varepsilon\rightarrow 0}\bbE_{\bbP}[\bbB^{\varepsilon}_{01}]\,.$$ 
\end{enumerate}
\end{assumption}
\begin{remark}
A detailed discussion of the rough weak invariance principle can be found in Sections 2.3.3 and Theorem 4.4 (i) of \cite{chevyrev2019multiscale}. In particular, it can be shown that if:
\begin{enumerate}[(i)]
    \item (Assumption 4.1 in \cite{chevyrev2019multiscale})
 $(B^{\varepsilon}, \mathbb{B}^{\varepsilon})\rightarrow (B, \mathbb{B})$ as $\varepsilon\rightarrow 0$ in the sense of finite-dimensional distributions with respect to  $\mathbb{P}$, where $B$ is a $K$-dimensional Brownian motion and $\mathbb{B}_{0t}=\int_0^tW_s\otimes \circ \rmd B_s + \Gamma t$ for some deterministic $\tilde{\Gamma}\in \mathfrak{s} \mathfrak{o}(K)$\,,
\item (Assumption 2.2 in \cite{kelly2017deterministic}) there exists $p > 1$ and $kappa\in (0,1)$ such that for all $u,v\in C^{\kappa}(\mathcal{M})$ with zero mean under $\bbP$, the following estimates are satisfied with $K = K(u,v,p) > 0$ for all $s,t\in [0,T]$,
\[  \bbE \bigg[ \left( \int_s^t u \circ \phi_s \dd s \right)^{2p} \bigg]^\frac{1}{2p} \leq K |t-s|^{\frac12}, \quad \bbE \bigg[ \left( \int_s^t\int_s^r u \circ \phi_r v \circ \phi_l \dd l \dd r \right)^{p} \bigg]^\frac{1}{p} \leq K |t-s| \,\,, \]
\end{enumerate}
and Assumption \ref{asm:weak_invariance}(i) holds, then Assumption \ref{asm:weak_invariance}(ii) holds. 
See, also, \cite[Theorem 9.1]{10.1214/14-AOP979}, which is the first paper to recognize the application of rough paths to the problem of deterministic homogenisation. In \cite[Theorem 1.1]{kelly2017deterministic} the authors were able to relax the mixing conditions and product structure of the noise term to extend earlier work on deterministic homogenisation (e.g., \cite{melbourne2011note} and \cite{melbourne2015correction})
\end{remark}

The following theorem is then a direct consequence of Theorems \ref{thm:rough_flow_map} and \ref{thm:RDE2SDE} in the appendix (see, also,  Theorem 5.5 of \cite{10.1214/21-AIHP1203}.)
\begin{theorem}[Convergence of $\Xi^{\varepsilon}$]\label{thm:fast_flow_covergence}
Let $T>0$ and Assumption \ref{asm:weak_invariance} hold. Let $W=\Sigma^{-1/2}B$, which is an $K$-dimensional Brownian motion on $(\Omega, \clF, \bbP)$ with identity covariance. Let $\bbF$ denote the filtration generated by $W$. Let  $n\ge 2$ and assume $\sigma \in \mathfrak{X}_{C^{n+2}}(\bbT^d)^K$.  Let $\Gamma := \sqrt{\Sigma}^{-1} \wt{\Gamma} \sqrt{\Sigma}^{-T} \in \mathfrak{s} \mathfrak{o}(K)$ and $\xi := \sigma \sqrt{\Sigma}$. 
Then $\Xi^{\varepsilon}\rightarrow_\bbP  \Xi$ in $C^\alpha( [0,T] ;\operatorname{Diff}_{C^{n}}(\bbT^d))$ as $\varepsilon\rightarrow 0$ for any $\alpha<1/2$, where $\Xi: \Omega \times [0,T]\rightarrow \operatorname{Diff}_{C^{n}}(\bbT^d))$ is the $\bbF$-adapted stochastic flow of diffeomorphisms such that for each $\bar{X}\in \bbT^d$,  $\Xi(\bar{X})$ is the unique strong $\bbF$-adapted solution of the SDE
\begin{align}\label{eq:Xi_limit}
\dd {\Xi}_t(\bar{X})=\frac{1}{2}\sum \limits_{k,l=1}^K\Gamma^{kl}\qv{\xi_k}{\xi_l}(\Xi_t(\bar{X})) \dd t +\sum\limits_{k=1}^K\xi_k(\Xi_t(\bar{X}))\circ \dd W_t^k\,, \quad 
\Xi_0(\bar{X}) =\bar{X}\,.
\end{align}
\end{theorem}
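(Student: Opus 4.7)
The plan is to recast the initial-value problem \eqref{eq:Xi_eps} as a rough differential equation driven by the canonically lifted path $\bB^{\varepsilon}$, and then combine two ingredients in succession: the weak invariance principle, Assumption \ref{asm:weak_invariance} (ii), which controls the driving rough path in probability on $\clC_g^{\alpha}([0,T];\mathbb{R}^K)$, and the stability of the rough flow map, Theorem \ref{thm:rough_flow_map}, which transports this convergence to the level of the flow of diffeomorphisms. A concluding application of Theorem \ref{thm:RDE2SDE} identifies the limiting rough flow with the Stratonovich SDE \eqref{eq:Xi_limit} after normalising the Brownian motion to have identity covariance.

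The first step is the observation that the change of variable $s \mapsto s\varepsilon^{-2}$ together with $\dot \lambda^\varepsilon = \varepsilon^{-2} h(\lambda^\varepsilon)$ gives $\varepsilon^{-1}\int_0^t \lambda_s^{\varepsilon} \,\dd s = \varepsilon \int_0^{t\varepsilon^{-2}} \lambda_s^\varepsilon \,\dd s = B^{\varepsilon}_t$, so that \eqref{eq:Xi_eps} rewrites as the smooth RDE
\begin{equation*}
\dd \Xi^\varepsilon_t(\bar X) = \sum_{k=1}^K \sigma_k(\Xi^\varepsilon_t(\bar X)) \, \dd \bB^{\varepsilon, k}_t, \qquad \Xi_0^\varepsilon(\bar X) = \bar X,
\end{equation*}
driven by the canonical lift $\bB^\varepsilon = (B^\varepsilon, \bbB^\varepsilon)$. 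Since $\sigma \in \fkX_{C^{n+2}}(\bbT^d)^K$ with $n \ge 2$, the regularity hypothesis of Theorem \ref{thm:rough_flow_map} is satisfied. Combining this with Assumption \ref{asm:weak_invariance} (ii) yields $\Xi^{\varepsilon} \to_\bbP \wt\Xi$ in $C^\alpha([0,T]; \operatorname{Diff}_{C^n}(\bbT^d))$ for every $\alpha < 1/2$, where $\wt\Xi$ is the $\sigma$-rough flow driven by the enhanced Brownian rough path $\bB^{\wt\Gamma}$.

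The second step invokes Theorem \ref{thm:RDE2SDE} to recognise $\wt\Xi$ as the unique strong solution of a Stratonovich SDE. Since $\bB^{\wt\Gamma}$ differs from the canonical Brownian rough path only by the anti-symmetric area anomaly $\wt\Gamma$, the theorem gives
\begin{equation*}
\dd \wt\Xi_t(\bar X) = \tfrac{1}{2} \sum_{k,l=1}^K \wt\Gamma^{kl} \qv{\sigma_k}{\sigma_l}(\wt\Xi_t(\bar X)) \,\dd t + \sum_{k=1}^K \sigma_k(\wt\Xi_t(\bar X)) \circ \dd B^k_t,
\end{equation*}
where $B$ is a Brownian motion with covariance $\Sigma$. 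Writing $W = \Sigma^{-1/2} B$, which is an identity-covariance $\bbF$-Brownian motion, and $\xi = \sigma \sqrt{\Sigma}$, linearity of the Stratonovich integral turns the noise into $\sum_k \xi_k \circ \dd W^k$, while the bilinearity and antisymmetry of the Lie bracket together with the definition $\Gamma = \Sigma^{-1/2}\wt\Gamma\,\Sigma^{-T/2}$ recast the drift as $\tfrac{1}{2} \sum_{k,l} \Gamma^{kl} \qv{\xi_k}{\xi_l}$, producing exactly \eqref{eq:Xi_limit}. Strong uniqueness of \eqref{eq:Xi_limit} under the assumed smoothness of $\xi$ identifies $\wt\Xi$ with $\Xi$.

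The substantive work is entirely contained in the two appendix theorems. Theorem \ref{thm:rough_flow_map} requires careful control of derivatives of the rough flow up to order $n$ in the Banach space $\operatorname{Diff}_{C^n}(\bbT^d)$, which explains the two-derivative loss from $\sigma\in \fkX_{C^{n+2}}$ to $\Xi\in \operatorname{Diff}_{C^n}$, and Theorem \ref{thm:RDE2SDE} requires matching the Stratonovich calculus of the Brownian motion $B$ with the iterated-integral data of its enhanced lift, the area correction accounting for the drift $\tfrac12\wt\Gamma^{kl}\qv{\sigma_k}{\sigma_l}$. Once both are in hand, the present theorem reduces to the algebraic verification that the change of coordinates from $(\sigma,B)$ to $(\xi,W)$ is compatible with the area anomaly.
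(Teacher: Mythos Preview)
Your proof is correct and follows the same strategy as the paper: recast \eqref{eq:Xi_eps} as an RDE driven by $\bB^\varepsilon$, combine the rough-flow stability of Theorem \ref{thm:rough_flow_map} with the weak invariance principle via the continuous mapping theorem, and then identify the limit as the Stratonovich SDE \eqref{eq:Xi_limit} after the $(\sigma,B)\to(\xi,W)$ change of variables. The only cosmetic difference is that the paper separates the final identification into two steps---first invoking \cite[Theorem~2]{FRIZ20093236} to absorb the area anomaly $\wt\Gamma$ into the Lie-bracket drift (yielding an RDE driven by the \emph{unperturbed} Stratonovich lift), and only then applying Theorem \ref{thm:RDE2SDE}---whereas you fold both into a single appeal to Theorem \ref{thm:RDE2SDE}; since the latter, as stated in the appendix, concerns only the pure Stratonovich lift $\bW$, the paper's explicit intermediate citation is the more precise bookkeeping.
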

\begin{proof} 
By Theorem \ref{thm:rough_flow_map}, there exists a continuous (in the driving path) flow map $$\Phi \in \operatorname{Lip}_{loc}\left(\clC_g^{\alpha}([0,T]; \mathbb{R}^K);C^\alpha( [0,T] ;\operatorname{Diff}_{C^{n}}(\bbT^d))\right)$$ such that for all $\bZ\in \clC_g^{\alpha}([0,T]; \mathbb{R}^K)$ and $X\in \bbT^d$, $\Phi(X,\bZ)$ is the unique solution of the RDE
$$
\rmd \Phi_t(\bar{X}, \bZ) = \sum\limits_{k=1}^K\sigma_k(\Phi_t(\bar{X}, \bZ))\rmd \bZ_t\,, \quad \Phi_0(\bar{X}, \bZ)=\bar{X} \in \bbT^d\, .
$$
Assumption \ref{asm:weak_invariance}(ii) asserts that $\bB^\varepsilon \rightarrow_\bbP \bB^{\wt{\Gamma}}$ as $\varepsilon\rightarrow 0$, and hence $\Phi(\cdot, \bB^{\varepsilon}) \rightarrow_{\bbP} \Phi(\cdot, \bB^{\wt{\Gamma}})$ in $C^\alpha( [0,T] ;\operatorname{Diff}_{C^{n}}(\bbT^d))$ as $\varepsilon\rightarrow 0$ by the continuous mapping theorem \cite[Theorem 2.7]{billingsley1968convergence}.

Let $\bar{X}\in \bbT^d$ be an arbitrarily coordinate.  Since solutions of RDEs agree with solutions of ODEs if the driving path $\bZ$ is smooth, $\Xi^\varepsilon_t(\bar{X}) = \Phi_t(\bar{X}, \bB^\varepsilon)$ for all $t\in [0,T]$. It remains remains to verify that $\eta(\bar{X})=\Phi(\bar{X}, \bB^{\wt{\Gamma}})$ is the solution of the SDE \eqref{eq:Xi_limit}. Applying \cite[Theorem 2]{FRIZ20093236}, we find that $\eta(\bar{X})$ is also the solution of the RDE
\begin{equation*} 
\dd \eta_t(\bar{X})= \frac{1}{2}\sum_{k,l=1}^K\wt{\Gamma}^{kl} [\sigma_k, \sigma_l](\eta_t( \bar{X} ) ) \dd t +\sum\limits_{k=1}^K\sigma_k(\eta_t(\bar{X}) \rmd \bB^{k}_t\,,  \quad \Xi_0(\bar{X} )= \bar{X} \in \bbT^d\,,
\end{equation*}
where $\bB=(B, \bbB)$ is the Stratonovich lift of $B$. Using the identity
\[ \wt{\Gamma}^{kl} [\sigma_k, \sigma_l] = (\sqrt{\Sigma}^{-1})^{\alpha l} \wt{\Gamma}^{kl}(\sqrt{\Sigma}^{-1})^{\beta l} [\xi_\alpha, \xi_\beta] =: \Gamma^{\alpha \beta} [\xi_\alpha, \xi_\beta] \]
and the standard relation between integrals against correlated Brownian motion and uncorrelated Brownian motion, we find 
\begin{equation*} 
\dd \eta_t(\bar{X})= \frac{1}{2}\sum_{k,l=1}^K\Gamma^{kl} [\xi_k, \xi_l](\eta_t( \bar{X} ) ) \dd t +\sum\limits_{k=1}^K\xi_k(\eta_t(\bar{X}) )\rmd \bW^{k}_t\,,  \quad \Xi_0(\bar{X} )= \bar{X} \in \bbT^d\,,
\end{equation*}
where $\bW=(W, \bbW)$ is the Stratonovich lift of $W$. We then complete the proof by invoking Theorem \ref{thm:RDE2SDE}.
\end{proof}

So far, we have made no structural assumption on $\bar{g}^{\varepsilon}$, or equivalently, on $\bar{u}^{\varepsilon} = \dot{\bar{g}}^\varepsilon \bar{g}^{\varepsilon; -1}$. Recall that, in general, $\bar{u}^{\varepsilon}$ depends on the initial condition of the fast dynamics random $\omega\in \Omega$. In this section, we make a general assumption that allows us to pass to the limit.

\begin{assumption}[Mean vector field assumption]\label{asm:u_bar_eps}
For a given $n\in\bbN$, there exists a \[V \in C\left(\clC_g^\alpha ([0,T]; \bbR^K ) ; C([0,T]; \fkX_{C^n}(\bbT^d)\right)\] such that $ \bar{u}^\varepsilon_t=V_t(\cdot, \bB^\varepsilon)$.
\end{assumption}

 Assumption \ref{asm:u_bar_eps} holds if $\bar{u}=\bar{u}^{\varepsilon}$ is independent of $\varepsilon$ and $\omega$ and $\bar{u}\in C([0,T]; \fkX_{C^n}(\bbT^d))$. In Section \ref{sec:VP1}, we propose a variational principle that serves as a closure model for
 $\Xi_{*}\bar{u}$ in \eqref{eq:g_homog_limit}
 that in turn closes the dynamics of $\bar{u}$. 
 The same variational principle can also be applied to $g^{\varepsilon}$ satisfying \eqref{eq:g_eps} for every $\varepsilon$. In the setting of an ideal perfect fluid (Section \ref{sec:Euler}), using \cite[Theorem 3.7]{CRISAN2022109632}, it is possible to show that  $\bar{u}^{\varepsilon}$ satisfies Assumption \ref{asm:u_bar_eps}, which establishes consistency of the assumption and the closure model. See, also, the discussion in Remark \ref{rem:why_gbar_is_random}.

Under Assumption \ref{asm:u_bar_eps}, using standard ODE flow stability estimates, one can show that there exists 
\[ \Psi\in C\left(\clC_g^\alpha ([0,T]; \bbR^K ) ; C^1([0,T]; \operatorname{Diff}_{C^n}(\bbT^d)\right)\] such that $ \bar{g}^\varepsilon_t(X)=\Psi_t(X, \bB^\varepsilon)$ for all $t\in [0,T]$ and $X\in \bbT^d$. Define $\bar{g}=\Psi(\cdot, \bB^{\wt{\Gamma}})\in C^1([0,T]; \operatorname{Diff}_{C^n}(\bbT^d))$. From this observation and Lemma \ref{lem:continuity_comp_diffo}, we deduce by the continuous mapping theorem (see \cite[Theorem 2.7]{billingsley1968convergence}) that $\bar{g}^{\varepsilon} \rightarrow_\bbP \, \bar{g}$ as $\varepsilon\rightarrow 0$ in $C^1( [0,T] ;\operatorname{Diff}_{C^n}(\bbT^d))$.

The following is the main theorem of the section, which shows that $g^{\varepsilon}$ converges to the flow of the ansatz \eqref{eq:SALTansatz} with $u_t=\Xi_{t*}\bar{u}_t+\frac12 \sum\limits_{k,l=1}^K{\Gamma}^{kl}[\xi_{k},\xi_l]$.

\begin{theorem}[Convergence of composition of maps]\label{thm:convergence}
Let Assumption \ref{asm:u_bar_eps} and the assumptions of Theorem \ref{thm:fast_flow_covergence} hold with $n \geq 2$. Then  $g^{\varepsilon}\rightarrow_\bbP g$  in $C([0,T] ; \operatorname{Diff}_{C^{n-2}}(\bbT^d))$ as $\varepsilon\rightarrow 0$, where $g: \Omega\times [0,T]\rightarrow \operatorname{Diff}_{C^{n}}(\bbT^d)$ is the $\bbF$-adapted stochastic flow of diffeomorphisms such that  for each $X\in \bbT^d$, $g(X)$ is the unique strong solution of the SDE 
\begin{align}\label{eq:g_homog_limit}
    \dd g_t(X)=  \left(\Xi_{t*}\bar{u}_t(g_t(X))+\frac12 \sum\limits_{k,l=1}^K{\Gamma}^{kl}[\xi_{k},\xi_l](g_t(X))\right)\dd t  + \sum\limits_{k=1}^K\xi_k(g_t(X))\circ \dd W_t^k\,,
\end{align}
where  $\bar{u}=V_t(\cdot, \bB)$. Moreover, $g = \Xi\circ \bar{g}$.
\end{theorem}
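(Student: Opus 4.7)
The plan splits naturally into two steps: first establish the limit of the composition $\Xi^\varepsilon \circ \ob{g}^\varepsilon$ in the stated topology using the continuous mapping theorem, and then identify this limit as the strong solution of \eqref{eq:g_homog_limit} by the Stratonovich It\^o--Wentzel formula and strong uniqueness.

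\emph{Step 1 (convergence of the composition).} Theorem \ref{thm:fast_flow_covergence} gives $\Xi^\varepsilon \to_\bbP \Xi$ in $C^\alpha([0,T]; \operatorname{Diff}_{C^n}(\bbT^d))$, and the discussion immediately preceding the theorem gives $\ob{g}^\varepsilon \to_\bbP \ob{g}$ in $C^1([0,T]; \operatorname{Diff}_{C^n}(\bbT^d))$. Since both convergences are deduced from the single rough-path convergence $\bB^\varepsilon \to_\bbP \bB$ via continuous maps, the joint convergence $(\Xi^\varepsilon, \ob{g}^\varepsilon) \to_\bbP (\Xi, \ob{g})$ holds. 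I then invoke the continuity of the composition map furnished by Lemma \ref{lem:continuity_comp_diffo}, which gives continuity of $(\phi,\psi)\mapsto \phi\circ\psi$ from $\operatorname{Diff}_{C^n}\times\operatorname{Diff}_{C^n}$ into $\operatorname{Diff}_{C^{n-2}}$, accounting for the loss of two derivatives. A second application of the continuous mapping theorem yields $g^\varepsilon=\Xi^\varepsilon\circ\ob{g}^\varepsilon \to_\bbP \Xi\circ\ob{g}$ in $C([0,T]; \operatorname{Diff}_{C^{n-2}}(\bbT^d))$.

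\emph{Step 2 (identification of the limit SDE).} Set $g_t := \Xi_t \circ \ob{g}_t$. The calculation mirrors the one already carried out in the remark at the end of the excerpt, specialised to the drift $\beta_t = \tfrac12 \sum_{k,l} \Gamma^{kl}[\xi_k, \xi_l]$ coming from the limit of $\Xi^\varepsilon$. Apply the Stratonovich It\^o--Wentzel formula to the composition of the stochastic flow $\Xi$ with the bounded-variation random flow $\ob{g}$; since $\ob{g}$ has no martingale part, the Stratonovich cross-variations between $\dd\Xi$ and $\dd\ob{g}$ vanish and I obtain
\begin{align*}
\dd g_t(X) &= (T\Xi_t)\, \dot{\ob{g}}_t(X)\, \dd t + (\dd \Xi_t)\circ \ob{g}_t(X) \\
&= (\Xi_{t*}\ob{u}_t)(g_t(X))\, \dd t + \tfrac12 \sum_{k,l=1}^K \Gamma^{kl}[\xi_k,\xi_l](g_t(X))\, \dd t + \sum_{k=1}^K \xi_k(g_t(X))\circ \dd W_t^k,
\end{align*}
which is precisely \eqref{eq:g_homog_limit}. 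Strong uniqueness for this Kunita-type SDE under the given regularity of $\xi$ and $\ob{u}$ then identifies $\Xi\circ\ob{g}$ as the unique strong solution $g$.

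\emph{Main obstacle.} The subtlest point is neither the rough-path convergence of $\Xi^\varepsilon$ (handled by Theorem \ref{thm:fast_flow_covergence}) nor the It\^o--Wentzel computation, but rather the need to extract \emph{joint} convergence of $(\Xi^\varepsilon, \ob{g}^\varepsilon)$ as continuous functionals of the same driving rough path before composing. This is precisely why Assumption \ref{asm:u_bar_eps} is formulated as a factorisation $\ob{u}^\varepsilon_t = V_t(\cdot, \bB^\varepsilon)$ through a continuous map on rough paths: without it, $\ob{u}^\varepsilon$ and the lift of the fast dynamics could fail to be measurable with respect to a common driving object and the continuous mapping theorem would not apply. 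A secondary, technical point is verifying that the regularity loss in composition is compatible with the hypotheses of the It\^o--Wentzel formula, which requires at least $C^2$ regularity of the flows; this is ensured by the assumption $n\geq 2$ and the $C^{n+2}$-regularity of the noise vector fields $\xi$.
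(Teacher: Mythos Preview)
Your proposal is correct and follows essentially the same approach as the paper: both arguments factor the convergence through the single rough-path convergence $\bB^\varepsilon\to_\bbP\bB^{\wt\Gamma}$, use Lemma \ref{lem:continuity_comp_diffo} for the composition, and apply the continuous mapping theorem, then identify the limit SDE via the Stratonovich It\^o--Wentzel formula. The only cosmetic difference is that the paper packages $\Phi\times\Psi$ and $\fkC$ into one continuous map $\Upsilon$ and applies the continuous mapping theorem once, whereas you phrase it as first obtaining joint convergence of the pair and then composing; your ``main obstacle'' paragraph shows you understand exactly why this packaging matters.
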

\begin{proof} 
Noting that $\Xi_{t*}\bar{u}: \Omega \times [0,T]\rightarrow \fkX_{C^{n-1}}(\bbT^d)$, by \cite[Theorem 3.4]{kunita1990stochastic}, there exists an $\bbF$-adapted stochastic flow of diffeomorphisms $g: \Omega\times [0,T]\rightarrow \operatorname{Diff}_{C^{n-2}}(\bbT^d)$ such that for each $X\in \bbT^d$, $g(X)$ is the unique strong solution of the SDE \eqref{eq:g_homog_limit}.

As in the proof of Theorem \ref{thm:fast_flow_covergence}, there exists \[\Phi\in C\left(\clC_g^{\alpha}([0,T]; \mathbb{R}^K); C^\alpha( [0,T] ;\operatorname{Diff}_{C^{n}}(\bbT^d))\right)\] such that 
$\Xi^{\varepsilon}=\Phi(\cdot, \bB^{\varepsilon})\,.$ Likewise from Assumption \ref{asm:u_bar_eps}, we have a map 
\[ \Psi\in C\left(\clC_g^\alpha ([0,T]; \bbR^K ) ; C^1([0,T]; \operatorname{Diff}_{C^n}(\bbT^d)\right) \]
such that $ \bar{g}^\varepsilon(X)=\Psi(X, \bB^\varepsilon)$ for all $t\in [0,T]$ and $X\in \bbT^d$. 

By considering continuity of the inclusion map, one can define new continuous maps with co-domains $C([0,T]; \operatorname{Diff}_{C^n}(\bbT^d))$ and $C([0,T]; \operatorname{Diff}_{C^{n-2}}(\bbT^d))$, which we abusively also denote by $\Phi, \Psi$ respectively. 

Owing to Lemma \ref{lem:continuity_comp_diffo}, the composition map
\begin{equation*}
\begin{aligned}
\fkC : C([0,T]; \operatorname{Diff}_{C^{n}}(\bbT^d)) \times C([0,T]; \operatorname{Diff}_{C^{n-2}}(\bbT^d)) &\rightarrow C([0,T]; \operatorname{Diff}_{C^{n-2}}(\bbT^d))\\
(g, h) &\mapsto g \circ h
\end{aligned}
\end{equation*}
is continuous and the map
\begin{align*}
\Phi\times \Psi: \clC_g^{\alpha}([0,T]; \mathbb{R}^K) &\rightarrow C( [0,T] ;\operatorname{Diff}_{C^{n}}(\bbT^d)) \times C( [0,T] ;\operatorname{Diff}_{C^{n-2}}(\bbT^d))\\
\bZ &\mapsto (\Phi(\bZ), \Psi(\bZ))
\end{align*}
is also continuous.  Together, this implies the continuity of the map $$\Upsilon= \fkC \circ (\Phi\times \Psi): \clC_g^{\alpha}([0,T]; \mathbb{R}^K)\rightarrow C( [0,T] ;\operatorname{Diff}_{C^{n-2}}(\bbT^d))$$ defined by  
$$
\Upsilon_t(X, \bZ)=\Phi_t(\Psi_t(X, \bZ), \bZ)\,, \quad (t, \bZ, X)\in [0,T]\times  \clC_g^{\alpha}([0,T]; \mathbb{R}^K) \times \bbT^d\,.
$$
Since $g^{\varepsilon}_t(X)=\Upsilon_t(X, \bB^{\varepsilon})$, combining  Assumption \ref{asm:weak_invariance} with the continuous mapping theorem for metric space valued random variables \cite[Theorem 2.7]{billingsley1968convergence}, we find that $g^\varepsilon \rightarrow_{\bbP} g$ in $C( [0,T] ;\operatorname{Diff}_{C^{n-2}}(\bbT^d))$, where 
$$
g_t(X)=\Upsilon_t(X, \bB^{\wt{\Gamma}}\,)=\Phi_t(\Psi_t(X, \bB^{\wt{\Gamma}}\,), \bB^{\wt{\Gamma}}\,)\,.
$$
In the proof of Theorem \ref{thm:fast_flow_covergence}, we showed that $\Xi=\Psi_t(X, \bB^{\wt{\Gamma}}\,)$. Thus, $g= \Xi \circ \bar{g}$. We deduce that $\mathbb{P}$-a.s. for all $t\in [0,T]$,  $g_t\in \operatorname{Diff}_{C^{n}}(\bbT^d)$ since $g_t= \Xi_t \circ \bar{g}_t$ it is the composition of two maps in  $\operatorname{Diff}_{C^{n}}(\bbT^d)$.
Applying the Stratonovich It\^o-Wentzel formula \cite{ventzel1965equations}, we find \eqref{eq:g_homog_limit}.  

\end{proof}

\begin{remark}[General configuration spaces] In more general domains $\mcal{D}$, the convergence of  $g^{\varepsilon}\rightarrow g$ and regularity of $\xi$ can be related to maps $U \subset \bbR^d \rightarrow V \subset \bbR^d$ through charts \cite{banyaga2013structure}, which can also be used to define an equivalent topology to \cite[Exercise 11.17]{friz2010multidimensional} on the rough flows. The full theory of rough differential equations has also been extended to manifolds, see \cite{https://doi.org/10.1112/jlms.12585}.
\end{remark}

\begin{remark}[Stronger notions of convergence] The convergence in the space $C([0,T] ; \operatorname{Diff}_{C^{n-2}}(\bbT^d))$ can, with further work, be improved to $C^\alpha([0,T] ; \operatorname{Diff}_{C^{n-2}}(\bbT^d))$. It is important to define this space carefully to be a separable metric space, which is crucial in order to apply the continuous mapping theorem. Note, that even the space $C^\alpha([0,1], \bbR)$ fails to be separable, but a distinguished subspace $C^{0,\alpha}([0,1], \bbR)$, definable as the closure of $C^\infty([0,1]; \bbR)$ under the $\alpha$-H\"older semi-norm, is separable. It is established in \cite[Section 13.3.1]{friz2010multidimensional} that almost sure realizations of Brownian motion belong to this space, and analogues of Theorem \ref{thm:rough_flow_map} and Lemma \ref{lem:continuity_comp_diffo} may be proven within the separable subspace of $C^\alpha([0,T] ; \operatorname{Diff}_{C^{n-2}}(\bbT^d))$.
\end{remark}

\begin{remark}[Non-product case] It is possible to extend the above analysis under the more general assumption that 
\begin{align*}
\dot{\Xi}^{\varepsilon}_t(\bar{X})&= \varepsilon^{-1}\sum\limits_{k=1}^K\xi_k(\Xi_t^{\varepsilon}(\bar{X}),\lambda_t^{\varepsilon}), \quad 
\Xi^{\varepsilon}_0(\bar{X})=\bar{X}\in \bbT^d\,,\\
\dot{\bar{g}}_t^{\varepsilon}(\omega, X) &= \bar{u}_t^{\varepsilon}(\omega,\bar{g}_t^{\varepsilon}(\omega, X)), \quad \bar{g}_0^{\varepsilon}(X)=X \in \bbT^d\,,\\[8pt]
\dot{\lambda}_t^{\varepsilon} &= \varepsilon^{-2} h(\lambda_t), \;\; \lambda_0=\omega \in \Omega\,.
\end{align*}
See, e.g.,  \cite{10.1214/21-AIHP1203, kelly2017deterministic, 10.1214/21-AIHP1202, melbourne2011note, melbourne2015correction} and particularly Section 5 of \cite{chevyrev2019multiscale}.
\end{remark}

\section{Mean flow closures via variational principles } \label{sec:vp}
The goal of this section is to derive closures for the dynamics of the mean flow velocity field $\bar{u}_t$ appearing in \eqref{eq:g_homog_limit}, and explain the modelling procedures that motivated the dynamics. We consider two closures derived from variants of the Euler--Poincar\'e variational principle with advected quantities developed in \cite{HMR1998}. The first is a stochastic closure of $\bar{u}_t$ that is equivalent to the SALT variational principle \cite{Holm2015}, albeit with a new formulation that reflects the decomposition of the flow map $g_t$. The second is a deterministic closure of $\bar{u}_t$ that resembles the Generalised Lagrangian Mean (GLM) theory \cite{andrews_mcintyre_1978, GH1996}. The stochastic closure is presented in Section \ref{sec:VP1} and the deterministic closure is presented in Section \ref{sec:det vp}. In preparation for the construction of these closures, we will recall the essential aspects of the previous section and introduce the geometric mechanics language necessary to discuss the constrained variations that are used in the variational principles in Sections \ref{sec:VP1} and \ref{sec:det vp}.

\subsection{Geometric setting and variations}
Let $(\clD, \bg)$ denote a smooth, compact, connected, oriented $d$-dimensional boundaryless Riemannian manifold with metric $\mathbf{g}$. Let $\mu_{\mathbf{g}} \in \Lambda^d(\clD)$ the associated volume form expressed in local coordinates $(x^1,\ldots,x^d)$ as
\begin{align*}
    \mu_{\mathbf{g}} = \sqrt{\left|\det g_{ij}\right|}\,dx^1\wedge\ldots \wedge dx^d\,.
\end{align*}
Let $T>0$ and $(\Omega,\mathcal{F},\bbP)$ denote a probability space supporting an $K$-dimensional Brownian motion $W=(W^1,\ldots, W^K)$ with identity covariance. Let $\bbF$ denote the filtration generated by $W$.

In the previous section, for $\mcal{D} = \mathbb{T}^d$, we have shown the homogenisation of the flow of diffeomorphisms $g^\varepsilon: \Omega \times [0,T]\rightarrow \operatorname{Diff}_{C^n}(\mcal{D})$ with fast + slow decomposition in the form of equation \eqref{eq:g_eps}. This resulted in a stochastic diffeomorphism $g : \Omega \times [0,T] \rightarrow \operatorname{Diff}_{C^n}(\mcal{D})$ as proven in Theorem \ref{thm:convergence}.  The diffeomorphism $g$ is defined through the composition $g = \Xi \circ \bar{g}$, where the constituent flow of diffeomorphisms $\Xi : \Omega \times [0,T] \rightarrow \operatorname{Diff}_{C^n}(\mcal{D})$ and $\bar{g} : \Omega \times [0,T] \rightarrow \operatorname{Diff}_{C^n}(\mcal{D})$ satisfy the following Stratonovich SDEs
\begin{align}
    \begin{split}
        \dd \Xi_t &= \left(\sum\limits_{k=1}^K\xi_k \circ \dW^k_t + \frac{1}{2}\sum \limits_{k,l=1}^K\Gamma^{kl}\left[\xi_k\,,\,\xi_l\right] \dd t\right) \Xi_t\,, \qquad \dd \bar{g}_t = \bar{u}_t \bar{g}_t \,\dt\,. 
    \end{split}\label{eqn:d Xi d gbar def}
\end{align}
Here, $\xi \in \mathfrak{X}_{C^{n+2}}(\mcal{D})^K$ is a collection of $K$ prescribed time-independent vector fields and $\bar{u} : \Omega \times [0,T] \rightarrow \mathfrak{X}_{C^n}(\mcal{D})$ is an $\mathbb{F}$-adapted $C^n$-vector field.
The tangent lifted right action of $\Xi_t, \bar{g}_t \in \operatorname{Diff}_{C^n}(\mcal{D})$ on vector fields $v\in \mathfrak{X}_{C^n}(\mcal{D})$ is denoted by concatenation from the right and it is equivalent to composition; that is, for $\phi \in \operatorname{Diff}(\clD)$ and $X \in \mathfrak{X}$, $X\phi = T_eR_{\phi} X$ where $R_{(\cdot)}: $ is composition from the right. Lastly, $\qv{\cdot}{\cdot}:\mathfrak{X}_{C^n}(\mcal{D})\times \mathfrak{X}_{C^n}(\mcal{D}) \rightarrow \mathfrak{X}_{C^{n-1}}(\mcal{D})$ denotes the commutator of vector fields. The stochastic flow of diffeomorphisms $g$ satisfies the following Stratonovich SDE 
\begin{align}
    \dd g_t= \left(\Xi_{t*}\bar{u}_t\dd t + \frac{1}{2}\sum \limits_{k,l=1}^K\Gamma^{kl}\left[\xi_k\,,\,\xi_l\right]\, \dd t + \sum \limits_{k=1}^K\xi_k \circ \dd W_t^k \right)g_t\,, \label{eqn:homoged diffeo}
\end{align}
which is equation \eqref{eq:g_homog_limit} expressed in the geometric mechanics language. 

In the remainder of the section, we will derive closure dynamics of $\bar{u}_t$. For simplicity, we assume the stochastic flow maps of the fluid particle trajectories are $C^\infty$-smooth diffeomorphisms and the vector fields are also $C^\infty$-smooth. We will use the shorthand $\operatorname{Diff}(\mcal{D})$ for $\operatorname{Diff}_{C^\infty}(\mcal{D})$ and $\mathfrak{X}(\mcal{D})$ for $\mathfrak{X}_{C^\infty}(\mcal{D})$, such that $g,\, \bar{g},\, \Xi: \Omega \times [0,T] \rightarrow \operatorname{Diff}(\mcal{D})$ and $\bar{u},\, \xi_k: \Omega \times [0,T] \rightarrow \mathfrak{X}(\mcal{D})$ for all $k=1,\ldots, K$.

The geometric notation to denote the actions of $\operatorname{Diff}(\mcal{D})$ and $\mathfrak{X}(\mcal{D})$ on $\mathfrak{X}(\mcal{D})$ and $\mathfrak{X}^*(\mcal{D})$ can be summarised as follows. The (geometric) dual space $\mathfrak{X}^*(\mcal{D})$ is identified as the space of one-form densities, i.e., $\mathfrak{X}^*(\mcal{D}) = \Lambda^1(\mcal{D})\otimes\Lambda^d(\mcal{D})$ via a (weak) duality pairing $\scp{\cdot}{\cdot} : \mathfrak{X}^*(\mcal{D})\times \mathfrak{X}(\mcal{D}) \rightarrow \mathbb{R}$ that is defined by
\begin{align*}
    \scp{m}{u} := \int_{\clD} (u \intprod \alpha) \rho\,, \quad \textrm{where} \quad m = \alpha\otimes \rho \in \mathfrak{X}^*(\clD)\,, \quad \alpha \in \Lambda^1(\clD)\,,\quad \rho \in \Lambda^d(\clD)\,,\quad u \in \mathfrak{X}(\mcal{D})\,, 
\end{align*}
where $\intprod$ is the interior product. Letting $\phi \in \operatorname{Diff}(\mcal{D})$, $u, v \in \mathfrak{X}(\mcal{D})$ and $m = \alpha \otimes \rho \in \mathfrak{X}^*(\mcal{D})$ where $\alpha \in \Lambda^1(\mcal{D})$ and $\rho\in \Lambda^d(\mcal{D})$, we have the following adjoint actions
\begin{align*}
    \begin{split}
        &\Ad: \operatorname{Diff}(\mcal{D})\times \mathfrak{X}(\mcal{D}) \rightarrow \mathfrak{X}(\mcal{D}) \,, \qquad \Ad_\phi u := \phi_* u\,,\\
        &\ad: \mathfrak{X}(\mcal{D}) \times \mathfrak{X}(\mcal{D}) \rightarrow \mathfrak{X}(\mcal{D})\,,\qquad \ad_u v := -\qv{u}{v}\,,
    \end{split}
\end{align*}
and coadjoint actions defined through the duality pairing\footnote{In certain literature, the convention to define the dual of $\Ad_\phi$ by $\Ad^*_{\phi^{-1}}$ is used. In that case, we have the representation $\Ad^*_\phi m = \phi_* m$. 
},
\begin{align*}
    \begin{split}
        &\Ad^* : \operatorname{Diff}(\mcal{D})\times \mathfrak{X}^*(\mcal{D}) \rightarrow \mathfrak{X}^*(\mcal{D}) \,, \quad \scp{\Ad_{\phi} u}{m} = \scp{u}{\Ad^*_{\phi}m} \,,\\
        &\ad^* : \mathfrak{X}^(\mcal{D})\times \mathfrak{X}^*(\mcal{D}) \rightarrow \mathfrak{X}^*(\mcal{D}) \,, \quad \scp{\ad_{u} v}{m} = \scp{v}{\ad^*_{u}m} \,.
    \end{split}
\end{align*}
Under these definitions, $\Ad^*_\phi m$ is naturally identified as the pullback $\phi^* m$ since
\begin{align*}
    \scp{m}{\Ad_{\phi}u} = \int_{\clD}({\phi_*u}\intprod \alpha)\rho = \int_{\clD}({u}\intprod \phi^*\alpha)\phi^*\rho = \scp{\phi^*\alpha\otimes \phi^*\rho}{u} = \scp{\phi^*(\alpha\otimes \rho)}{u} = \scp{\phi^*m}{u}\,,
\end{align*}
where we have used the natural properties of pullback on tensor products \cite{AMR1988}. It can be shown that the coadjoint action
$\ad^*_u m$ is given by the Lie derivative action, see e.g., \cite{MR1999}. Recall that the Lie derivative $\mcal{L}:\mathfrak{X}(\mcal{D})\rightarrow \mathcal{L}(\Lambda^k(\mcal{D}); \Lambda^k(\mcal{D}))$ is defined on differential $k$-forms by $\mcal{L}_X \kappa = \frac{d}{dt}\big|_{t=0} \psi_t^*\kappa$, where $\psi$ is the flow associated with vector field $X$. Owning to the natural property of Lie derivative over tensor products \cite{AMR1988}, we have 
\begin{align*}
    \ad^*_u m = \mcal{L}_u m = \mcal{L}_u(\alpha\otimes \rho) = \mcal{L}_u \alpha\otimes \rho + \alpha\otimes \mcal{L}_u \rho = \mcal{L}_u \alpha \otimes \rho + \operatorname{div}_{\mu_\mathbf{g}}\left(u\rho\right)\,.
\end{align*}

Fluid particles also possess intrinsic physical quantities such as mass volume and heat, which, in ideal fluid dynamics, are modelled as advected quantities. Advected quantities  lie in a dual (infinite-dimensional) vector space $V^*$ of a vector space $V$; that is, there is a (weak) duality pairing $\scp{\cdot}{\cdot}_{V\times V^*}:V\times V^* \rightarrow \mathbb{R}$ which allows us to view $V^*$ as a subset of the analytic dual of $V$. In this work, we take $V^* = \Lambda^d(\clD)\oplus \left(\oplus^N_{i=1} \Lambda^{k_i}(\clD)\right)$ for $k_i \in \{0,\ldots,d\}$ such that for $a = \{a^{(i)}\}^N_{i=0}\in V^*$ and $b = \{b^{(i)}\}^N_{i=0}\in V = \Lambda^0(\clD)\oplus \left(\oplus^N_{i=1} \Lambda^{d-k_i}(\clD)\right)$, the pairing is $\scp{\cdot}{\cdot}_{V\times V^*}$ is defined by
\begin{align*}
    \scp{b}{a}_{V\times V^*} = \scp{b^{(0)}}{a^{(0)}}_{\Lambda^0(\mcal{D})\times \Lambda^d(\mcal{D})} + \sum_{i=1}^N \scp{b^{(i)}}{a^{(i)}}_{\Lambda^{d-k_i}(\clD)\times \Lambda^{k_i}(\mcal{D})}\,,
\end{align*}
where $\scp{\beta}{\alpha}_{\Lambda^{d-k}(\clD)\times\Lambda^k(\clD)} := \int_{\clD}\beta\wedge\alpha$ for $\beta \in \Lambda^{d-k}(\clD)$ and $\alpha \in\ \Lambda^k(\clD)$ is the duality pairing between $\Lambda^{d-k}(\clD)$ and its dual $\Lambda^{k}(\clD)$. For applications to complex fluids, $V^*$ can additionally contain Lie-algebra valued $k$-forms and tensors \cite{GBR09}. 
The right actions of $\operatorname{Diff}(\mcal{D})$ and $\mathfrak{X}(\mcal{D})$ on $V^*$ are the pullback and Lie derivative, respectively, both of which we denote via concatenations on the right. We additionally define the operator $\diamond : V\times V^* \rightarrow \mathfrak{X}(\mcal{D})$ using the duality pairings on different spaces, let $u \in \mathfrak{X}(\clD)$, $b \in V$ and $a \in V^*$ defined as before,
\begin{align*}
    \scp{-b\diamond a}{u}_{\mathfrak{X}(\mcal{D})\times \mathfrak{X}^*(\mcal{D})} = \scp{b}{a u}_{V\times V^*} &= \scp{b}{\mathcal{L}_u a}_{V\times V^*} = \sum_{i=0}^N\scp{b^{(i)}}{\mathcal{L}_u a^{(i)}}_{V\times V^*}\,.
\end{align*}
In fact, the $\diamond$ operator is the cotangent lift momentum map of the cotangent bundle $T^*V \simeq V\times V^*$ induced from the right action of $\operatorname{Diff}(\mcal{D})$ by representation. Since the type of duality pairing used in self evident from its arguments, we will drop the subscripts specifying the spaces which the pairing is defined on in subsequent calculations. For an initial condition $a_0 \in V^*$, we define two advected quantities $a, \bar{a} :\Omega \times [0,T] \rightarrow V^*$ via right action by $\bar{g}$ and $g$ to have $\bar{a}_{t} = a_0 \bar{g}^{-1}_{t} = \bar{g}_{t*}a_0$ and $ a_{t} = a_0 g^{-1}_{t} = g_{t*}a_0$. Using the Kunita It\^o--Wentzel formula \cite{de2020implications}, these advected quantities satisfy the Stratonovich SDE 
\begin{align}
\begin{split}
    &\dd \bar{a}_{t} + \mcal{L}_{\bar{u}_{t}}\bar{a}_{t}\,\dt = 0 \,, \\
    &\dd {a}_{t} + \mcal{L}_{u_{t}} {a}_{t}\,\dt + \sum\limits_{k=1}^K\mcal{L}_{\xi_k} {a}_{t}\circ \dW^k_t + \frac{1}{2}\sum\limits_{k,l=1}^K\mcal{L}_{\Gamma^{kl}\qv{\xi_k}{\xi_l}} {a}_{t}\,\dt = 0\,,
\end{split}\label{eq:mean and full adv quantities}
\end{align}
for the initial condition $\bar{a}_t|_{t=0} = {a}_t|_{t=0} = a_0$. Together, fluid velocity and advected quantities form the configuration space of the fluid flow, and it is given by $\mathfrak{X}(\clD) \times V^*$. In what follows, we will refer to the variables $(\bar{u}, \bar{a}):\Omega\times[0,T]\rightarrow \mathfrak{X}(\clD) \times V^*$ and $(u, a) : \Omega\times[0,T]\rightarrow \mathfrak{X}(\clD) \times V^*$ as the mean fluid variables and fluid variables without prefixes, respectively.

To define the variational principle, we need a space of time-parameterised curves and the corresponding set of variations. In our context, we define two sets of stochastic time-parameterised curves taking values in $\operatorname{Diff}(\mcal{D})$, and then prescribe their variations.  

Let $\mathscr{S}(\mathfrak{X}(\clD))$ denote the space of $\mathfrak{X}(\clD)$-valued measurable stochastic processes adapted to the filtration $\bbF$. For any $\bar{u}\in \mathscr{S}(\mathfrak{X}(\clD))$, there exists a stochastic curve of diffeomorphisms $\bar{g}: \Omega \times [0,T] \rightarrow \operatorname{Diff}(\mcal{D})$  adapted to the filtration $\bbF$ satisfying $ \frac{d}{dt}\bar{g}_t = \bar{u}_t (\bar{g}_t)$. We collect all such stochastic curves of diffeomorphisms in a set:
\begin{align*}
    \bar{\mathscr{S}}(\operatorname{Diff}(\mcal{D})) = \left\{\bar{g} : \Omega \times [0,T] \rightarrow \operatorname{Diff}(\mcal{D})\, | \,\, \exists \,\, \bar{u} \in \mathscr{S}(\mathfrak{X}(\clD))\,\, \textnormal{s.t.} \;\; \frac{d}{dt}\bar{g}_t = \bar{u}_t (\bar{g}_t) \right\}\,.
\end{align*}
Let $\Xi : \Omega  \times [0,T] \rightarrow \operatorname{Diff}(\mcal{D})$ be the stochastic flow map of the SDE \eqref{eqn:d Xi d gbar def}. We define the set $\mathscr{S}^{\Xi}(\operatorname{Diff}(\mcal{D}))$ as the image of right translation by the stochastic flow $\Xi$ of the set $\bar{\mathscr{S}}(\operatorname{Diff}(\mcal{D}))$:
\begin{align*}
    \mathscr{S}^{\Xi}(\operatorname{Diff}(\mcal{D})) = \{g : \Omega \times [0,T] \rightarrow \operatorname{Diff}(\mcal{D}) \,|\,\, \exists \,\, \bar{g} \in \bar{\mathscr{S}}(\operatorname{Diff}(\mcal{D})\,\, \textnormal{s.t.}  \;\;  g = \Xi \bar{g}\} \,.
\end{align*}
The elements of $\mathscr{S}^{\Xi}(\operatorname{Diff}(\mcal{D}))$ are $C([0,T]; \operatorname{Diff}(\mcal{D}))$-valued stochastic process adapted to the filtration $\bbF$, which take the form \eqref{eqn:homoged diffeo}.

For any $g\in \mathscr{S}^{\Xi}(\operatorname{Diff}(\mcal{D}))$, we construct $\epsilon$-dependent deformations of $g$, $\epsilon \in \mathbb{R}$, denoted by $\wt{g}_{t, \epsilon} \in \mathscr{S}^{\Xi}(\operatorname{Diff}(\mcal{D}))$ following \cite{ACC2014, CCR2023}. Let $v: \Omega \rightarrow C^1([0, T]; \mathfrak{X}(\mcal{D}))$ be an $\mathbb{F}$-adapted stochastic process satisfying $v_0 = v_T = 0$, and for all $\epsilon \in [0,1]$, let $e_{\epsilon}: \Omega \rightarrow C^1([0, T]; \operatorname{Diff}(\mcal{D}))$ be the solution of the random ODE on $\operatorname{Diff}(\mcal{D})$ given by
\begin{align}
    \begin{split}
        \frac{d}{dt} e_{t, \epsilon} = \epsilon \frac{d}{dt} v_t e_{t, \epsilon}\,, \qquad e_{\epsilon, 0} = e\,,
    \end{split}\label{eqn:d e def}
\end{align}
where $e$ is the identity diffeomorphism. A direct consequence of this definition is that  $e_{0, t} = e$ for all $t \in [0, T]$. 
Furthermore, by \cite[Lemma 3.1]{ACC2014}, the $\epsilon$ derivatives of $e_{t, \epsilon}$ and $e^{-1}_{\epsilon, t}$ take the form
\begin{align}
    \frac{d}{d\epsilon} \Big|_{\epsilon = 0} e_{t, \epsilon} = v_t\,,\qquad \frac{d}{d\epsilon} \Big|_{\epsilon = 0} e^{-1}_{ \epsilon, t} = -v_t\,. \label{eq:eps d e def}
\end{align}
Using $e_{t, \epsilon}$, we define perturbations of $g\in \mathscr{S}^{\Xi}(\operatorname{Diff}(\mcal{D}))$ by 
\begin{align}
    \wt{g}_{t, \epsilon} := \Xi_t e_{t, \epsilon} \bar{g}_t \in \mathscr{S}^{\Xi}(\operatorname{Diff}(\mcal{D})) \,. \label{eq:perturbed g}
\end{align} 
The perturbed stochastic flow of diffeomorphisms $\wt{g}_{t, \epsilon}$ is defined as an $\epsilon$-dependent perturbation constructed to vary $\bar{g}_t$ following the flow generated by the random flow of diffeomorphisms $e_{t, \epsilon}$, whilst keeping $\Xi_t$ unchanged. This way, $\wt{g}$ retains the composition structure as a right translation by the stochastic flow $\Xi$ and it remains in the set $\mathscr{S}^{\Xi}(\operatorname{Diff}(\mcal{D}))$. 
To see that, one compute the stochastic time differential of $e_{t, \epsilon}\bar{g}_t \in \bar{\mathscr{S}}(\operatorname{Diff}(\clD))$ to have 
\begin{align}
    \dd (e_{t, \epsilon}\bar{g}_t) = \left(\epsilon \frac{d}{dt}v_t  e_{t, \epsilon} + \Ad_{e_{t,\epsilon}} \bar{u}_t e_{t, \epsilon}\right) \bar{g}_t\,,
\end{align}
and note that $\epsilon \frac{d}{dt}v_t + \Ad_{e_{t,\epsilon}} \bar{u}_t$ is $\bbF$-adapted. Thus, it is clear that $\wt{g}_{t, \epsilon} := \Xi_t e_{t, \epsilon} \bar{g}_t \in\mathscr{S}^\Xi(\operatorname{Diff}(\clD))$ due to its right translation of $\Xi$. The stochastic vector fields generated by the flow $\wt{g}$ can be computed using equations \eqref{eqn:d Xi d gbar def} and \eqref{eqn:d e def} to see that $\wt{g}_{t, \epsilon}$ satisfies the following Stratonovich SDE,
\begin{align}
    \begin{split}
        \dd \wt{g}_{t, \epsilon} &= \dd \Xi_t e_{t, \epsilon} \bar{g}_t + \Xi_t \frac{d}{dt}e_{t, \epsilon} \bar{g}_t\,\dt + \Xi_t e_{t, \epsilon} \frac{d}{dt} \bar{g}_t\,\dt \\ 
        &= \left(\dd \Xi_t \Xi_t^{-1}\right)\wt{g}_{t, \epsilon} + \Ad_{\Xi_t}\left(\frac{d}{dt}e_{t, \epsilon}e_{t, \epsilon}^{-1}\right)\wt{g}_{t, \epsilon}\,\dt + \Ad_{\Xi_t }\Ad_{ e_{t, \epsilon}}\left(\frac{d}{dt}\bar{g}_{t} \bar{g}^{-1}_{t}\right)\wt{g}_{t, \epsilon}\,\dt \\ 
        &= \left(\sum\limits_{k=1}^K\xi_k \circ \dW^k_t + \frac{1}{2}\sum \limits_{k,l=1}^K\Gamma^{kl}\qv{\xi_k}{\xi_l} \dd t\right)\wt{g}_{t, \epsilon} + \epsilon \left(\Ad_{\Xi_t} \frac{d}{dt}v \right) \wt{g}_{t, \epsilon}\,\dt \\
        & \qquad \qquad + \left(\Ad_{\Xi_t}\Ad_{e_{t, \epsilon}} \bar{u}_{t} \right)\wt{g}_{t, \epsilon} \,\dt\\
        &= \left(u_{t,\epsilon}\,\dt + \frac{1}{2}\sum \limits_{k,l=1}^K\Gamma^{kl}\qv{\xi_k}{\xi_l}\,\dt + \sum\limits_{k=1}^K\xi_k\circ \dW^k_t\right)\wt{g}_{t, \epsilon}\,, 
    \end{split}\label{eqn:perturbed dg}
\end{align}
where the perturbed drift vector field $u_{t, \epsilon} \in \mathfrak{X}(\mcal{D})$ is defined by
\begin{align*}
    u_{t,\epsilon} := \Ad_{\Xi_t} \left(\epsilon \frac{d}{dt}v_t + \Ad_{e_{t, \epsilon}} \bar{u}_{t} \right)\,.    
\end{align*}
Using the $\epsilon$-derivative properties of $e_{t,\epsilon}$ in equation \eqref{eq:eps d e def}, we have the variations induced by $e_{t, \epsilon}$ on the perturbed drift vector field $u_{t, \epsilon}$ is given by
\begin{align}
    \begin{split}
    \frac{d}{d\epsilon} \Big|_{\epsilon = 0} u_{t, \epsilon} = \Ad_{\Xi_t}\left(\frac{d}{dt} v_t + \ad_{v_t}\bar{u}_{t}\right)\,,\label{eq:delta U calc}
    \end{split}
\end{align}

The perturbation to $a$ that is consistent with perturbing the flow of diffeomorphisms $g$ is defined by $a_{t,\epsilon} := a_0 \wt{g}_{t, \epsilon}^{-1}$, where $\wt{g}_{t, \epsilon}$ is defined in \eqref{eq:perturbed g}. Then, one finds the $\epsilon$ derivative
\begin{align}
\begin{split}
    \frac{d}{d\epsilon}\Big|_{\epsilon=0} a_{t,\epsilon} = a_0 \bar{g}_{t}^{-1} \left(\frac{d}{d\epsilon}\Big|_{\epsilon=0}e^{-1}_{\epsilon, t}\right)\Xi_t^{-1} = -\left(a_0\bar{g}^{-1}_t v_t\right)\Xi_t^{-1} &= -\left(\mcal{L}_{v_t}\bar{a}_t\right)\Xi_t^{-1} \\
    &= -\left(\mcal{L}_{v_t}a_t \Xi_t\right)\Xi_t^{-1}\,.
\end{split}\label{eq:delta a calc}
\end{align}
Similarly, we define the perturbation to $\bar{a}$ by $\bar{a}_{t,\epsilon} := a_0 \bar{g}_{t,\epsilon}$ and we find the $\epsilon$ derivative
\begin{align*}
    \frac{d}{d\epsilon}\Big|_{\epsilon=0} \bar{a}_{t,\epsilon} = a_0 \bar{g}_{t}^{-1} \left(\frac{d}{d\epsilon}\Big|_{\epsilon=0}e^{-1}_{\epsilon,t}\right) = -a_0\bar{g}^{-1}_t v_t = -\mcal{L}_{v_t}\bar{a}_{t,\epsilon}\,.
\end{align*}
Note that the form of the variations of $\bar{a}_{t}$ are the same as in the standard theory of Euler--Poincar\'e reduction with advected quantities \cite{HMR1998}.

The $e_{t, \epsilon}$-induced variations formula for the drift vector fields and advected quantities in equation \eqref{eq:delta U calc} and \eqref{eq:delta a calc} respective will be used extensively in the subsequent variational principles to derive stochastic closures for $\bar{u}_{t}$ in Section \ref{sec:VP1} and deterministic closures in Section \ref{sec:det vp}.

\subsection{Stochastic mean flow closure} \label{sec:VP1}
Fix the stochastic flow of diffeomorphisms $\Xi : \Omega \times [0,T] \rightarrow \operatorname{Diff}(\mcal{D})$ and the associated set $\mathscr{S}^{\Xi}(\operatorname{Diff}(\mcal{D}))$. For $g \in \mathscr{S}^{\Xi}(\operatorname{Diff}(\mcal{D}))$, we associate the unique $\bbF$-adapted vector fields $ \bar{u} : \Omega \times [0,T] \rightarrow \mathfrak{X}(\clD)$ and $ u : \Omega \times [0,T] \rightarrow \mathfrak{X}(\clD)$ by,
\begin{align}
    \bar{g} =: \Xi^{-1} g\,,\quad \bar{u}_{t} := \dd \bar{g}_t \bar{g}^{-1}_t\,,\quad u_{t} := \Xi_{t*}\bar{u}_{t} = \Ad_{\Xi_t}\bar{u}_{t}\,. \label{eqn:g u relation}
\end{align}
To find the dynamics of $\bar{u}$ and $u$, we consider a Lagrangian $\ell : \mathfrak{X}(\mcal{D})\times V^* \rightarrow \mathbb{R}$ and an action functional $S^{\Xi}$ over the set $\mathscr{S}^{\Xi}(\operatorname{Diff}(\mcal{D}))$ with parametric dependence on $a_0$ by 
\begin{align}
    S^{\Xi}[g, a_0] := \int_{0}^{T} \ell(u_{t}, a_0g^{-1}_t) \,\dt = \int_{0}^{T} \ell(u_{t}, a_{t}) \,\dt \label{eq:EPA action stoch}\,,
\end{align}
where the relationships between $g_t$ and $u_t$ are defined in equation \eqref{eqn:g u relation}. In the action principle \eqref{eq:EPA action stoch}, we have made the modelling choice to only include the coefficients of the finite variations part of the semi-martingale $g_t$ in the Lagrangian. This is due to the Lagrangian of interests are typically quadratic in the $\mathfrak{X}(\mcal{D})$ augments and the Lagrangian would be ill-defined when the martingale parts of $g_t$ are included.
\begin{remark}[Other regularisation procedures]
    In the stochastic fluid dynamics literature, similar modelling choice are made for the ill-defined integration against the square of white noise if it arises. For example, in the derivation of the stochastic Navier Stokes equation given in \cite{MR2004}, the term ``$\int {\ddot{W}}\,\dt$'' appearing in the time derivative is balanced with pressure forces  via an application of Doob-Meyer semi-martingale decomposition theorem.
\end{remark}

Assuming that the Lagrangian $\ell$ is smooth with respect to its variables, we define the variational derivatives $\frac{\delta \ell}{\delta u} : \fkX(\clD)\times V^*\rightarrow \fkX^*(\clD)$ and $\frac{\delta \ell}{\delta a} :\fkX(\clD)\times V^* \rightarrow V$ by 
\begin{align}
    \frac{d}{d\epsilon}\bigg|_{\epsilon=0}\ell(u + \epsilon\delta u, a) = \scp{\frac{\delta \ell}{\delta u}(u,a)}{\delta u}\,, \quad \frac{d}{d\epsilon}\bigg|_{\epsilon=0}\ell(u, a + \epsilon\delta a) = \scp{\frac{\delta \ell}{\delta a}(u,a)}{\delta a}\,, \label{eq:vder lag}
\end{align}
for all $\delta u\in \mathfrak{X}(\mcal{D})$ and $\delta a\in V^*$. We will use the shorthand notation
$$\frac{\delta \ell}{\delta u_{t}} := \frac{\delta \ell}{\delta u}(u_{t}, a_t) \quad \textnormal{and}\quad \frac{\delta \ell}{\delta a_t} := \frac{\delta \ell}{\delta a}(u_t, a_t)$$
for the variational derivatives evaluated at $(u_t, a_t)$ for the rest of the paper when the context is clear. 

An application of the principle of least action results in the following proposition. 
\begin{proposition}
The flow of diffeomorphisms $g \in \mathscr{S}^{\Xi}(\operatorname{Diff}(\mcal{D}))$ is a critical point of $S^{\Xi}$ as defined in \eqref{eq:EPA action stoch} if and only if the following random-coefficient Euler--Poincar\'e equation with advected quantities 
    \begin{align}
         \dd\left(\Xi_t^*\frac{\delta \ell}{\delta u_{t}} \right) + \mcal{L}_{\bar{u}_t}\left(\Xi_t^*\frac{\delta \ell}{\delta u_{t}}\right)\, \dt = \left(\Xi_t^*\frac{\delta \ell}{\delta a_{t}} \right)\diamond \bar{a}_{t}\,\dt\,,
    \label{eq:EPA mean}
    \end{align}
    is satisfied.
\end{proposition}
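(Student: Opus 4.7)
The plan is to follow the same strategy as the first proposition of this subsection, now carrying an additional term coming from the advected-quantity dependence of $\ell$. I would begin from
\[
\frac{d}{d\epsilon}\Big|_{\epsilon=0} S^\Xi(g_{t,\epsilon}) = \int_0^T \left(\scp{\frac{\delta\ell}{\delta u_t}}{\frac{d}{d\epsilon}\Big|_{\epsilon=0} u_{t,\epsilon}} + \scp{\frac{\delta\ell}{\delta a_t}}{\frac{d}{d\epsilon}\Big|_{\epsilon=0} a_{t,\epsilon}}\right)\dt,
\]
and substitute the two variation formulas \eqref{eq:delta U calc} and \eqref{eq:delta a calc} that have already been derived above.

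The $u$-contribution is handled exactly as in the advectionless case: transferring $\Ad_{\Xi_t}$ across the pairing, performing stochastic integration by parts in time against the $\bbF$-adapted continuous $\mathfrak{X}^*(\mcal{D})$-valued semi-martingale $\Xi_t^*\frac{\delta\ell}{\delta u_t}$ (with no boundary contribution, since $v_0=v_T=0$), and then using $\ad_{v_t}\ob{u}_t=-\ad_{\ob{u}_t}v_t$ to collect $v_t$ into a single slot produces an integrand proportional to $\scp{\dd(\Xi_t^*\frac{\delta\ell}{\delta u_t}) + \ad^*_{\ob{u}_t}(\Xi_t^*\frac{\delta\ell}{\delta u_t})\dt}{v_t}$, and $\ad^*_{\ob{u}_t}$ agrees with $\mcal{L}_{\ob{u}_t}$ on $\mathfrak{X}^*(\mcal{D})$. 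The $a$-contribution is where the new structure enters: by equivariance of the $V\times V^*$ pairing under the right $\operatorname{Diff}(\mcal{D})$-representation, the factor $\Xi_t^{-1}$ sitting on $-\mcal{L}_{v_t}\ob{a}_t\in V^*$ in \eqref{eq:delta a calc} can be transferred to act on $\frac{\delta\ell}{\delta a_t}\in V$ as $\Xi_t^*$; the defining identity of the diamond map, $\scp{b}{\mcal{L}_u a} = \scp{-b\diamond a}{u}$, then rewrites the pairing as $\scp{(\Xi_t^*\frac{\delta\ell}{\delta a_t})\diamond \ob{a}_t}{v_t}$.

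Collecting both contributions yields
\[
0 = \int_0^T \scp{\dd(\Xi_t^*\tfrac{\delta\ell}{\delta u_t}) + \mcal{L}_{\ob{u}_t}(\Xi_t^*\tfrac{\delta\ell}{\delta u_t})\dt - (\Xi_t^*\tfrac{\delta\ell}{\delta a_t})\diamond \ob{a}_t\,\dt}{v_t}
\]
for every $\bbF$-adapted $v\in C^1([0,T];\mathfrak{X}(\mcal{D}))$ with $v_0=v_T=0$. The stochastic fundamental lemma of the calculus of variations \cite{SC2019, ST2023}, applied pointwise in $\omega$, then yields \eqref{eq:EPA mean}, and the converse direction is immediate by reversing the chain of equalities. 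The main step I expect to need care with is justifying the stochastic integration by parts, which requires $\Xi_t^*\frac{\delta\ell}{\delta u_t}$ to be a continuous $\bbF$-adapted semi-martingale in $\mathfrak{X}^*(\mcal{D})$; as in the advectionless case, this follows from Itô-Wentzel applied to the $\operatorname{Diff}(\mcal{D})$-valued semi-martingale $\Xi_t$, the product rule, and smoothness of $\ell$ in its vector-field argument.
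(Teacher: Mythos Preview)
Your proposal is correct and follows essentially the same route as the paper's proof: substitute the variation formulas \eqref{eq:delta U calc} and \eqref{eq:delta a calc}, move $\Ad_{\Xi_t}$ and the $\Xi_t^{-1}$ factor across the respective pairings, integrate by parts in time, and invoke the stochastic fundamental lemma. The only cosmetic difference is notation---the paper writes the $a$-term as $(\frac{\delta\ell}{\delta a_t}\Xi_t)\diamond\ob{a}_t$ using concatenation for the right action, which is identified with your $\Xi_t^*\frac{\delta\ell}{\delta a_t}$ immediately afterwards.
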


\begin{proof}
Using the variations generated by $e_{t, \epsilon}$ to the $g_t$ as given in \eqref{eqn:perturbed dg} and their induced variations in $u_{t}$ and $a_{t}$, we have 
\begin{align*}
    \begin{split}
    0 = \frac{d}{d\epsilon}\Big|_{\epsilon = 0}S^{\Xi} &= \int_{0}^{T} \frac{d}{d\epsilon}\Big|_{\epsilon = 0} \ell(u_{t,\epsilon}, a_{t,\epsilon})\,\dt\\
    & = \int_{0}^{T}\scp{\frac{\delta \ell}{\delta u_{t}}}{\frac{d}{d\epsilon}\Big|_{\epsilon = 0} u_{t,\epsilon}} + \scp{\frac{\delta \ell}{\delta a_{t,\epsilon}}}{\frac{d}{d\epsilon}\Big|_{\epsilon = 0} a_{t,\epsilon}}\,\dt \\
    & = \int_{0}^{T}\scp{\frac{\delta \ell}{\delta u_{t}}}{\Ad_{\Xi_t}\left(\frac{d}{dt} v_t + \ad_{v_t}\bar{u}_{t}\right)} + \scp{\frac{\delta \ell}{\delta a_{t}}}{- \left(\mcal{L}_{v_t}\bar{a}_{t}\right)\Xi_t^{-1}}\,\dt\\
    &= \int_{0}^{T}\scp{\dd \left(\Ad^*_{\Xi_t}\frac{\delta \ell}{\delta u_{t}}\right) + \ad^*_{\bar{u}_t}\Ad^*_{\Xi_t}\frac{\delta \ell}{\delta u_{t}}\,\dt - \left(\frac{\delta \ell}{\delta a_{t}}\Xi_t\right) \diamond \bar{a}_{t}\,\dt }{ v_t} \,.
    \end{split} 
\end{align*}
In the last equality, we have used integration by parts for the time derivative and the property $v_0 = v_T = 0$ so that the boundary terms vanishes. Since $\Xi_t$ is assumed to be $\operatorname{Diff}(\mcal{D})$-valued continuous semi-martingale, using the Ito-Wentzel formula and product rule \cite{de2020implications}, one can show that $\Ad^*_{\Xi_t}\frac{\delta \ell}{\delta u_{t}}$ a $\bbF$-adapted continuous semimartingale with values in $\fkX^*(\clD)$. Thus, the integration by parts rule holds.  Moreover, since $v :\Omega \rightarrow C^1([0, T]; \mathfrak{X}(\mcal{D}))$ is arbitrary, applying the stochastic version of fundamental lemma of calculus of variations \cite{SC2019, ST2023} yields the random-coefficient Euler--Poincar\'e equation \eqref{eq:EPA mean}.
\end{proof}
\begin{corollary}\label{cor:EP equivalent}
    The random-coefficient Euler--Poincar\'e equation with advected quantities \eqref{eq:EPA mean} arises from the following Euler--Poincar\'e constrained variational principle,
    \begin{align*}
        0 = \delta S = \delta \int_{0}^{T}\ell(u_{t}, a_{t})\,\dt\,, 
    \end{align*}
    subject to constrained variations 
    \begin{align*}
        \delta u_{t} = \Ad_{\Xi_t}\left(\p_t v  - \ad_{\Ad_{\Xi_t^{-1}}u_{t}} v\right)\,,\quad \delta a_{t} = - \left(\mathcal{L}_v \left(a_{t}\Xi_t^{-1}\right)\right)\Xi_t\,,
    \end{align*}
    where $v \in C^1([0,T];\mathfrak{X}(\clD)$ is assumed to be an arbitrary variation vanishing at $t = 0,T$.
\end{corollary}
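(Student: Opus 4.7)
My plan is to recognise that this corollary is essentially a restatement of the previous proposition in the Hamilton-Pontryagin style: rather than varying $g_{t,\epsilon}=\Xi_t\circ e_{\epsilon,t}\circ \ob{g}_t$ and computing the induced variations of $u_t$ and $a_t$ as a byproduct, one directly prescribes the constrained forms of $\delta u_t$ and $\delta a_t$ and checks that stationarity of $\int_0^T\ell(u_t,a_t)\,\dt$ under these variations produces exactly \eqref{eq:EPA mean}. Thus, the proof should just be a direct computation of $\delta S$ with the given constrained variations, matched term-by-term against \eqref{eq:EPA mean}.

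Concretely, I would first expand
\[
\delta S=\int_0^T\Big\langle\tfrac{\delta \ell}{\delta u_t},\delta u_t\Big\rangle+\Big\langle\tfrac{\delta \ell}{\delta a_t},\delta a_t\Big\rangle\,\dt,
\]
substitute the constrained variations from the statement, and move the operators onto $v$ via the dualities $\langle\cdot,\Ad_{\Xi_t}(\cdot)\rangle=\langle\Ad^*_{\Xi_t}(\cdot),\cdot\rangle$, $\langle\cdot,\ad_{w}(\cdot)\rangle=\langle\ad^*_w(\cdot),\cdot\rangle$, and the diamond identity $\langle b,\mcal{L}_v a\rangle=-\langle b\diamond a,v\rangle$. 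For the advected term I would use the right-action duality $\langle b,a\phi^{-1}\rangle=\langle b\phi,a\rangle$ to transfer $\Xi_t^{-1}$ off $\mcal{L}_v\ob{a}_t$ and onto $\delta\ell/\delta a_t$, producing the term $\big(\tfrac{\delta\ell}{\delta a_t}\Xi_t\big)\diamond\ob{a}_t$. For the drift term, noting that $\Ad_{\Xi_t^{-1}}u_t=\ob{u}_t$, the $\ad$-piece contributes $-\ad^*_{\ob{u}_t}\Ad^*_{\Xi_t}\tfrac{\delta\ell}{\delta u_t}$ paired with $v$, while the $\partial_t v$ piece becomes $-\dd\big(\Ad^*_{\Xi_t}\tfrac{\delta\ell}{\delta u_t}\big)$ after a stochastic integration by parts, using the vanishing of $v$ at $t=0,T$ together with the fact that $\Ad^*_{\Xi_t}\tfrac{\delta\ell}{\delta u_t}$ is a continuous $\bbF$-adapted semimartingale (as already established in the proof of the preceding proposition via the Itô–Wentzel formula).

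Collecting these contributions yields
\[
0=\int_0^T\Big\langle -\,\dd\big(\Ad^*_{\Xi_t}\tfrac{\delta\ell}{\delta u_t}\big)-\ad^*_{\ob{u}_t}\Ad^*_{\Xi_t}\tfrac{\delta\ell}{\delta u_t}\,\dt+\big(\tfrac{\delta\ell}{\delta a_t}\Xi_t\big)\diamond\ob{a}_t\,\dt,\ v_t\Big\rangle,
\]
and an application of the stochastic fundamental lemma of the calculus of variations, together with the identifications $\Ad^*_{\Xi_t}=\Xi_t^*$ and $\ad^*_{\ob{u}_t}=\mcal{L}_{\ob{u}_t}$ on one-form densities, gives precisely \eqref{eq:EPA mean}. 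The only subtle point is making sure the sign and convention for the right-action duality on $V^*$ used to dualise the $\Xi_t^{-1}$ factor is consistent with the pullback conventions set earlier in the section; everything else is bookkeeping in the adjoint representation. I expect the main technical obstacle to be this dualisation step on $V^*$, since one must be careful that $\big(\tfrac{\delta\ell}{\delta a_t}\Xi_t\big)\diamond\ob{a}_t$ is indeed the same object as $\big(\Xi_t^*\tfrac{\delta\ell}{\delta a_t}\big)\diamond\ob{a}_t$ appearing on the right-hand side of \eqref{eq:EPA mean}; this is immediate once one recalls that the right representation of $\operatorname{Diff}(\mcal{D})$ on tensor densities is pullback, but it is the place where a sign or convention slip would be most likely.
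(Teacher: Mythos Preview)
Your proposal is correct and matches the paper's approach. The paper's own proof is a two-line remark: it observes that the constrained variations in the corollary are precisely those computed earlier in \eqref{eq:delta U calc} and \eqref{eq:delta a calc}, and then appeals to ``the standard Euler-Poincar\'e variational procedure'' --- which is exactly the duality-and-integration-by-parts computation you spell out in detail.
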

\begin{proof}
    The variations of $u_{t}$ and $a_{t}$ are obtained in \eqref{eq:delta U calc} and \eqref{eq:delta a calc} respectively. Applying the standard Euler--Poincar\'e variational procedure then yields the desired results.
\end{proof}
We remark that the random-coefficient Euler--Poincar\'e equation \eqref{eq:EPA mean} in its current form is expressed with a mixture of $\bar{u}_{t}$ and $u_{t}$. That is, the momentum $\Xi_t^*\frac{\delta \ell}{\delta u_{t}}$ is expressed in terms of $u_{t}$ and the transport vector field is $\bar{u}_{t} = \Ad_{\Xi_t^{-1}} u_{t}$. In the paragraphs that follows, we consider two equivalent forms of the Euler--Poincar\'e equation \eqref{eq:EPA mean} by expressing ${\delta \ell}/{\delta u_t}$ in terms of $\bar{u}_{t}$ and ${\delta \ell}/{\delta a_t}$ in terms of $\bar{a}_t$. 

The action functional $S^{\Xi}$ defined in \eqref{eq:EPA action stoch} features a non-random, time independent Lagrangian $\ell$. By treating $\Xi$ as an external parameter, we define a random, time-dependent Lagrangian $\ell^{\Xi}: \Omega\times[0, T]\times \mathfrak{X}(\mcal{D})\times V^*\rightarrow \mathbb{R}$ where for all $(\bar{u}, \bar{a}):\Omega\times[0,T] \rightarrow \mathfrak{X}(\clD)\times V^*$,
\begin{align}
    \ell^{\Xi}(\bar{u}, \bar{a}) := \ell(\Ad_{\Xi_t} \bar{u}, \bar{a} \Xi^{-1}_t)\,.\label{eq:EPA mean lag def}
\end{align}
Here, the Lagrangian $\ell^{\Xi}$ is smooth in the $\mathfrak{X}(\mcal{D})$ and $V^*$ variables; the randomness and time dependence of $\ell^{\Xi}$ are defined exclusively by $\Xi$. We will use the same shorthand for the variational derivatives
\begin{align}
    \frac{\delta \ell^{\Xi}}{\delta \bar{u}} : \Omega \times [0, T]\times \mathfrak{X}(\mcal{D}) \times V^* \rightarrow \mathfrak{X}^*(\mcal{D})\,,\quad \frac{\delta \ell^{\Xi}}{\delta \bar{u}_t} := \frac{\delta \ell^{\Xi}}{\delta \bar{u}}(t, \bar{u}_t, \bar{a}_t)\,, \label{eq:vder lag t dep}
\end{align}
and similarly for ${\delta \ell^{\Xi}}/{\delta \bar{a}_t}$, where the $\Omega$ dependence is suppressed as before. From the definition of $\ell^{\Xi}$, we have the equivalence of the Lagrangians $\ell(u, a) = \ell^{\Xi}(\bar{u}, \bar{a})$, evaluated on the variables $\bar{u}$, $u$, $\bar{a}$ and $a$, whose relations are given in equation \eqref{eqn:g u relation}.
From this relation, we have the $\epsilon$-derivatives
\begin{align}
\begin{split}
    \frac{d}{d\epsilon}\Big|_{\epsilon = 0}\ell^{\Xi}(\bar{u}_{t,\epsilon}, \bar{a}_{t,\epsilon}) &= \scp{\frac{\delta \ell^{\Xi}}{\delta \bar{u}_{t}}}{\frac{d}{d\epsilon}\Big|_{\epsilon = 0} \bar{u}_{t,\epsilon}} + \scp{\frac{\delta \ell^{\Xi}}{\delta \bar{a}_{t}}}{\frac{d}{d\epsilon}\Big|_{\epsilon = 0} \bar{a}_{t,\epsilon}}\,,\\
    \frac{d}{d\epsilon}\Big|_{\epsilon = 0}\ell(u_{t,\epsilon}, a_{t,\epsilon}) &= 
    \scp{\frac{\delta \ell}{\delta u_{t}}}{\frac{d}{d\epsilon}\Big|_{\epsilon = 0} u_{t,\epsilon}} + \scp{\frac{\delta \ell}{\delta a_{t}}}{\frac{d}{d\epsilon}\Big|_{\epsilon = 0} a_{t,\epsilon}} \\
    &= \scp{\frac{\delta \ell}{\delta u_{t}}}{\Ad_{\Xi_t}\left( \frac{d}{d\epsilon} \Big|_{\epsilon = 0}\bar{u}_{t,\epsilon}\right)} + \scp{\frac{\delta \ell}{\delta a_{t}}}{\left(\frac{d}{d\epsilon}\Big|_{\epsilon = 0} \bar{a}_{t,\epsilon}\right)\Xi_t}\\
    &= \scp{\Ad_{\Xi_t}^*\frac{\delta \ell}{\delta u_{t}}}{\frac{d}{d\epsilon} \Big|_{\epsilon = 0}\bar{u}_{t,\epsilon}} + \scp{\Xi_t^*\frac{\delta \ell}{\delta a_{t}}}{\frac{d}{d\epsilon}\Big|_{\epsilon = 0} \bar{a}_{t,\epsilon}}\,.\label{Xichainrulevariation}     
\end{split}
\end{align}
and we conclude,
\begin{align*}
    \frac{\delta \ell^{\Xi}}{\delta \bar{u}_{t}} = \Ad^*_{\Xi_t}\frac{\delta \ell}{\delta u_{t}}\,, \qquad \frac{\delta \ell^{\Xi}}{\delta \bar{a}_{t}} = \Xi^*_t \frac{\delta \ell}{\delta a_{t}} := \frac{\delta \ell}{\delta a_{t}}\Xi_t\,.
\end{align*}
Thus, the random-coefficient Euler--Poincar\'e equation \eqref{eq:EPA mean} can be cast into the standard Euler--Poincar\'e form with random, time-dependent Lagrangian $\ell^{\Xi}$,
\begin{align}
     \dd \frac{\delta \ell^{\Xi}}{\delta \bar{u}_{t}} + \mcal{L}_{\bar{u}}\frac{\delta \ell^{\Xi}}{\delta \bar{u}_{t}}\, \dt = \frac{\delta \ell^{\Xi}}{\delta \bar{a}_{t}} \diamond \bar{a}_{t}\,\dt\,.
\label{eq:EPA mean 2}
\end{align}
which is the Euler--Poincar\'e equation for the mean velocity vector field $\bar{u}_{t}$.

To express \eqref{eq:EPA mean 2} in terms of the fluid variables $u$ and $a$, we first state a prerequisite Lemma.
\begin{lemma}[$\Ad^*$ is natural under $\diamond$] \label{prop:Ad^* natural}
Let $V$ be a vector space and $V^*$ be its dual under a duality pairing $\scp{\cdot}{\cdot}$. Assume that $\operatorname{Diff}(\mcal{D})$ has a right representation on $V$ and an induced right representation on $V^*$, both of which are denoted by concatenation. Then, let $\phi \in \operatorname{Diff}(\mcal{D})$, $ b\in V$ and $a\in V^*$ we have 
\begin{align}
    \Ad^*_{\phi} \left(b\diamond a\right) = \left(b \phi\right)\diamond \left(a \phi\right)\,. \label{eq:Ad^* identity}
\end{align}
\end{lemma}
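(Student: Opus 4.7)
The plan is to verify the identity by pairing both sides against an arbitrary test vector field $u\in\mathfrak{X}(\mcal{D})$ and unfolding the definitions of $\Ad^*$ and $\diamond$ until both sides collapse to the same expression.

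Starting on the left-hand side, the defining property of $\Ad^*$ gives
\[
\scp{u}{\Ad^*_g(b\diamond a)} \;=\; \scp{\Ad_g u}{b\diamond a} \;=\; \scp{g_* u}{b\diamond a},
\]
and the defining property of $\diamond$ (namely $\scp{-b\diamond a}{v}=\scp{b}{\mathcal{L}_v a}$) rewrites this as $-\scp{b}{\mathcal{L}_{g_* u} a}$. For the right-hand side, a direct application of the same defining property of $\diamond$ yields $\scp{u}{(bg)\diamond(ag)} = -\scp{bg}{\mathcal{L}_u(ag)}$. It remains to match these two expressions.

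The matching step rests on two standard facts. First, the naturality of the Lie derivative under pullback: since $ag = g^*a$ for $a\in V^*\subset\mathfrak{T}(\mcal{D})$, one has $\mathcal{L}_u(g^* a) = g^*(\mathcal{L}_{g_* u} a)$, which in the concatenation notation reads $\mathcal{L}_u(ag) = (\mathcal{L}_{g_* u} a)\,g$. Second, the invariance of the duality pairing under the simultaneous right $\operatorname{Diff}(\mcal{D})$-action on $V$ and $V^*$, namely $\scp{bg}{cg}=\scp{b}{c}$ for $b\in V$, $c\in V^*$, which comes from the fact that both actions are pullbacks and the pairing is the integration pairing on $\mcal{D}$. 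Applying these sequentially to $-\scp{bg}{\mathcal{L}_u(ag)}$ produces $-\scp{bg}{(\mathcal{L}_{g_* u}a)g} = -\scp{b}{\mathcal{L}_{g_* u}a}$, which agrees with the left-hand side.

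Conceptually, the lemma is an equivariance statement for the cotangent-lift momentum map $\diamond$ associated with the right $\operatorname{Diff}(\mcal{D})$-action, and the only nontrivial input is the compatibility of the Lie derivative with pullback. The main potential pitfall is bookkeeping around the distinction between the group action on $V$ and $V^*$ versus the $\ad$/$\Ad$ actions on $\mathfrak{X}(\mcal{D})$ and $\mathfrak{X}^*(\mcal{D})$, but once the invariance $\scp{bg}{cg}=\scp{b}{c}$ is established for the relevant integration pairing, the manipulations are purely formal.
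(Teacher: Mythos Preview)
Your proof is correct and follows essentially the same route as the paper: pair both sides against an arbitrary test vector field, invoke the naturality of the Lie derivative under pullback $\mathcal{L}_u(g^*a)=g^*(\mathcal{L}_{g_*u}a)$, and use the invariance of the integration pairing under simultaneous pullback. The only minor difference is that the paper explicitly derives the naturality identity via Cartan's magic formula (which, strictly speaking, is stated for differential forms rather than general tensor densities), whereas you cite it directly as a standard fact; in that sense your invocation is arguably cleaner for the stated generality $V^*\subset\mathfrak{T}(\mathcal{D})$.
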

\begin{proof}
    Let $w \in \mathfrak{X}(\mcal{D})$ be arbitrary, we first have the result
    \begin{align}
        \mcal{L}_{\Ad_{\phi} w} a = \phi_{*}\left(\mcal{L}_w \phi^*a\right) = \left(\mcal{L}_w a\phi\right)\phi^{-1} \label{eq:Lie deriv identity}
    \end{align}
    This can be shown through the Cartan form of the Lie derivative
    \begin{align*}
        \mcal{L}_{\Ad_{\phi} w} a = \mb{d}\left(\phi_* w \intprod a\right) + \phi_* w\intprod \mb{d}a = \phi_*\left(\mb{d}\left(w \intprod \phi^*a\right) + w \intprod \mb{d} \phi^*a\right) = \phi_* \left(\mcal{L}_w \phi^*a\right)\,.
    \end{align*}
    For an arbitrary Lie group $G$ and their representation $\rho$ on arbitrary vector spaces $V$ and $V^*$, we have the analogous results expressed in through the representation and its induced representation by the Lie algebra. Using \eqref{eq:Lie deriv identity}, we have
    \begin{align*}
        \begin{split}
            \scp{\Ad^*_{\phi}\left(b\diamond a\right)}{w} = \scp{b\diamond a}{\Ad_{\phi} w} &= \scp{-b}{\mcal{L}_{\left(\Ad_{\phi} w\right)}a} \\
            &= \scp{-b}{\left(\mcal{L}_{w} \,a\phi\right)\phi^{-1}} = \scp{-b\phi}{\mcal{L}_w \,a\phi} = \scp{\left(b\phi\right)\diamond\left(a\phi\right)}{w}\,.
        \end{split}
    \end{align*}
    Since $w$ is arbitrary, we obtain \eqref{eq:Ad^* identity}. 
\end{proof}
\begin{proposition} \label{prop:random to stoch EPA}
Let $\Xi_t$ be the solution to the Stratonovich SDE given in \eqref{eqn:d Xi d gbar def}. Then, the random-coefficient Euler--Poincar\'e equation \eqref{eq:EPA mean} is equivalent to the following Stratonovich stochastic Euler--Poincar\'e equation
\begin{align}  
        \dd \frac{\delta \ell}{\delta u_{t}} + \ad^*_{u_{t}} \frac{\delta \ell}{\delta u_{t}}\,\dt + \sum\limits_{k=1}^K\ad^*_{\xi_k} \frac{\delta \ell}{\delta u_{t}} \circ \dW^k_t + \frac{1}{2}\sum\limits_{k,l=1}^K\ad^*_{\Gamma^{kl}\qv{\xi_k}{\xi_l}}\dede{\ell}{u_{t}}\,\dt = \frac{\delta \ell}{\delta a_{t}} \diamond a_{t}\,\dt\,.
    \label{eq:EPA full}
    \end{align}
\end{proposition}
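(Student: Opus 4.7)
The plan is to mirror the strategy of Proposition \ref{prop:random to stoch pure EP} which handled the pure Euler-Poincar\'e case, and then use Lemma \ref{prop:Ad^* natural} to deal with the new diamond term on the right-hand side. Concretely, I begin from the random-coefficient equation \eqref{eq:EPA mean} with momentum $m_t := \dede{\ell}{u_t}$ and rewrite it using the $\Ad^*$ notation as
\[
\dd(\Ad^*_{\Xi_t} m_t) + \ad^*_{\ob{u}_t}(\Ad^*_{\Xi_t} m_t)\,\dt = \left(\Xi_t^* \dede{\ell}{a_t}\right) \diamond \ob{a}_t\,\dt\,.
\]
Since $\Xi_t$ satisfies the Stratonovich SDE \eqref{eqn:d Xi d gbar def}, applying the Kunita-Ito-Wentzel formula for one-form densities \cite{de2020implications} exactly as in \eqref{eq:stoch lie chain} expands the left-hand differential as
\[
\dd(\Ad^*_{\Xi_t} m_t) = \Ad^*_{\Xi_t}\!\left(\dd m_t + \sum_{k=1}^K \ad^*_{\xi_k} m_t \circ \dW^k_t + \tfrac12\sum_{k,l=1}^K \ad^*_{\Gamma^{kl}[\xi_k,\xi_l]} m_t\,\dt\right)\,.
\]

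For the transport term, I invoke the composite transport identity \eqref{eq:comp transport}. Recalling $u_t = \Ad_{\Xi_t} \ob{u}_t$, i.e. $\ob{u}_t = \Ad_{\Xi_t^{-1}} u_t$, the second identity in \eqref{eq:comp transport} gives
\[
\ad^*_{\ob{u}_t}(\Ad^*_{\Xi_t} m_t) = \ad^*_{\Ad_{\Xi_t^{-1}} u_t}(\Ad^*_{\Xi_t} m_t) = \Ad^*_{\Xi_t} \ad^*_{u_t} m_t\,.
\]

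For the right-hand side, Lemma \ref{prop:Ad^* natural} applied with $\phi = \Xi_t$, $b = \dede{\ell}{a_t}$ and $a = a_t$, together with the relation $\ob{a}_t = a_t \Xi_t$ (which follows from $\ob{a}_t = a_0 \ob{g}_t^{-1}$ and $a_t = a_0 g_t^{-1} = \ob{a}_t \Xi_t^{-1}$), yields
\[
\left(\Xi_t^* \dede{\ell}{a_t}\right) \diamond \ob{a}_t = \left(\dede{\ell}{a_t}\Xi_t\right) \diamond (a_t \Xi_t) = \Ad^*_{\Xi_t}\!\left(\dede{\ell}{a_t} \diamond a_t\right)\,.
\]

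Collecting these three expansions, every term in the equation is of the form $\Ad^*_{\Xi_t}(\cdot)$. Since $\Ad^*_{\Xi_t}$ is a linear isomorphism on $\fkX^*(\clD)$ with inverse $\Ad^*_{\Xi_t^{-1}}$, I apply $\Ad^*_{\Xi_t^{-1}}$ to both sides and the claimed Stratonovich SPDE \eqref{eq:EPA full} drops out. The reverse implication follows by running the same chain of identities backward, again using the fact that $\Ad^*_{\Xi_t}$ is invertible. The main subtlety, as in the pure case, is the justification of the stochastic Lie-chain rule in this semimartingale setting, which follows from the Ito-Wentzel formula referenced above; the principal new ingredient beyond Proposition \ref{prop:random to stoch pure EP} is Lemma \ref{prop:Ad^* natural}, which is precisely designed to intertwine the pullback/push-forward with the $\diamond$ momentum map.
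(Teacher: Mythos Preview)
Your proof is correct and follows essentially the same approach as the paper: expand $\dd(\Ad^*_{\Xi_t} m_t)$ via the stochastic Lie-chain rule \eqref{eq:stoch lie chain}, rewrite $\ad^*_{\ob{u}_t}\Ad^*_{\Xi_t} m_t$ via the composite transport identity \eqref{eq:comp transport}, convert the right-hand side using Lemma \ref{prop:Ad^* natural} and $\ob{a}_t = a_t\Xi_t$, and then cancel the common $\Ad^*_{\Xi_t}$. The paper compresses the first two steps by simply citing Proposition \ref{prop:random to stoch pure EP}, but the content is identical.
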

\begin{proof}
Applying the Kunita-Ito-Wentzel formula for $k$-forms \cite{de2020implications} to compute the time derivative using the stochastic Lie-chain rule (see \cite{crisan2022variational} for the relevant rough path generalisation), we have,
\begin{align*}
\begin{split}
    \dd \left(\Xi_t^* \frac{\delta \ell}{\delta u_{t}}\right) &= \Xi_t^*\left(\dd \dede{\ell}{u_{t}} + \mathcal{L}_{\dd \Xi_t \Xi_t^{-1}}\dede{\ell}{u_{t}}\right) \\
    &= \Xi_t^*\left(\dd \dede{\ell}{u_{t}} + \sum\limits_{k=1}^K \mathcal{L}_{\xi^k}\dede{\ell}{u_{t}} \circ \dW^k_t + \frac{1}{2}\sum\limits_{k,l=1}^K\mcal{L}_{\Gamma^{kl}\qv{\xi_k}{\xi_l}}\dede{\ell}{u_{t}}\,\dt\right)\,.
\end{split}
\end{align*}
which has the equivalent form in the $\Ad^*$, $\ad^*$ notation as
\begin{align}
    \dd \left(\Ad^*_{\Xi_t} \frac{\delta \ell}{\delta u_{t}}\right) = \Ad^*_{\Xi_t}\left(\dd \dede{\ell}{u_{t}} + \sum\limits_{k=1}^K \ad^*_{\xi^k}\dede{\ell}{u_{t}} \circ \dW^k_t + \frac{1}{2}\sum\limits_{k,l=1}^K \ad^*_{\Gamma^{kl}\qv{\xi_k}{\xi_l}}\dede{\ell}{u_{t}}\,\dt\right)\,. \label{eq:stoch lie chain}
\end{align}
The Lie algebra automorphism property of $\Ad$ and its corresponding identity for the dual operator $\Ad^*$ state that for all $\phi \in \operatorname{Diff}(\mcal{D})$, we have 
\begin{align}
\begin{split}
    \ad_{\Ad_{\phi}u}\Ad_{\phi}v \equiv \Ad_{\phi} \ad_{u}v \,, \quad \ad^*_{\Ad_{\phi^{-1}}u} \Ad^*_{\phi}m \equiv \Ad^*_{\phi}\ad^*_{u}m\,,\quad \forall u, v \in \mathfrak{X}(\mcal{D})\,, \quad m \in \mathfrak{X}^*(\mcal{D})\,.
\end{split}\label{eq:comp transport}
\end{align}
which is sometimes known as the composite transport formula, see \cite{H2019}. Combining equations \eqref{eq:stoch lie chain}, \eqref{eq:comp transport}, we have from equation \eqref{eq:EPA mean} that
\begin{align}
    \Ad^*_{\Xi_t}\left(\dd \frac{\delta \ell}{\delta u_{t}} + \ad^*_{u_{t}} \frac{\delta \ell}{\delta u_{t}}\,\dt + \sum\limits_{k=1}^K\ad^*_{\xi_k} \frac{\delta \ell}{\delta u_{t}} \circ \dW^k_t + \frac{1}{2}\sum\limits_{k,l=1}^K\ad^*_{\Gamma^{kl}\qv{\xi_k}{\xi_l}}\dede{\ell}{u_{t}}\,\dt\right) = \left(\frac{\delta \ell}{\delta a_{t}} \Xi_t\right)\diamond \bar{a}_{t}\,\dt\,. \label{eq:EPA calc}
\end{align}
Noting that $\bar{a}_{t} = a_{t} \Xi_t$ and using Lemma \ref{prop:Ad^* natural}, we have 
\begin{align*}
    \left(\frac{\delta \ell}{\delta a_{t}} \Xi_t\right)\diamond \bar{a}_{t} = \Ad_{\Xi_t}^*\left(\frac{\delta \ell}{\delta a_{t}}\diamond a_{t}\right)\,.
\end{align*}
Applying the $\Ad^*_{\Xi_t^{-1}}$ operation to both sides of equation \eqref{eq:EPA calc}, we obtain the stochastic Euler--Poincar\'e equation \eqref{eq:EPA full}.
\end{proof}

The equivalence of the Euler--Poincar\'e equations, \eqref{eq:EPA mean 2} and \eqref{eq:EPA full} implies that stochastic Euler--Poincar\'e equations can arise from random time-dependent Lagrangians. As we will see in the example of incompressible Euler equations in Section \ref{sec:examples}, stochasticity can arise from Lagrangian consisting of time-dependent, random, pullback metrics.

We remark that the stochastic Euler--Poincar\'e equation \eqref{eq:EPA full} have been derived previously in e.g., \cite{Holm2015, DHP2023, crisan2022variational} using rough and stochastic variational principles. In the case that $\Gamma^{kl} = 0$, equation \eqref{eq:EPA full} recovers the stochastic Euler--Poincar\'e equations obtained from the SALT approach \cite{Holm2015}. More specifically, \cite{DHP2023} obtained the stochastic Euler--Poincar\'e equation \eqref{eq:EPA full} using the following stochastic Hamilton-Pontryagin variational principle defined on the Hamilton–Pontryagin bundle $HP = \operatorname{Diff}(\mcal{D})\times \left(\mathfrak{X}(\mcal{D})\oplus \mathfrak{X}^*(\mcal{D})\right)$. For $(g,u,m):\Omega\times[0,T]\rightarrow HP$, the variational principle is given by 
\begin{align*}
    0 = \delta S[g, u, m] &= \delta \int_{0}^{T} \ell(u_t)\,\dt + \scp{m_t}{\dd g_tg_t^{-1} - u_t\,\dt - \sum\limits_{k=1}^K\xi_k \circ \dW^k_t - \frac{1}{2}\sum\limits_{k,l=1}^K\Gamma^{kl}\qv{\xi_k}{\xi_l}\,\dt}\,, 
\end{align*}
where the variations $\delta u_t$, $\delta m_t$ are arbitrary and vanishing at $t=0, T$ and the variations $\delta g_t$ are constructed without using the group definition, $g_t = \Xi_t \circ \bar{g}_t$. That is, $\delta g_t = \frac{d}{d\epsilon}\big|_{\epsilon=0} e_{\epsilon,t } \circ g_t$. This class of variations were considered in e.g., \cite{ST2023, crisan2022variational}. This is in contrast to the Hamilton–Pontryagin variant of the variational principle \eqref{eq:EPA action stoch} that gives \eqref{eq:EPA full} as its stationary condition,
\begin{align*}
    0 = \delta S[\bar{g}, u, m] &= \delta\int_{0}^{T} \ell\left(u_t \right) + \scp{m_t}{\Ad_{\Xi_t}\dot{\bar{g}}_t\bar{g}^{-1}_t - u_t}\,dt \,,
\end{align*}
where the variations are taken over the mean diffeomorphism $\bar{g} \in \bar{\mathscr{S}}(\operatorname{Diff}(\mcal{D}))$ whilst leaving $\Xi_t$ unchanged using $\delta \bar{g}_t = \frac{d}{d\epsilon}\big|_{\epsilon=0} e_{\epsilon,t } \circ \bar{g}_t$. 

\begin{remark}[Continuous path dependence of the slow vector field]\label{rem:why_gbar_is_random} 
Given $\varepsilon > 0$, the variational principles in this section can be applied to derive closures of the maps $g^\varepsilon, \bar{g}^\varepsilon$ discussed in Section \ref{sec:homog}. 

Consider the case that the Euler--Poincar\'e equation is driven by a path $\bB^\varepsilon$ approximating Brownian motion and is taken to be well-posed. Using rough stability techniques, we can deduce a Wong-Zakai principle holds for these equations that necessarily implies the convergence of $g^\varepsilon$ to $g$ solving \eqref{eq:EPA full}. One may then ask if such a property holds for map $\bar{g}^\varepsilon$, and a Wong-Zakai principle to approximate the solution of \eqref{eq:EPA mean 2}. 

Suppose that for every $\alpha \in (\frac13, \frac12]$ rough path $\bZ\in \clC_g^{\alpha}([0,T]; \mathbb{R}^K)$, there exists a classical solution of the rough partial differential equation
\begin{align}\label{eq:RALT EP}
    \dd\frac{\delta \ell}{\delta u_{t}} + \ad^*_{u_{t}} \frac{\delta \ell}{\delta u_{t}}\,\dt + \sum\limits_{k=1}^K\ad^*_{\xi_k} \frac{\delta \ell}{\delta u_{t}}\dd \bZ^k_t= 0\,, 
\end{align}
that is continuous as a map on $\clC_g^{\alpha}([0,T]; \mathbb{R}^K)$ taking values in a space of $\alpha$-H\"older curves of vector fields with $C^n$ regularity, $n \in \bbN$ (see e.g., \cite[Theorem 3.7]{CRISAN2022109632}). This implies we can write solutions in the form $u_t=U_t( \cdot , \bZ)$ for a map $U \in C(\clC_g^\alpha ([0,T]; \bbR^K ) ; C^{\alpha}([0,T]; \fkX_{C^n}(\bbT^d)) )$. In Section \ref{sec:homog} the map $\Xi$ was constructed from a rough flow map $\Xi_t(\cdot) := \Phi_t(\cdot, \bZ)$ with $\Phi \in \operatorname{Lip}_{loc}\left(\clC_g^{\alpha}([0,T]; \mathbb{R}^K);C^\alpha( [0,T] ;\operatorname{Diff}_{C^{n-1}}(\bbT^d))\right)$ in the proof of Theorem \ref{thm:fast_flow_covergence}. 

Given the expression $\bar{u}_t = \Ad_{\Xi_t^{-1}} u_t$, it is natural to ask whether $\bar{u}$ inherits this continuous path dependence. This motivates the Assumption \ref{asm:u_bar_eps} that was invoked in Section \ref{sec:homog}. For the case of finite dimensional (matrix) Lie group $G$ and its Lie algebra $\mathfrak{g}$, the analogue of \eqref{eq:RALT EP} (with convex, smooth Lagrangian $\ell$) the adjoint representation $\Ad$ is smooth and we have the continuous map,
\begin{equation*}
\begin{aligned}
\Ad_{(\cdot)^{-1}} : C^\alpha([0,T]; G) \times C\left([0,T]; \mathfrak{g} \right) &\rightarrow C\left([0,T]; \mathfrak{g}\right) \\
(\Xi, u) &\mapsto \Ad_{\Xi^{-1}} u = \Xi^{-1} u \Xi =: \bar{u}\, .
\end{aligned}
\end{equation*}
Since $\bZ \mapsto (\Xi, u)$ is continuous, we obtain $\bar{u}_t = V_t(\cdot, \bZ_t )$ for $V \in C(\clC_g^\alpha ([0,T]; \bbR^K ) ; C([0,T]; \fkg ) )$ with $V = \Xi^{-1}U \Xi$.
In the case $G = \operatorname{Diff}_{C^n}(\mc{D})$, the map $\Ad_{\Xi^{-1}} = \Xi^*$ is a bounded linear map (and thus smooth).  We conjecture the continuity of these maps for the $\alpha$-H\"older continuous in time analogues, 
\begin{equation*}
\begin{aligned}
\Ad_{(\cdot)^{-1}} : C^\alpha([0,T]; \operatorname{Diff}_{C^n}(\mcal{D})) \times C^\alpha\left([0,T]; \mathfrak{X}_{C^n}(\clD)\right) &\rightarrow C^\alpha\left([0,T]; \mathfrak{X}_{C^{n-1}}(\clD)\right) \\
(\Xi, u) &\mapsto \Xi^* u \, ,
\end{aligned}
\end{equation*}
For Euler's equations, we verify this conjecture by constructing an explicit SPDE for $\bar{u}$, allowing us to invoke rough stability (see \eqref{eq:ubar dynamics} in Section \ref{sec:examples}). It follows that in the case of Euler's equation we have continuous path dependence in both $\bar{u}, u$. 
\end{remark}

\begin{remark}
    We remark that through the definitions $\Xi^*_t a_{t} = \overline{a}_{t}$ and $\Xi^*_t u_{t} = \overline{u}_{t}$, it is not typically the case that $\Xi^*_t \frac{\delta \ell}{ \delta u_t} = \frac{\delta \ell}{ \delta \overline{u}_t}$ or $\Xi^*_t \frac{\delta \ell}{ \delta a_t} = \frac{\delta \ell}{ \delta \overline{a}_t}$. Nevertheless, it is often that more direct expressions in terms of the mean quantities $\overline{u}_t, \overline{a}_t$ can be derived for specific choices of Lagrangian. For example, consider the Euler-Boussinesq Lagrangian
    $$\ell(\Xi_{t*} \bar{u}_t, \Xi_{t*}\bar{D}_t, \Xi_{t*}\bar{b}_t) = \int_{\mcal{D}} \frac12 |\Xi_{t*} \bar{u}_t|^2\Xi_{t*}\bar{D} + \mathrm{g}\Xi_{t*}\bar{D} \Xi_{t*}\bar{b}_t \hat{z} + \dd p(\bar{D} - 1) d\mu_{\mathbf{g}}$$
    where $b_t = \Xi_{t*}\bar{b}_t$ is the advected buoyancy scalar, $\mathrm{g}$ the gravity constant and $\hat{z}$ the constant $z$ axis unit vector. One has $\frac{\delta \ell}{\delta b_t} = \mathrm{g} \Xi_{t*}\bar{D}_t \hat{z}$ and thus $\Xi_t^* \frac{\delta \ell}{\delta b_t} = \mathrm{g} \bar{D}_t \hat{z}$, this calculation is in agreement with \cite[Sec. 3.3]{Holm2002}.
\end{remark}

\subsubsection{Kelvin-Noether circulation theorem}
Associated with the random-coefficient Euler--Poincar\'e equation \eqref{eq:EPA mean} and the stochastic Euler--Poincar\'e equation \eqref{eq:EPA full} are the Kelvin-Noether circulation theorems, which we will state next. 
\begin{theorem}
    Let $(\bar{u}_t, \bar{a}_{t})$ be the solution to the random-coefficient Euler--Poincar\'e equation \eqref{eq:EPA mean} and let $(u_{t}, a_{t})$ be the solution to the stochastic Euler--Poincar\'e equation \eqref{eq:EPA full} for $t \in [0, T]$. Let $g_t, \bar{g}_t \in \operatorname{Diff}(\mcal{D})$ satisfy the flow equations defined in \eqref{eqn:d Xi d gbar def} and \eqref{eqn:homoged diffeo}, respectively. Assume that there exists $\bar{D}, D: \Omega\times [0,T] \rightarrow \Lambda^d(\clD)$ that is defined by $\bar{D}_t := D_0 \bar{g}_t^{-1} = \bar{g}_{t*} D_0$ and $D_t := D_0 g_t^{-1} = g_{t*} D_0$ for an initial non zero density $D_0 \in \Lambda^d(\mcal{D})$. Then, we have the following equivalent Kelvin-Noether circulation theorems,
    \begin{align}
        \dd \oint_{\bar{c}_t} \frac{1}{\bar{D}_t} \left(\Xi^*_t\frac{\delta \ell}{\delta u_{t}}\right) = \oint_{\bar{c}_t} \frac{1}{\bar{D}_t}\left(\Xi^*_t\frac{\delta \ell}{\delta a_{t}}\right) \diamond \bar{a}_{t}\,\dt\,,\label{eqn:mean kelvin}
    \end{align}
    and 
    \begin{align}
        \dd \oint_{c_t} \frac{1}{D_t} \frac{\delta \ell}{\delta u_{t}} = \oint_{c_t} \frac{1}{D_t} \frac{\delta \ell}{\delta a_{t}} \diamond a_{t}\,\dt\,,\label{eqn:full kelvin}
    \end{align}
    where $\bar{c}_t := \bar{g}_t(c_0)$ and $c_t := g_t(c_0)$ denote the left action of $\bar{g}_t$ and $g_t$ on the embedded one-dimensional submanifold $c_0$, the initial embedding of a fluid material loop.
\end{theorem}
\begin{remark}\label{rmk:division by density}
    For $m = \alpha \otimes \rho \in \mathfrak{X}^*(\mcal{D})$ where $\alpha \in \Lambda^1(\mcal{D})$ and $\rho \in \Lambda^d(\mcal{D})$, the ``division by density'' operation appearing in equations \eqref{eqn:mean kelvin} and \eqref{eqn:full kelvin} is interpreted as
    \begin{align*}
        \frac{1}{D}m = \frac{1}{D}\left(\alpha\otimes \rho\right) := \frac{\star \rho}{\star D}\alpha \in \Lambda^1(\mcal{D})\,,
    \end{align*}
    where $\star$ denote the Hodge dual, $\star: \Lambda^{d-k}(\mcal{D}) \rightarrow \Lambda^k(\mcal{D})$ for $k = 0,\ldots,d$.
\end{remark}
\begin{proof}
    From the definition of $\bar{D}_t$ and $D_t$, they are the solution to the advection equations
    \begin{align*}
        \dd \bar{D}_t + \mcal{L}_{\bar{u}_{t}}\bar{D}_t\,\dt = 0\,,\qquad 
        \dd D_t + \mcal{L}_{u_{t}} D_t\,\dt + \sum\limits_{k=1}^K\mcal{L}_{\xi_k} D_t\circ \dW^k_t + \frac{1}{2}\sum\limits_{k,l=1}^K\mcal{L}_{\Gamma^{kl}\qv{\xi_k}{\xi_l}}D_t\,\dt = 0\,.
    \end{align*}
    Let $\alpha:\Omega \times[0,T] \rightarrow \Lambda^1(\clD)$ be such that $\frac{\delta \ell}{\delta u_t} = \alpha_t\otimes D_t$. This implies the similar decomposition of $\Xi^*_t \frac{\delta \ell}{\delta u_t} = \Xi^*_t\alpha_t \otimes \Xi^*_t D_t =: \bar{\alpha}_t \otimes \bar{D}_t$ such that $\alpha = \frac{1}{D_t}\frac{\delta \ell}{\delta u_t}$ and $\bar{\alpha}_t = \frac{1}{\bar{D}_t}\Xi^*_t\frac{\delta \ell}{\delta u_t}$, where the operation of division by density is understood in the sense of Remark \ref{rmk:division by density}.
    We first prove equation \eqref{eqn:mean kelvin} holds. Equation \eqref{eq:EPA mean} implies
    \begin{align*}
    \begin{split}
        &\left(\dd + \ad^*_{\bar{u}_t\,\dt}\right)\left(\Xi^*_t \frac{\delta \ell}{\delta u_t}\right) = \left(\dd + \mcal{L}_{\bar{u}_t\,\dt}\right)\bar{\alpha}_t \otimes \bar{D}_t + \bar{\alpha}_t \otimes \left(\dd + \mcal{L}_{\bar{u}_{t}}\right)\bar{D}_t = \left(\dd + \mcal{L}_{\bar{u}_t\,\dt}\right)\bar{\alpha}_t \otimes \bar{D}_t = \Xi^*_t\frac{\delta \ell}{\delta a_t}\diamond \bar{a}_t\,\dt\,, \\
        & \qquad \Longrightarrow \qquad \left(\dd + \mcal{L}_{\bar{u}_t\,\dt}\right)\bar{\alpha}_t = \frac{1}{\bar{D}}\left(\Xi^*_t\frac{\delta \ell}{\delta a_t}\diamond \bar{a}_t\right)\,\dt\,.
    \end{split}
    \end{align*}
    In the first line, the first equality uses the Lie-chain rule \eqref{eq:stoch lie chain} and the second equality uses the advection equation of $\bar{D}$ by $\bar{u}$. Then, we obtain 
    \begin{align*}
        \dd \oint_{\bar{g}_t c_0} \bar{\alpha}_t = \dd \oint_{c_0}\bar{g}_t^*\bar{\alpha}_t = \oint_{c_0}\bar{g}_t^*\left(\dd + \mcal{L}_{\bar{u}_t\,\dt}\right)\bar{\alpha}_t = \oint_{\bar{g}_tc_0} \left(\dd + \mcal{L}_{\bar{u}_t\,\dt}\right)\bar{\alpha}_t = \oint_{\bar{c}_t} \frac{1}{\bar{D}_t}\left(\Xi^*_t\frac{\delta \ell}{\delta a_t}\diamond \bar{a}_t\right)\,\dt\,.
    \end{align*}
    Repeating the same arguments for equation \eqref{eq:EPA full}, we obtain
    \begin{align*}
        \begin{split}
            \dd \alpha_t + \mcal{L}_{u_t} \alpha_t \,\dt + \sum\limits_{k=1}^K\mcal{L}_{\xi_k} \alpha\circ \dW^k_t + \frac{1}{2}\sum\limits_{k,l=1}^K\mcal{L}_{\qv{\xi_k}{\xi_l}} \alpha_t \,\dt = \frac{1}{D_t}\frac{\delta \ell}{\delta a_t}\diamond a_t\,\dt\,.
        \end{split}
    \end{align*}
    Direct computation of the stochastic differential of the loop integral yields equation \eqref{eqn:full kelvin} since
    \begin{align*}
        \dd \oint_{g_t c_0} \alpha_t = \dd \oint_{c_0}g_t^*\alpha_t = \oint_{c_0}g_t^*\left(\dd + \mcal{L}_{\dd g_t g_t^{-1}}\right){\alpha}_t = \oint_{g_t c_0} \left(\dd + \mcal{L}_{\dd g_t g_t^{-1}}\right)\alpha_t = \oint_{c_t} \frac{1}{D_t}\frac{\delta \ell}{\delta a_t}\diamond a_t \,\dt
    \end{align*}
    where the notation $\dd g_t g_t^{-1}$ defines the Eulerian vector field generate by $g_t$ in equation \eqref{eqn:homoged diffeo}. Since equations \eqref{eq:EPA mean} and \eqref{eq:EPA full} are equivalent by Proposition \ref{prop:random to stoch EPA}, we have the Kelvin-Noether circulation dynamics are equivalent.
\end{proof}

\subsubsection{Hamiltonian formulation}\label{subsec:ham}
On the Hamiltonian side, we define two equivalent Hamiltonians, $h: \mathfrak{X}(\clD)^* \times V^* \rightarrow \mathbb{R}$ and $h^{\Xi}:\Omega\times[0,T]\times \mathfrak{X}(\clD)^*\times V^*\rightarrow \mathbb{R}$ using the Legendre transform on the Lagrangians $\ell(u,a)$ and $\ell^{\Xi}(\bar{u},\bar{a})$ as
\begin{align}
    h^{\Xi}(\bar{m}_{t}, \bar{a}_{t}) := \scp{\bar{m}_{t}}{\bar{u}_{t}} - \ell^{\Xi}(\bar{u}_{t}, \bar{a}_{t})\,, \quad h(m_{t}, a_{t}) := \scp{m_{t}}{u_{t}} - \ell(u_{t}, a_{t})\,. \label{eq: ham defs}
\end{align}
Here, the momentums are related to the Lagrangians as
\begin{align*}
    \bar{m}_t = \frac{\delta \bar{\ell}}{\delta \bar{u}_t}\,,\quad  m_t = \frac{\delta \ell}{\delta u_t}\,.
\end{align*}
Working in the regular Lagrangian case, we additionally have the relations
\begin{align}
    \bar{u}_{t} = \dede{h^{\Xi}}{\bar{m}_{t}} \,,\quad  \dede{\ell^{\Xi}}{\bar{a}_{t}} = -\dede{h^{\Xi}}{\bar{a}_{t}}\,,\quad u_{t} = \dede{h}{m_{t}}\,, \quad  \dede{\ell}{a_{t}} = -\dede{h}{a_{t}}\,,
\end{align}
where the notations of the variational derivatives of the Hamiltonians $h$ and $h^{\Xi}$ are the analogously to the variational derivatives of the Lagrangians in equations \eqref{eq:vder lag} and \eqref{eq:vder lag t dep}, respectively.
When $u_{t} = \Xi_{t*} \bar{u}_{t} = \Ad_{\Xi_t}\bar{u}_t$ and $a_t = \bar{a}_t\Xi_t = \Xi_t^* \bar{a}_t$, we have the equivalence of the Hamiltonians 
\begin{align*}
    h^{\Xi}(\bar{m}, \bar{a}) = h(m, a)\,,
\end{align*}
by the definition of $\ell^{\Xi}$ in terms of $\ell$ and their variational derivatives given in equation \eqref{Xichainrulevariation} that defines the momentums $\bar{m}$ and $m$. 

The Euler--Poincar\'e equations given in \eqref{eq:EPA mean} and \eqref{eq:EPA full} can be cast into Lie-Poisson form where it can be written in the equivalent Lie-Poisson matrix form
\begin{align}
    \dd \begin{pmatrix}
        \bar{m}_{t} \\ \bar{a}_{t}
    \end{pmatrix}
     = -
     \begin{pmatrix}
         \ad^*_{\square} \bar{m}_{t} & \square \diamond \bar{a}_{t} \\
         \mathcal{L}_{\square}\bar{a}_{t} & 0
     \end{pmatrix}
     \begin{pmatrix}
         {\delta h^{\Xi}}/{\delta \bar{m}_{t}} \\
         {\delta h^{\Xi}}/{\delta a_{t}} 
     \end{pmatrix}
     \dt\,, \label{LPbracket1}
\end{align}
and 
\begin{align}
\begin{split}
    \dd \begin{pmatrix}
        m_{t} \\ a_{t}
    \end{pmatrix}
     = -
    \begin{pmatrix}
        \ad^*_{\square} m_{t} & \square \diamond a_{t} \\
        \mathcal{L}_{\square} a_{t} & 0
    \end{pmatrix}
    \begin{pmatrix}
        {\delta h}/{\delta m_{t}}\\
        {\delta h}/{\delta a_{t}} 
    \end{pmatrix}
    \dt &- 
    \sum\limits_{k=1}^K
    \begin{pmatrix}
        \ad^*_{\square} m_{t} & \square \diamond a_{t} \\
        \mathcal{L}_{\square} a_{t} & 0
    \end{pmatrix}
    \begin{pmatrix}
        \xi_k \\
        0
    \end{pmatrix}
    \circ \dW^k_t\\
    & \hspace{-2em} - \frac{1}{2}\sum\limits_{k,l=1}^K
    \begin{pmatrix}
        \ad^*_{\square} m_{t} & \square \diamond a_{t} \\
        \mathcal{L}_{\square} a_{t} & 0
    \end{pmatrix}
    \begin{pmatrix}
        \Gamma^{kl}\qv{\xi_k}{\xi_l} \\
        0
    \end{pmatrix}\,\dt\,.
\end{split} \label{LPbracket2}
\end{align}
Here, the product between the $2\times 2$ matrix with bi-linear operator as entries and the vector of variational derivatives should be understood as matrix multiplication. The $\square$ symbol appearing in \eqref{LPbracket1} and \eqref{LPbracket2} denotes the argument of the bi-linear operator appearing in the matrix where the vector entries are inserted to in the matrix multiplication. 

The equations \eqref{LPbracket1} and \eqref{LPbracket2} are the Lie-Poisson equations on the semidirect product Lie co-algebra $\mathfrak{s}^* = \mathfrak{X}^*(\clD)\ltimes V^*$. Let $X = C^\infty( \mathfrak{s}^*, \mathbb{R})$. The semidirect product Lie-Poisson bracket $\{\cdot, \cdot\}: X\times X \rightarrow X$ is defined by 
\begin{align}
    \{f, g\}(m, a) := - \scp{m}{\qv{\frac{\delta f}{\delta m}}{\frac{\delta g}{\delta m}}} + \scp{a}{\mcal{L}_{\frac{\delta g}{\delta m}}\frac{\delta f}{\delta a} - \mcal{L}_{\frac{\delta f}{\delta m}}\frac{\delta g}{\delta a}}\,, \label{LPdef}
\end{align}
where $f, g\in X$ and  $(m, a) \in \mathfrak{X}(\mcal{D})\times V^*$.
Using the Lie-Poisson bracket defined above, the evolution of $f$ can be expressed as 
\begin{align*}
    \begin{split}
        \dd f(\bar{m}_{t},\bar{a}_{t}) &= -\{f, h^{\Xi}\}(\bar{m}_{t},\bar{a}_{t})\,\dt\,,\\
        \dd f(m_{t}, a_{t}) &= -\{f, h\}(m_{t}, a_{t})\,\dt - \sum\limits_{k=1}^K\{f, \mathfrak{h}_k\}(m_{t}, a_{t})\circ \dW^k_t -\{f, \hslash\}(m_{t}, a_{t})\,\dt\,,
    \end{split}
\end{align*}
where
\begin{align*}
\mathfrak{h}_k := \scp{m}{\xi_k}\,, \quad \hslash := \sum\limits_{k,l=1}^K\Gamma^{kl}\scp{m}{\qv{\xi_k}{\xi_l}}\,.    
\end{align*}
The above calculation implies the following corollary. A Casimir is any element $C\in X$ such that  $\{C, f\}=0$ for all $f\in X$. 
\begin{corollary} \label{casimirlemma}
For any Casimir function $C$, the quantities $C(m_{t}, a_{t})$ and $C(\bar{m}_{t},\bar{a}_{t})$ are conserved by the dynamics of $(m_{t}, a_{t})$ and $(\bar{m}_{t},\bar{a}_{t})$, respectively.
\end{corollary}

\subsubsection{Isometries and energy conservation} 
From the Lie-Poisson systems \eqref{LPbracket1} and \eqref{LPbracket2}, we have two definitions for the energy of the Lie-Poisson systems. The mean Hamiltonian $h^{\Xi}$ evaluated at the mean variables $(\bar{m}_t,\bar{a}_t)$ defines an random and time-dependent energy for the mean system. The Hamiltonian $h$ does not explicitly dependent on time, however, the non-conservative dynamics of $h(m_t,a_t)$ are due to the stochastic Lie-Poisson structure given in \eqref{LPbracket2}. When the conditions $a_t = \Xi^*_t \bar{a}_t$ and $\bar{m}_t = \Ad^*_{\Xi_t}m_t$ hold, we have that $h(m_t,a_t) = h^{\Xi}(t, \bar{m}_t,\bar{a}_t)$ for $t \in [0,T]$. Thus, we can directly compute the evolution of $h(m_t,a_t)$ under the flow of $(m_t, a_t)$ through $h^{\Xi}(\bar{m}_t, \bar{a}_t)$ to have
\begin{align} 
\begin{split}
\dd h(m_t, a_t) = \dd h^{\Xi}(\bar{m}_{t}, \bar{a}_{t}) &= \scp{\frac{\delta h^{\Xi}}{\delta \bar{m}_{t}}}{ \dd \bar{m}_{t}} + \scp{\frac{\delta h^{\Xi}}{\delta \bar{a}_{t}}}{ \dd \bar{a}_{t}} + \scp{ \frac{ \delta h^{\Xi} }{ \delta \Xi_t} }{\dd \Xi_t} \\
& = \scp{\bar{u}_t}{-\mcal{L}_{\bar{u}_t}\bar{m}_{t}\,\dt - \frac{\delta h^{\Xi}}{\delta \bar{a}_{t}}\diamond \bar{a}_{t}\,\dt} + \scp{\frac{\delta h^{\Xi}}{\delta \bar{a}_{t}}}{ - \mcal{L}_{\bar{u}}\bar{a}_{t}\,\dt} + \scp{ \frac{ \delta h^{\Xi} }{ \delta \Xi_t} }{\dd \Xi_t} \\
& = \scp{ \frac{ \delta h^{\Xi} }{ \delta \Xi_t} }{\dd \Xi_t}\,.
\label{energyconservation}
\end{split}
\end{align} 
Note that the variational derivative with respect to diffeomorphisms $\frac{ \delta h^{\Xi} }{ \delta \Xi_t}$ is well defined in the Gateaux sense on tangent spaces \cite{HMR1998}. It follows we have the Hamiltonian $h(m_{t}, a_{t})$ is conserved in time only when the mean energy is $\Xi_t$ (and therefore time) independent.  
\begin{proposition}\label{energyconservethm}
    Let $\Xi$ be the stochastic flow satisfying \eqref{eqn:homoged diffeo}. Assume that the Lagrangian $\ell: \mathfrak{X}(\mcal{D})\times V^* \rightarrow \mathbb{R}$ is $\Xi$ invariant in the sense that for all $t\in [0,T]$, $\bar{u}\in \fkX(\mcal{D})$, and $\bar{a}\in V^*$,
    \begin{equation*}\label{eq:xi_invariance}
        \ell(\Ad_{\Xi_t} \bar{u},  \bar{a} \Xi_t^{-1}  ) = \ell ( \bar{u} ,  \bar{a} ) \,.
    \end{equation*}
    Then the corresponding stochastic Euler--Poincar\'e equations \eqref{eq:EPA full} are energy preserving.
    Furthermore, if  $\ell(\bar{u},\bar{a}) = \frac12\scp{\bar{u}}{\bar{u}}_{\mathfrak{X}(\mcal{D}) \times \mathfrak{X}^*(\mcal{D})}$ is the kinetic energy Lagrangian, then $\Xi$-invariance is equivalent to the vector fields $\{\xi_k\}_{k=0}^K$ being killing vector fields.    
\end{proposition}
\begin{proof}
    This follows from the Legendre transform,
    \begin{align*}
    \begin{split}
        h^{\Xi}(\bar{m}_{t}, \bar{a}_{t}) &:= \scp{ \bar{m}_{t} }{\bar{u}_t} - \ell(\Ad_{\Xi_t}\bar{u}_t, \bar{a}_{t}\Xi^{-1}_t)\\
        &= \scp{ \bar{m}_{t} }{\bar{u}_{t}} - \ell(\bar{u}_t, \bar{a}_{t}) = h(\bar{m}_{t}, \bar{a}_{t})\,. 
    \end{split}
    \end{align*} 
    Thus, $\scp{ \frac{ \delta h^{\Xi} }{ \delta \Xi_t} }{\dd \Xi_t} = 0$, by \eqref{energyconservation} the energy is conserved. For the kinetic energy of a fluid, one can show,
\begin{align}
    \int_{\mc{D}} \frac12 \mathbf{g}(u_t, u_t) D_t = \int_{\mc{D}} \frac12 \mathbf{g}(\Xi_{t *}\bar{u}_t, \Xi_{t *}\bar{u}_t) \Xi_{t *}\bar{D}_t = \int_{\mc{D}} \frac12 \Xi_{t*} \Big( (\Xi^*_t \mathbf{g}) (\bar{u}_t , \bar{u}_t) \bar{D}_t \Big) = \int_{\mc{D}} \frac12  (\Xi^*_t \mathbf{g}) (\bar{u}_t , \bar{u}_t) \bar{D}_t
\end{align}
    The requirement that $\Xi_t^* \mathbf{g} = \mathbf{g}$ is precisely that $\Xi_t$ acts as an isometry at all times, such flows are generated by Killing fields.
\end{proof}
In general, the vector field $\bar{u}_t$ appearing in the random-coefficient Euler--Poincar\'e equations \eqref{eq:EPA mean 2} and the stochastic Euler--Poincar\'e equations \eqref{eq:EPA full} are coupled via the stochastic momentum $\bar{m}_{t} = \Xi^*_t \frac{\delta \ell}{\delta u_t}$.  When $\Xi_t$ is an isometry, the variational principle \eqref{eq:EPA mean lag def} contains no randomness and produces a deterministic PDE. This phenomenon is illustrated in Section \ref{sec:Euler} for the incompressible Euler equations and Section \ref{rigidbodyexample} for rigid body rotation dynamics.

\subsection{Deterministic mean flow closure and averaging}\label{sec:det vp}
In Section \ref{sec:VP1}, we have introduced a $\Xi$-coupled closure for the flow generated by a mean map $\bar{g}$. The full, composite flow $g = \Xi \circ \overline{g}$ produces the Euler--Poincar\'e equations corresponding to the SALT approach when $\Xi$ generates the stochastic flow specified in equation \eqref{eqn:d Xi d gbar def}, which was motivated through homogenisation procedure presented in Section \ref{sec:homog} for a fast, chaotic flow $\Xi^\varepsilon$. 

Crucially, the closure introduced in Section \ref{sec:VP1} typically introduces randomness in $\bar{u}$, which motivated Assumption \ref{asm:u_bar_eps}. See, also Remark \ref{rem:why_gbar_is_random} and equation \eqref{eq:ubar dynamics} in the examples section. A distinguished case where $\bar{u}$ remained uncoupled to noise is examined in Proposition \ref{energyconservethm} where $\ell$ is assumed to be $\Xi$-invariant. In terms of Assumption \ref{asm:u_bar_eps}, this is the trivial case where $\bar{u}^\varepsilon \equiv \bar{u}$ is a constant function of the rough path, which is smooth. 

In this subsection, we propose a variational closure of $\bar{u}$ that can be assumed to be fully decoupled from the fast scales. This approach more closely matches the axioms of GLM and allows a weakening of the Assumption \ref{asm:u_bar_eps}, so that the limiting mean vector field $\bar{u}$ remains deterministic. The motivation for this closure is a type of \emph{averaging} (in the sense of \cite[Chapter 10]{pavliotis2008multiscale}), rather than homogenisation theory, that is applied to $\bar{u}^\varepsilon$. One can treat the theory of the fast map $\Xi^\varepsilon$ in the same manner as Section \ref{sec:homog} and adjust the assumptions on the mean variable to deduce convergence of the composition of maps. Fix $n \in \bbN$. Given $\bar{u}^{\varepsilon}: \Omega \times [0,T] \rightarrow \mathfrak{X}_{C^n}(\bbT^d)$ with a flow map $\bar{g}^{\varepsilon} :\Omega \times [0,T] \rightarrow \operatorname{Diff}_{C^n}([0,T];\bbT^d) $ 
defined by,
$$\dot{\bar{g}}_t^{\varepsilon}(X) = \bar{u}_t^{\varepsilon}(\bar{g}_t^{\varepsilon}(X)), \quad \bar{g}_0^{\varepsilon}(X)=X. 
$$
We make the following assumption.

\begin{assumption}[Alternative mean vector field assumption] 
\label{weakgbarflow} There exists a $\bar{u} : [0,T] \rightarrow \mathfrak{X}_{C^n}(\bbT^d)$ such that $u^\varepsilon \rightarrow_\bbP u$ in $C\left([0,T]; \mathfrak{X}_{C^n}(\bbT^d)\right)$. It follows that the flow map $\bar{g}^\varepsilon$ converges in law to a $\bar{g} : [0,T] : \rightarrow  \operatorname{Diff}_{C^n}(\bbT^d) $ such that,
\begin{align*}
    \dot{\bar{g}}_t(X) = \bar{u}_t(\bar{g}_t(X)), \quad \bar{g}_0(X)=X \, .
\end{align*}
Note that $\bar{u}^\varepsilon$ the notion of convergence in Assumption \ref{weakgbarflow} is arbitrary, and does not require particular direct dependence of the rough path $\bB^\varepsilon(\omega)$ (in contrast to Assumption \ref{asm:u_bar_eps}). 

The convergence in law to a deterministic element implies convergence in probability. By Slutsky's theorem\footnote{See \cite[Example 3.2]{billingsley1968convergence}. $C\left([0,T] ; \operatorname{Diff}_{C^n}(\clD) \right)$ is a Polish space \cite[Section 11.2]{friz2010multidimensional} and is thus separable.}, $(\Xi^\varepsilon, \bar{g}^\varepsilon) \rightarrow_\bbP (\Xi, \bar{g})$ converges jointly and the argument for convergence of $g^\varepsilon = \Xi^\varepsilon \circ \bar{g}^\varepsilon$ may be repeated as in Section \ref{sec:homog}.
\end{assumption}

We now seek a consistent variational closure for a deterministic flow $\bar{g}$. Let $\ell^{\Xi} : \Omega  \times [0,T] \times \mathfrak{X}(\clD) \times V^* \rightarrow \bbR$ be the (random-coefficient) Lagrangian as considered in \eqref{eq:EPA mean lag def}. We define the deterministic Lagrangian $\bar{\ell} : [0,T] \times \mathfrak{X}(\clD) \times V^* \rightarrow \bbR$ by taking the expectation with respect to the underlying probability space to have
\begin{align*}
\bar{\ell}(t, \bar{u}, \bar{a}) := \bbE \left[ \ell^{\Xi}( \bar{u},  \bar{a})\right] =  \bbE \left[ \ell(\Ad_{\Xi_t} \bar{u}_t,  \bar{a}_t \Xi^{-1}_t)\right]  \, .
\end{align*}
A variety of works have considered the averaging of fluid dynamics in different contexts and equation \eqref{eq:ELASALT EP 1} is a synthesis of several approaches. For example, an expectation average variational principle was posed in \cite{CCR2023}. The variational principle \eqref{sleep_lag} and its resulting equations resemble a probabilistic analogue of the Lagrangian Averaged Euler--Poincar\'e equations (LAEP) \cite{Holm2002,DarrylDHolm_2002}. A homogenisation theorem for McKean-Vlasov SDE, where the expectation operation $\bbE$ and more general dependence on the underlying probability measure of the solution appearing in the evolution equation has been considered in \cite{HONG2025405}.  

We may thus define an action integral and perform \emph{deterministic} variations from the theory of calculus of variations to produce an equation relating the Lagrangian $\bar{\ell}$ and $\bar{u} \in C\left([0,T]; \mathfrak{X}_{C^n}(\bbT^d)\right), \bar{a} \in C\left([0,T]; V^* \right)$. 
\begin{align}
    S[\bar{u}_t , \bar{a}_t] := \int_{0}^{T} \bar{\ell}(t, \bar{u}_t, \bar{a}_t)\dd t\,.
\label{sleep_lag}
\end{align}
The equations of motion are derived from the deterministic Euler--Poincar\'e constrained variational principle. Variations of the action \eqref{sleep_lag} are of the form,
\begin{align}
    \delta \bar{u}_{t} = \partial_t v - \ad_{\bar{u}_{t} }v, \quad \delta \bar{a}_{t} = - \mathcal{L}_v \bar{a}_{t} \,,  \label{lin}
\end{align}
where $v\in C^1([0, T]; \mathfrak{X}(\mcal{D}))$ is arbitrary, vanishing at $t = 0, T$ and non random. These are the classical Lin constraints seen in \cite{Bretherton_1970, CENDRA198763}.

\begin{proposition}\label{prop:det EPA}
The variation $\delta S = 0$ for the action \eqref{sleep_lag} under the Lin constraints \eqref{lin} imply the Euler--Poincar\'e equation,
\begin{align}
   \partial_t \frac{\delta \bar{\ell}}{\delta \bar{u}_{t}} + \mathcal{L}_{\bar{u}_{t } } \frac{\delta \bar{\ell}}{\delta \bar{u}_{t}} \, = \frac{\delta \bar{\ell}}{\delta \bar{a}_{t}} \diamond \bar{a}_{t } \,.\label{eq:ELASALT EP 1} 
\end{align}
\end{proposition}
This is a standard application of the Euler--Poincar\'e theorem of deterministic geometric mechanics \cite{HMR1998}, (see also Chapters 7 and 11 in \cite{holm2009geometric}). 

Furthermore, from the fact that $\bar{u}_t$ is deterministic, one notes,
\begin{align}
\begin{split}
    \frac{\delta \bar{\ell}}{\delta \bar{u}_{t}}(\bar{u}_t, \bar{a}_t) &= \frac{\delta }{\delta \bar{u}_{t}} \bbE \left[ \ell^{\Xi} \right](\bar{u}_t, \bar{a}_t) =  \bbE \left[ \frac{\delta \ell^{\Xi}}{\delta \bar{u}_{t}}  \right](\bar{u}_t, \bar{a}_t) = \bbE \left[ \Ad^*_{\Xi_t} \circ \frac{\delta \ell}{\delta u_{t}} \circ \Ad_{\Xi_t} \right](\bar{u}_t, \bar{a}_t)\,, \\
    \frac{\delta \bar{\ell}}{\delta \bar{a}_{t}}(\bar{u}_t, \bar{a}_t) &= \frac{\delta }{\delta \bar{a}_{t}} \bbE \left[ \ell^{\Xi} \right](\bar{u}_t, \bar{a}_t) =  \bbE \left[ \frac{\delta \ell^{\Xi}}{\delta \bar{a}_{t}}  \right](\bar{u}_t, \bar{a}_t) = \bbE \left[ {\Xi^*_t} \circ \frac{\delta \ell}{\delta a_{t}} \circ \Xi_{t *} \right](\bar{u}_t, \bar{a}_t) \,.
\label{eq:ellbarvarderivative}
\end{split}
\end{align}
The first and second equalities hold using $\bar{\ell} := \bbE [ \ell^\Xi ] $ and commuting through the variation with respect to deterministic variables. The final equality holds from the same computation seen in Equation \eqref{Xichainrulevariation}.  

One can show that if $\ell$ is a hyperregular Lagrangian, then smoothness and convexity conditions are preserved by the expectation. Consequently $\bar{\ell}$ is hyperregular and $\bar{u}$ is recoverable from the momentum $\frac{\delta \bar{\ell}}{\delta \bar{u}_{t}}$. In the case the Lagrangian is quadratic in its $\mathfrak{X}(\mcal{D})$ argument, the relation is explicit as the variational derivative reduces to $\frac{\delta \ell}{\delta u}(u_t, a_t) = L(a_t) u_t $ for some operator $L$ depending on advected quantities. It follows that, 
\begin{align*}
    \frac{\delta \bar{\ell} }{\delta \bar{u}}(t, \bar{u}_t, \bar{a}_t) = \bbE \left[ \Ad^*_{\Xi_t} \circ L(\Xi_t, \bar{a}_t) \circ \Ad_{\Xi_t} \right] (\bar{u}_t) := M(\Xi_t, \bar{a}_t ) \bar{u}_t \,.    
\end{align*}
One can then invert the (linear) operator $M$ to recover the mean velocity from its expected momentum, that is, $\bbE \left[M (\Xi, \bar{a}) \right]^{-1} \frac{\delta \bar{\ell}}{ \delta \bar{u}} = \bar{u}$.
As we shall see in Section \ref{subsec:avgeuler}, for the Euler's incompressible fluid, $\bar{a} = \bar{D}$ such that $M(\Xi, \bar{D}) = \bar{D}^{-1}\Xi^* \left( \Xi_* (\cdot) \right)^\flat$ where $\bar{D}$ is the mean volume density of Euler's incompressible fluid.
\begin{remark}
    When $\ell$ is $\Xi_t$ invariant in the sense of Proposition \ref{energyconservethm}. One has $\ell^{\Xi}(\bar{u}_t) = \ell(\bar{u}_t) = \bar{\ell}(\bar{u}_t)$. Thus, it follows that the deterministic closure for $\bar{u}_t$ coincides with the stochastic closure discussed in Chapter \ref{sec:VP1}.
\end{remark}
As in Section \ref{subsec:ham}, the Legendre transform of $\bar{\ell}$ is a time-dependent Hamiltonian and so energy need not be conserved while other conserved quantities remain. The Poisson structure discussed in Corollary \ref{casimirlemma} is maintained where now the momentum is given by $\frac{\delta \bar{\ell}}{\delta \bar{u}}$\footnote{This is not to be confused with the momentum named $\bar{m} = \frac{\delta \ell^\Xi}{\delta \bar{u}}$ in Section \ref{subsec:ham}, which is coupled to a random $\bar{u}$ dynamics.}. It follows that equation \eqref{eq:ELASALT EP 1} conserves the Casimir $C(\frac{\delta \bar{\ell}}{\delta \bar{u}} ) $, where $C$ is a Casimir for the Lie Poisson bracket \eqref{LPdef}.

A Kelvin circulation theorem is also present under this closure which can be interpreted as  the expectation of the circulation theorem for a stochastic fluid. A similar interpretation is seen in the LAEP and GLM type theories for composite flow maps $g = \Xi \circ \bar{g}$, \emph{provided the advected quantities $\bar{a}_t = \bar{g}_{t *} a_0$ are assumed invariant under average}. Importantly, this occurs in the deterministic closure for $\bar{u}$ but not the stochastic closure discussed in previous sections, as it is necessary that the expectation operation $\bbE$ to commutes through mean variables. 

We may write, with the identities \eqref{eq:ellbarvarderivative}, the Kelvin circulation theorem for $\frac{1}{\bar{D}_t}\frac{\delta \bar{\ell} }{\delta \bar{u}_t}$ in terms of an averaged circulation theorem for variables $u_{t} := \Ad_{\Xi_t} \bar{u}_{t}$, $a_{t} := \Xi_{t *}\bar{a}_{t}$, 
\begin{align*}
    \partial_t \oint_{\overline{g}_t c_0} \frac{1}{\overline{D}_t} \frac{\delta \bar{\ell}}{ \delta \bar{u}_t} = \oint_{\overline{g}_t c_0}\frac{1}{\overline{D}_t} \frac{\delta \bar{\ell}}{ \delta \bar{a}_t}  \diamond \overline{a}_t
    &\iff \partial_t \oint_{\overline{g}_t c_0} \frac{1}{\overline{D}_t} \bbE \left[\Xi_t^* \frac{\delta \ell}{ \delta u_t} \right]  = \oint_{\overline{g}_t c_0}\frac{1}{\overline{D}_t} \bbE \left[\Xi_t^* \left( \frac{\delta {\ell}}{ \delta a_t}   \diamond a_t \right)  \right] \\[10pt]
    &\iff \bbE \left[ \dd \oint_{g_t c_0} \frac{1}{D_t}  \frac{\delta \ell}{ \delta u_t}  \right] = \bbE \left[  \oint_{g_t c_0} \frac{1}{D_t} \frac{\delta \ell}{ \delta a_t} \diamond a_t \dd t \right] \, .
\end{align*}
Thus, the total circulation equals the expected circulation of a stochastic fluid generated by the composition of maps $g_t = \Xi_t \circ \bar{g}_t$, \emph{with $\bar{g}$ closed deterministically}. A similar ``statistical" Kelvin theorem was proposed in the LA-SALT theory of \cite{Drivas2020}. The major difference in LA-SALT theory to the present section is that the velocity vector field in averaged in \cite{Drivas2020}, while the momentum one-form is averaged here.

\section{Examples} \label{sec:examples}
In this Section, we will consider the illustrative example of incompressible Euler fluid equations as special cases of the random-coefficients and stochastic Euler--Poincar\'e equations discussed in Section \ref{sec:VP1}. Then, we illustrate the energy preserving properties of stochastic perturbations generated by isometries. Lastly, we turn to the averaged incompressible Euler equations as a special case of the averaged Euler--Poincar\'e equations derived in Section \ref{sec:det vp}.

\subsection{Stochastic incompressible Euler equations}\label{sec:Euler}
On a $d$-dimensional Riemannian manifold $(\mcal{D}, \mathbf{g})$ with Riemannian volume form $\mu_{\mathbf{g}}$, we consider the incompressible Euler equations where the fluid configuration manifold is $\mathfrak{X}(\clD)\times \Lambda^d(\clD)$\footnote{Of course, one may take the alternative approach by taking the configuration manifold as the manifold of divergence-free vector fields. We will use the current setup to illustrate the construction using advected densities.}. Let $\Xi : \Omega \times [0,T]\rightarrow\operatorname{Diff}(\clD)$ be the fixed stochastic flow of diffeomorphisms defined by equation \eqref{eqn:d Xi d gbar def}.
Let $\bar{u}\,, u: \Omega\times [0,T]\rightarrow \mathfrak{X}(\mcal{D})$ be the mean and drift velocity vector fields of the fluid, respectively, which are related by $u_t:= \Xi_{t*}\bar{u}_t$. Let $\bar{D}\,, D :\Omega \times [0,T]\rightarrow \Lambda^d(\mcal{D})$ be defined by $\bar{D} := \bar{\rho}\mu_{\mathbf{g}}$ and $D_t := \Xi_{t*}\bar{D}_t$ be the mean advected volume density and full advected volume density, respectively, for some $\bar{\rho}:\Omega\times[0,T] \rightarrow \Lambda^0(\clD)$. The kinetic energy Lagrangian $\ell_E: \mathfrak{X}(\clD)\times \Lambda^d(\clD) \rightarrow \mathbb{R}$ for Euler's equation can be expressed as
\begin{align}
    \ell_E(u, D) = \int_{\mc{D}} \frac{1}{2}\mathbf{g}(u, u) D = \int_{\mc{D}} \frac{1}{2}\mathbf{g}(\Xi_{t*}\bar{u}, \Xi_{t*}\bar{u}) \Xi_{t*} \bar{D} = \ell_E(\Xi_{t*}\bar{u}, \Xi_{t*}\bar{D}) = \ell_E^\Xi(\bar{u}, \bar{D}) \,, \label{kineticenergylag}
\end{align}
where the equivalent time-dependent Lagrangian $\ell_E^\Xi:\Omega \times [0,T] \times \mathfrak{X}(\clD)\times \Lambda^d(\clD) \rightarrow \mathbb{R}$ on the mean variables $\bar{u}$ and $\bar{D}$ can be expressed using a random time-dependent metric,
\begin{align}
    \ell_E^{\Xi}(\bar{u}, \bar{D}) = \int_{\clD}\frac{1}{2}\mathbf{g}(\Xi_{t*}\bar{u},\Xi_{t*}\bar{u})\Xi_{t*}\bar{D} = \int_{\clD}\frac{1}{2}(\Xi^*_{t}\mathbf{g})(\bar{u},\bar{u})\Xi_{t*}\bar{D}\,. \label{eq:Euler lag mean}
\end{align}
Here, $\Xi_t^*\mathbf{g} = \Xi_t^* \left(\mathbf{g}_{ij}(x)\, dx^i\otimes dx^j \right)$ is the pullback metric induced by $\Xi$ on $\mathbf{g}$.

Incompressibility condition of the \emph{mean} velocity field $\bar{u}$ can be obtained by imposing the volume-preserving constraint $\Xi^*_t D_t = \bar{D}_t = \mu_\mathbf{g}$.
To enforce the volume-preserving constraint in the variational principle, we introduce a scalar semi-martingale Lagrange multiplier $\dd \bar{P}$ defined by 
\begin{align}
\dd\bar{P} := \bar{p} \,\dt + \sum_{k=1}^K \bar{p}^{(k)}\circ \dW^k_t\,, \quad \bar{p}\,,\bar{p}^{(k)} :\Omega \times [0,T] \rightarrow \Lambda^0(\mcal{D})\,,  \label{eq:semimartingale pressure}
\end{align}
where $\bar{p}$ and $\bar{p}^{(k)}$ are to be interpreted as the pressure functions enforcing the constraint for each component of the $K$-dimensional Brownian motion $W = (W^1,\ldots, W^K)$.
The use of semi-martingale pressure is required due to the semi-martingale nature of the variational principle as shown in \cite{MR2004, hofmanova2019navier, SC2019, CRISAN2022109632}, where the collection of driving Brownian basis of the semi-martingale pressure is the same as the Brownian basis appearing in equation \eqref{eqn:d Xi d gbar def}.
To obtain the stochastic Euler equations, we apply the Euler--Poincar\'e variational principle defined in Corollary \ref{cor:EP equivalent} to the action
\begin{align}
    S = \int_{0}^{T} \ell(u_t, D_t)\,\dt - \scp{\dd \bar{P}}{\Xi_t^* D_t - \mu_{\mathbf{g}}}\,,  
\end{align}
with the constrained variations of $u_{t}$ and $D_t$, as well as additional \emph{free} variations of the pressure $\dd \bar{P}$.
Then, we calculate the variations of the action as
\begin{align*}
    \begin{split}
        0 = \delta S &= \delta \int_{0}^{T} \left[\int_{\mc{D}} \frac{1}{2} D_t\mathbf{g}({u}_t, {u}_t) \,\dt - \int_{\mc{D}} \dd \bar{P}\,(\Xi^*_tD_t - \mu_{\mathbf{g}}) \right]\\
        &= \int_{0}^{T} \scp{u_t^\flat \otimes D_t}{\delta u_t}\,\dt + \scp{\frac{1}{2}\mathbf{g}(u_t, u_t)\,\dt - \Xi_{t*}\left(\dd \bar{P}\right)}{\delta D_t} - \scp{\delta \dd \bar{P}}{\Xi_{t}^*D_t - \mu_{\mathbf{g}}} \\
        &= \int_{0}^{T}\scp{u_t^\flat \otimes D_t}{\Xi_{t*}\left(\p_t v_t - \ad_{\Xi^*_t u_t}v_t\right)}\,\dt + \scp{\frac{1}{2}\mathbf{g}(u_t, u_t)\,\dt - \Xi_{t*}\left(\dd \bar{P}\right)}{-\Xi_{t*}\mcal{L}_v (\Xi_t^*D_t)}\\
        &\qquad \qquad - \scp{\delta \dd \bar{P}}{\Xi_{t}^*D_t - \mu_\mathbf{g}} \\
        &= \int_{0}^{T}\scp{\left(\dd + \ad^*_{\Xi^*_t u_t\,\dt}\right)\left(\Xi^*_t \left(u_t^\flat \otimes D_t\right) \right) - \Xi^*_t D_t\otimes\mb{d}\left(\frac{1}{2}\Xi^*_t\mathbf{g}(u_t, u_t)\,\dt - \dd \bar{P}\right)}{v}\,\dt \\
        & \qquad\qquad + \scp{\delta \dd \bar{P} }{\Xi^*_t D_t - \mu_{\mathbf{g}}}\,.
    \end{split}
\end{align*}
Here, we made use of the musical isomorphism $\flat:\mathfrak{X}(\mcal{D}) \rightarrow \Lambda^1(\mcal{D})$ induced by the right-invariant (weak) Riemannian metric $\scp{\cdot}{\cdot}_{L_2}$ defined by $\scp{w}{v}_{L_2} = \int_{\clD}\mathbf{g}(w, v)\mathbf{\mu}_{\mathbf{g}} = \scp{w}{v^\flat}$, for all $w, v \in \mathfrak{X}(\mcal{D})$, as well as its dual $\sharp:\Lambda^1(\mcal{D}) \rightarrow \mathfrak{X}(\mcal{D})$. Setting the variations to zero and applying the stochastic fundamental Lemma of calculus of variations \cite{crisan2022variational, ST2023}, we obtain the random-coefficient Euler--Poincar\'e equation and the volume-preserving constraint with the advection of $\bar{D}_t$,
\begin{align}
    &\left(\dd + \mathcal{L}_{\bar{u}_t\,\dt}\right)\left( \Xi^*_t  \left(\Xi_{t*}\bar{u}_t\right)^\flat \otimes \bar{D}_t \right) = \mathbf{d} \left(\frac{1}{2}\Xi^*_t\mathbf{g}(\Xi_{t*}\bar{u}_t, \Xi_{t*}\bar{u}_t)\,\dt - \dd \bar{P} \right) \otimes  \bar{D}_t \,, \label{eq:mean euler EP} \\
    &\left(\dd + \mathcal{L}_{\bar{u}_t}\right)\bar{D}_t = 0\,, \quad \bar{D}_t = \mu_{\mathbf{g}}\,.\label{eq:mean euler Dbar advect}
\end{align}
Using the condition $\bar{D}_t = \mu_\mathbf{g}$, we obtain the divergence-free condition of the vector field $\bar{u}_t$, $\mathcal{L}_{\bar{u}_t}\mu_{\mathbf{g}} = \mathbf{d}({\bar{u}_t}\intprod \mu_{\mathbf{g}}) = 0$ from \eqref{eq:mean euler Dbar advect}. Dividing through by the advected mean density $\bar{D}$ in the sense of Remark \ref{rmk:division by density}, equation \eqref{eq:mean euler EP} can be simplified to
\begin{align}
    \left(\dd + \mathcal{L}_{\bar{u}_t\,\dt} \right)\Xi^*_t \left(\Xi_{t*}\bar{u}_t\right)^\flat = \mathbf{d}\left(\frac{1}{2}\Xi^*_t \mathbf{g}(\Xi_{t*}\bar{u}_t, \Xi_{t*}\bar{u}_t)\,\dt - \dd \bar{P}\right)\,,
\label{meaneuler}
\end{align}
which is the Euler's fluid equation for a time-dependent metric. We remark that when $\Xi$ is chosen to be stochastic flow of isometries, we have $\Xi^*_t (\Xi_{t*} \bar{u}_t )^\flat = \bar{u}_t^\flat$ and \eqref{meaneuler} reduces to the deterministic Euler equations. Furthermore, when $\Xi$ is chosen to be volume-preserving $\Xi^*\mu_{\mathbf{g}} = \mu_{\mathbf{g}}$, we have the random time-dependent Lagrangian $\ell_E^\Xi$ reduces to the autonomous kinetic energy Lagrangian $\ell_E$.

We express the mean Euler's equation \eqref{meaneuler} in local coordinates. For $X = X^i\p_i \in \mathfrak{X}(\clD)$ and $\alpha = \alpha_i\, \mathbf{d}x^i \in \Lambda^1(\clD)$, the coordinate expression of the Lie-derivative $\mcal{L}_X \alpha$ is expressed 
\begin{align*}
    \mcal{L}_{X} \alpha = X \intprod \mb{d}\alpha + \mb{d}\left(X \intprod \alpha \right) = \left(X^i\p_{i} \alpha_k + \alpha_i \p_{k} X^i\right)\,dx^k\,.
\end{align*}
Let the vector $\bar{\bu}$ denote the coefficients of the mean transport vector field $\bar{u} = \bar{\bu}\cdot\nabla$, and the vector $\wt{\bu}$ denote the coefficients of the momentum $\wt{u}_i\, \mathbf{d}x^i := \Xi^*_t\left(\Xi_{t*}\bar{u}\right)^\flat : \Omega \times [0,T] \rightarrow \Lambda^1(\clD)$. The coordinate expression of $\wt{u}_i\mathbf{d}x^i$ is given by
\begin{align*}
    \wt{u}_i \mathbf{d}x^i := \Xi^*_t\left(\Xi_{t*}\bar{u}\right)^\flat = \mathbf{g}_{ik}\frac{\p \Xi^i}{\p x^j}\frac{\p \Xi^k}{\p x^l}\bar{u}^j dx^l\,,
\end{align*}
and the mean Euler equations appearing in \eqref{meaneuler} can be written as 
\begin{align*}
    \dd \wt{\bu} + \bar{\bu}\cdot \nabla \wt{\bu} + \wt{u}^i \nabla \bar{u}^i = \nabla \left(\frac{1}{2}\mathbf{g}\left(\bar{\bu}\cdot \nabla \Xi, \bar{\bu}\cdot \nabla \Xi\right)\,\dt - \dd \bar{P}\right)\,, \quad \operatorname{div}(\bar{\bu}_t) = 0\,.
\end{align*}

To express Euler's equation in terms of the variable $u_{t}$, we  apply Propositions \ref{prop:random to stoch EPA} to the Euler--Poincar\'e form of the mean Euler equations \eqref{eq:mean euler EP} and obtain the stochastic incompressible Euler equations as the SPDEs
\begin{align}
    &\left(\dd + \mathcal{L}_{\dd gg^{-1}}\right) \left(u_{t}^\flat \otimes D \right) = \mathbf{d}\left(\frac{1}{2}\mathbf{g}(u_{t}, u_{t})\,\dt - \dd P_t\right) \otimes D\,,  \quad \dd P_t := \Xi_{t*} \dd \bar{P} \label{eq:full euler}\\
    & \qquad \textrm{where} \quad \dd gg^{-1} := u_{t}\,\dt + \frac{1}{2}\sum \limits_{k,l=1}^K\Gamma^{kl}\left[\xi_k\,,\,\xi_l\right] \dd t  + \sum \limits_{k=1}^K\xi_k \circ \dd W_t^k\,, \label{eq:full dg g inv}
\end{align}
and the stochastic advection of the full volume density $D$ with the volume constraint
\begin{align}
    \left(\dd + \mathcal{L}_{\dd gg^{-1}}\right) D = 0\,, \quad \Xi_t^*D_t = \mu_{\mathbf{g}}\,.
\end{align}
As $D$ is advected by the same vector field as the full momentum one-form density $u_{t}^\flat\otimes D$, denoted by $\dd g g^{-1}$, we can simplify the momentum equation \eqref{eq:full euler} by dividing through by $D$ and using the Kunita-It\^o-Wentzel formula to have
\begin{align}
    \dd u_{t}^\flat + \mathcal{L}_{u_{t}}u_{t}^\flat\,\dt + \sum\limits_{k=1}^K\mcal{L}_{\xi_k} u_{t}^\flat \circ \dW^i_t  + \frac{1}{2}\sum\limits_{k,l=1}^K\mcal{L}_{\Gamma^{kl}\qv{\xi_k}{\xi_l}} u_{t}^\flat\,\dt = \mathbf{d}\left(\frac{1}{2}\mathbf{g}(u_{t}, u_{t})\,\dt - \dd P_t\right)\,.
\label{salteuler}
\end{align}
In the case where $\Xi : \Omega \times [0,T]\rightarrow \operatorname{SDiff}(\mcal{D})$, the group of volume-preserving diffeomorphisms, we automatically have $\operatorname{div}_{\mu_{\mathbf{g}}}\xi_k = 0$ for $k=1,\dots, K$. Noting that $\bar{D}_t = \mu_{\mathbf{g}}$ following the volume constraint, the advection of $D$ imply the divergence-free condition of $u_{t}$.
\begin{align}
\begin{split}
    0 = \left(\dd + \mathcal{L}_{\dd gg^{-1}}\right) \left(\Xi_{t*} \mu_{\mathbf{g}}\right) &= \mathcal{L}_{u_{t}} \mu_{\mathbf{g}}\,\dt + \sum\limits_{k=1}^K\mcal{L}_{\xi_k} \mu_{\mathbf{g}}\circ \dW^k_t + \frac{1}{2}\sum\limits_{k,l=1}^K\mcal{L}_{\Gamma^{kl}\qv{\xi_k}{\xi_l}} \mu_{\mathbf{g}} \\
    &= \mathcal{L}_{u_{t}} \mu_{\mathbf{g}} \,\dt \,,
\end{split}\label{eq:div free U}
\end{align}
where in the last equality we have used the fact that $\xi_k$ and $\qv{\xi_k}{\xi_l}$ are divergence-free. The stochastic equation \eqref{salteuler} and the incompressibility condition \eqref{eq:div free U} thus combine to define a stochastic incompressible Euler fluid equation for the drift vector field $u_{t}$. In the case where $\Gamma^{kl} = 0$, these equations coincide with the SALT incompressible Euler fluid equations with transport noise derived in \cite{Holm2015}, whose analytical properties were studied in e.g., \cite{CFH19}.

To explicitly write down the SPDE for $u_t$ in coordinate expression, we make use of the well known identity linking the Lie derivative and the Levi-Civita connection $\nabla:\mathfrak{X}\times \mathfrak{X}(\mcal{D})\rightarrow \mathfrak{X}(\mcal{D})$ \cite{DR1977}
\begin{align*}
    \mcal{L}_v v^\flat - \frac{1}{2}\mb{d}\mathbf{g}(v,v) = \left(\nabla_v v\right)^\flat\,,\quad \forall v\in \mathfrak{X}(\mcal{D})\,.
\end{align*}
and the Lie derivative expression 
\begin{align*}
    \left(\mcal{L}_\xi u^\flat\right)^\sharp = \sum_{i,j,l,q = 1}^d\left(\xi^j\p_j u^q + \mathbf{g}^{iq}\xi^j u^l \p_j \mathbf{g}_{li} + \mathbf{g}^{iq}\mathbf{g}_{lj}u^l\p_i \xi^j\right)\p_q\,, \quad \forall v, \xi \in \mathfrak{X}(D)\,.
\end{align*}
Then, we have the coordinate form of equation \eqref{salteuler} as
\begin{align*}
    \dd u_{t} + \nabla_{u_{t}} u_{t}\,\dt + \sum\limits_{k=1}^K\left(\mcal{L}_{\xi_k} u_{t}^\flat\right)^\sharp \circ \dW^k_t  + \frac{1}{2}\sum\limits_{k,l=1}^K\left(\mcal{L}_{\Gamma^{kl}\qv{\xi_k}{\xi_l}} u_{t}^\flat\right)^\sharp\,\dt = - \nabla \dd P_t\,, \quad \operatorname{div}_{\mu_{\mathbf{g}}}(u_t) = 0\,.
\end{align*}
It is possible to consider the explicit evolution of the mean velocity $\bar{u}$ by utilising the operator $\ad^\dagger : \mathfrak{X}(\mcal{D}) \times \mathfrak{X}(\mcal{D}) \rightarrow \mathfrak{X}(\mcal{D})$ defined as the $L^2$ dual of the $\ad$ operator. That is, for $u, v, w \in \mathfrak{X}(\mcal{D})$, we have 
\begin{align}
    \scp{u}{v}_{L^2} := \int_{\clD} \mathbf{g}(u,v)\mu_{\mathbf{g}}\,,\quad 
    \scp{\ad_u v}{w}_{L^2} = \scp{v}{\ad^\dagger_u w}_{L^2}\,,\quad \text{such that}\,\quad \ad^\dagger_u w := \left(\mathcal{L}_u w^\flat\right)^\sharp\,. \label{addagger}
\end{align}
Then, one can write the stochastic equation for $u_{t}$ by taking $\sharp$ and applying the definition \eqref{addagger}, 
\begin{align*}
    \dd u_{t} + \ad^\dagger_{u_{t}} u_{t}\,\dt + \sum\limits_{k=1}^K\ad^\dagger_{\xi_k}u_{t} \circ \dW^k_t + \frac{1}{2}\sum\limits_{k,l=1}^K\ad^\dagger_{\Gamma^{kl}\qv{\xi_k}{\xi_l}}u_{t}\,\dt = \left[\mathbf{d}\left(\frac{1}{2}\mathbf{g}(u_{t}, u_{t})\,\dt - \dd P_t\right)\right]^\sharp\,,
\end{align*}
Noting that $u = \Xi_{t*}\bar{u}$, applying the the Kunita-It\^o-Wentzel formula for vector fields 
\begin{align*}
    \dd \left(\Xi_{t*}\bar{u}_t\right) = \Xi_{t*}\dd \bar{u}_t + \ad_{\dd \Xi_t \Xi_t^{-1}} \Xi_{t*}\bar{u}_t = \Xi_{t*}\dd \bar{u}_t - \qv{\dd \Xi_t\Xi^{-1}_t}{\Xi_{t*}\bar{u}_t}\,,
\end{align*}
we obtain the evolution of $\bar{u}_t$
\begin{align}
\begin{split}
    &\dd \bar{u}_{t} + \Xi^*_{t}\left(\ad^\dagger_{u_{t}} u_{t}\,\dt + \sum\limits_{k=1}^K\left(\ad^\dagger_{\xi_k} + \ad_{\xi_k}\right)u_{t} \circ \dW^k_t + \frac{1}{2}\sum\limits_{k,l=1}^K\left(\ad^\dagger_{\Gamma^{kl}\qv{\xi_k}{\xi_l}} + \ad_{\Gamma^{kl}\qv{\xi_k}{\xi_l}}\right) u_{t} \,\dt\right) \\
    & \qquad  = \Xi^*_{t}\left[\mathbf{d}\left(\frac{1}{2}\mathbf{g}(u_{t}, u_{t})\,\dt - \dd P_t\right)\right]^\sharp\,.
\end{split}\label{eq:ubar dynamics}
\end{align}

\begin{remark}
    One may in fact do away with the assumption that $\Xi$ is a flow volume-preserving diffeomorphisms. In such a case, the pressure martingale can be determined by taking the divergence of the $\bar{u}$ evolution equation \eqref{eq:ubar dynamics} since $\bar{u}$ is divergence-free from equation \eqref{eq:mean euler Dbar advect}. 
\end{remark}

The two equivalent formulations of the Euler equations \eqref{meaneuler} and \eqref{salteuler} also agree in their Kelvin theorems, as expected. For a given initial material loop $c_0$, one has conservation of the circulation integral of $\Xi^{*}_t ({\Xi_t}_* \overline{u}_t)^\flat$ and $u^\flat$,
\begin{align}
    &\dd \oint_{\overline{g}_t c_0} \Xi^{*}_t ({\Xi_t}_* \overline{u}_t)^\flat = \oint_{\overline{g}_t c_0} \mathbf{d}\left(\frac{1}{2}\Xi^*_t\mathbf{g}(\Xi_{t*}\bar{u}_t, \Xi_{t*}\bar{u}_t)\,\dt - \dd\bar{P}\right)= 0\,,  \label{meankelvin}\\
    &\dd \oint_{g_t c_0} u_t^\flat = \oint_{g_t c_0}  \mathbf{d}\left(\frac{1}{2}\mathbf{g}(u_{t}, u_{t})\,\dt - \dd P \right)\,\rmd t = 0\, \,.   \label{saltkelvin}
\end{align}
The equivalence of \eqref{meankelvin} and \eqref{saltkelvin} is an easy consequence of the change of variables formula.

\paragraph{Vorticity dynamics} Here, we consider the vorticity dynamics associated with the random-coefficient and stochastic homogeneous incompressible Euler equations \eqref{meaneuler} and \eqref{salteuler}. Let $q_{t} = \mb{d} u_{t}^\flat = \mb{d}(\Xi_{t*}\bar{u}_{t})^\flat \in \Lambda^2(\mcal{D})$ be the vorticity of the drift velocity one-form and let the vorticity associated with mean velocity one-form be $\bar{q}_{t} = \mb{d}\Xi^*_t u_{t}^\flat = \Xi^*_t q_{t}$. Applying the exterior derivative $\mb{d}$ to \eqref{meaneuler} and \eqref{salteuler} yields
\begin{align}
    &\left(\dd + \mathcal{L}_{\bar{u}_t\,\dt} \right)\bar{q}_{t} = 0\,,\label{vorticityequiv1} \\
    &\dd q_{t} + \mathcal{L}_{u_{t}}q_{t}\,\dt + \sum\limits_{k=1}^K\mcal{L}_{\xi_k} q_{t} \circ \dW^i_t + \frac{1}{2}\sum\limits_{k,l=1}^K\mcal{L}_{\Gamma^{kl}\qv{\xi_k}{\xi_l}} q_{t}\,\dt = 0 \,, \label{vorticityequiv2}
\end{align}
respectively, which are the evolution of the drift vorticity two-form and its pullback. The equivalence of the vorticity equations \eqref{vorticityequiv1}, \eqref{vorticityequiv2} can be verified using the stochastic Lie chain rule \eqref{eq:stoch lie chain} and the composite transport identity \eqref{eq:comp transport}. To recover $\bar{u}_t$ and $u_t$ from $\bar{q}_t$ and $q_t$, respectively, two \emph{different} Biot--Savart laws need to be used. Namely, we have 
\begin{align}
    u_t = \sharp \delta (-\triangle)^{-1} q_t\,, \quad \bar{u}_t = \Xi_{t}^* \left(\sharp \delta (-\triangle)^{-1} \Xi_{t*}\bar{q}_t\right)\,. \label{eq:random Biot--Savart}
\end{align}
That is, $\bar{u}_t$ and $\bar{q}_t$ are related by a random-coefficient Biot--Savart law whose randomness are generated by the stochastic flow of diffeomorphisms $\Xi_t$.
\begin{remark}[]
    We remark that the mean vorticity equation \eqref{vorticityequiv1} with the random-coefficient Biot--Savart law \eqref{eq:random Biot--Savart} form a closed set of random-coefficient PDE. This is in contrast however, to the mean velocity formulation which is given by the SPDE \eqref{meaneuler} where the stochastic integrals exist purely as pressure forces.
\end{remark}
\paragraph{Helicity dynamics} Working in three dimensions, $\mcal{D} \subset \mathbb{R}^3$, we define the helicity
\begin{align*}
    \Lambda(u_t^\flat) := \int_{\mcal{D}} u_{t}^\flat \wedge \mb{d}u_{t}^\flat = \int_{\mcal{D}} u_{t}^\flat \wedge q_{t}\,,
\end{align*}
which measures the linkage number of the vorticity field lines. Through direct calculations, we have 
\begin{align*}
    \dd \left(\Xi^*_t u_{t}^\flat \wedge \bar{q}_{t} \right) + \mathcal{L}_{\bar{u}_t} \left(\Xi^*_t u_{t}^\flat \wedge \bar{q}_{t} \right) \,\dt = -\mb{d}\left(\frac{1}{2}\Xi^*_t\mathbf{g}(\Xi_{t*}\bar{u}_t, \Xi_{t*}\bar{u}_t)\,\dt - \dd\bar{P}\right)\wedge \Xi^*_t \bar{q}_{t}\,,\\
    \dd \left(u_{t}^\flat \wedge q_{t} \right) + \mathcal{L}_{\dd g_t g_t^{-1}} \left( u_{t}^\flat \wedge q_{t} \right) = -\mb{d}\left(\frac{1}{2}\mathbf{g}(u_t, u_t)\,\dt - \dd P\right)\wedge q_{t}\,,
\end{align*}
where $\dd g_tg_t^{-1}$ is defined in \eqref{eq:full dg g inv}. Noting that $\mcal{L}_X \rho = \mb{d}(X \intprod \rho)$ for all $X \in \mathfrak{X}(\mcal{D})$ and $\rho \in \Lambda^d(\mcal{D})$, we have the conservation of the helicity 
\begin{align*}
    \dd \Lambda(u_t^\flat) = \dd \Lambda(\Xi_t^*u_t^\flat) = 0\,, 
\end{align*}
by the two Euler equations, \eqref{meaneuler} and \eqref{salteuler} in three-dimensions. 

In two dimensions, stochastic Euler's equations of the form \eqref{eq:full euler}-\eqref{eq:full dg g inv} were extensively studied in \cite{diamantakis2024levy} by selecting specific $\xi_k$ such that the system exhibits deterministic behaviour in a time-dependent stochastic coordinate frame. We cast the results appearing in \cite{diamantakis2024levy} through the insights of the current paper. In two dimensions, the exterior derivative operator $\mb{d}$ on velocity one-forms can be expressed by perpendicular gradients $\nabla^\perp = (-\p_y, \p_x)^T$ in local coordinates. The vorticity two-form can then be identified with a scalar function via Hodge duality and expressed as $q_{t} = \nabla^\perp \cdot \bu_{t}\,d^2x$, where $\bu_{t}$ is the coefficient of the vector field $u_{t}$. 
We consider the equivalent energies $h^{\Xi}(\Xi_{t*} q_{t}) \equiv h^{\Xi}(\bar{q}_{t}) = h(q_{t})$, formulated in mean variables as per equations \eqref{vorticityequiv1} and \eqref{eq:random Biot--Savart}, and expressed as
\begin{align}
    h(q) := \int_{\mcal{D}} - \frac{1}{2} q \Delta^{-1} q\, d^2 x\,, \qquad h^{\Xi}(\bar{q}) := \int_{\mcal{D}} - \frac{1}{2} \Xi_{t*} \bar{q} \Delta^{-1} \Xi_{t*}\bar{q} \,d^2 x\,.\label{comvorticityham}
\end{align}
Note that $t$-dependence in $h^{\Xi}$ occurs generally since $\Xi_{t}^*$ and $\Delta^{-1}$ does not generally commute. It follows that $\bar{q}_{t}$ is not dual to its stream function in a time-independent manner. 

An important special case of the mean equations of vorticity, \eqref{vorticityequiv1} and \eqref{eq:random Biot--Savart}, occurs when considering a point vortex solution ansatz $q_{t} = \sum_{\alpha} \Gamma_\alpha \delta(x;x_\alpha(t) )$. When $\Xi_t$ is an isometry at all $t$, its pullback does commute with the Laplace-de Rham operator. It follows that $h(q_{t}) = h^{\Xi}(\bar{q}_{t}) = h(\bar{q}_{t})$ becomes time independent and conserved. The calculus of distributions\footnote{To compute this we make use of the identity found in Example 6.1.3 in \cite{H_rmander_2003} and the fact that $\mcal{J}(\Xi_t^{-1}) = 1$ and $\Xi_t$ is a bijection.} implies $\bar{q}_t = \Xi^*_t q_{t} = \mcal{J}(\Xi_t^{-1})\bar{q}_{t} \circ \Xi_t = \sum_{\alpha} \Gamma_\alpha  \delta (\Xi_t(x) ,x_\alpha(t) ) = \sum_{\alpha} \Gamma_\alpha  \delta (x,\Xi^{-1}_t(x_\alpha(t)) )$.
It is well known in the literature \cite{aref2007point,poincare1893theorie,grobli1877specielle} that stable configurations exist in the noiseless case ($\Xi_t \equiv e$ for \eqref{comvorticityham}) around a center of vorticity $x_c(t)$. Through symmetry arguments, the persistence of equilateral triangles configuration of unit strength point vortices when considering $\xi_1(x_1, x_2) = (x_2,-x_1)^T Ar \exp (-\frac{r}{2} \|x - x_c(t)\|^2), \,\,\xi_2(x_1, x_2) = (-b,a)$ for given $a,b \in \bbR, A, r > 0$ is shown in \cite{diamantakis2024levy} for a stochastically rotating and translating frame.

This result may be explained through the results of this paper. The vector fields $\xi_k$ are defined with respect to coordinates of the point vortices, which remain at $\|x_\alpha(t) - x_c(t)\| = const$ if initially configured in an equilateral triangle around $x_c(0)$. In this case, $x \mapsto \xi_k(x)$ is a linear map for a fixed element of $\mathfrak{se}(2)$ when restricted to $\|x_\alpha(t) - x_c(t)\| = const$ and acts as Killing vector field of $\bbR^2$ for all $k$. It follows that the Killing fields integrated against the Stratonovich differentials $\circ \dd W^k_t$ generate an isometry-valued stochastic process of the metric. Since $\Xi_t$ is a flow of isometries, the methods of Proposition \ref{energyconservethm} imply that $\bar{x}_\alpha := \Xi_t^{-1} (x_\alpha)$ is a solution of a deterministic equation, and the stochastic dynamics are simply that of the noiseless case in the stochastic coordinates of $\Xi_t$. Energy conservation is implied by Proposition \ref{energyconservethm} and verified numerically in \cite{diamantakis2024levy}.

\subsection{Averaged incompressible Euler equations}\label{subsec:avgeuler}  
The deterministic $\bar{u}$ closure of the Euler equations, as described for the general case in Section \ref{sec:det vp}, may be derived by applying the expectation operator to the Lagrangian $\ell_E^{\Xi}$ in \eqref{eq:Euler lag mean} to obtain an averaged Lagrangian $\bar{\ell}_E : [0,T]\times \mathfrak{X}(\mcal{D})\times \Lambda^d(\clD) \rightarrow \mathbb{R}$, $\bar{\ell}_E(\bar{u}, \bar{D}) : = \bbE[\ell_E^\Xi(\bar{u},\bar{D})]$. Here, the arguments of $\bar{\ell}_E$ are assumed to be deterministic flows of vector field $\bar{u}$ and $d$-form $\bar{D}$, generated by a deterministic flow of diffeomorphisms $\bar{g}: [0,T] \rightarrow \operatorname{Diff}(\mcal{D})$ as required by Assumption \ref{weakgbarflow}. That is, $\bar{u}:[0,T]\rightarrow \mathfrak{X}(\mcal{D})$ where $u_t = \frac{\dd}{\dt}\bar{g}_t\,\bar{g}_t^{-1}$ and $\bar{D}: [0,T] \rightarrow \Lambda^d(\clD)$ where $\bar{D}_t = D_0 \bar{g}_t^{-1}$ for some positive, non-zero $D_0 \in \Lambda^d(\clD)$. Special attention is required when considering the volume-preserving constraint $\bar{D} = \mu_{\mathbf{g}}$ under expectation. In this section, we make the simplifying assumption that the Lagrange multiplier, represented by the semi-martingale pressure $\Pi$, decomposes into $\dd \Pi_t = \pi_t \,\dt + \sum_{k=1}^K \pi^{(k)}_t dW^k_t$, where $\pi, \pi^{(k)}: \Omega \times [0,T]\rightarrow \Lambda^0(\mcal{D})$. In comparison with the semi-martingale pressure $\bar{P}$ defined in equation \eqref{eq:semimartingale pressure}, the semi-martingale decomposition of $\Pi$ is defined in It\^o sense in anticipation for taking expectation. Let $\bar{\pi} : = \bbE[\pi]$. Under this choice of pressure, the volume-preserving constraint can be enforced under expectation and we define the action $\bar{S}$ as the following
\begin{align}
\begin{split}
        \bar{S}[\bar{u}_t, \bar{D}_t] &= \int_{0}^T \bar{\ell}_E (\bar{u}_t,\bar{D}_t)\,\dt - \int_0^T\bbE\left[\int_{\clD}\dd \Pi_t\left(\bar{D}_t - \mu_{\mathbf{g}}\right) \right]\\
        &= \int_0^T\bbE\left[\int_{\clD}\frac{1}{2}(\Xi^*_t\mathbf{g})(\bar{u}_t,\bar{u}_t)\Xi_{t*}\bar{D} \,\dt\right] - \int_0^T\int_{\clD}\bar{\pi}_t\left(\bar{D}_t - \mu_{\mathbf{g}}\right)\,\dt\,.
\end{split}\label{eq:averg euler action 1}
\end{align}
Variational derivatives of $\bar{\ell}_E$ can be calculated as 
\begin{align*}
    \frac{\delta \bar{\ell}_E}{\delta \bar{u}_t} = \bbE[(\Xi^*_t\mathbf{g})(\bar{u}_t,\cdot) \otimes \Xi_{t*}\bar{D}]\,,\quad \frac{\delta \bar{\ell}_E}{\delta \bar{D}_t} = \bbE[\Xi^*_t(\Xi^*_t\mathbf{g})(\bar{u}_t,\bar{u}_t)]\,,
\end{align*}
where the $(\Xi^*_t\mathbf{g})(\bar{u}_t,\cdot)\in \Lambda^1(\mathcal{D}) $ is the  flat operation of $\bar{u}_t$ under the pullback metric $\Xi^*_t\mathbf{g}$. 
Applying Hamilton's principle $\delta S = 0$ with the standard Lin constrained variations for $\bar{u}$ and $\bar{D}$, as well as free variations for the pressure $\pi$, we obtain the averaged Euler--Poincar\'e equations following Proposition \ref{prop:det EPA} from the action \eqref{eq:averg euler action 1}
\begin{align}
    \left(\p_t + \mathcal{L}_{\bar{u}_t}\right)\bbE[ (\Xi^*_t\mathbf{g})(\bar{u}_t,\cdot) \otimes \Xi_{t*}\bar{D}_t] = \bar{D}_t\mathbf{d}\left(\frac{1}{2}\bbE[\Xi^*_t(\Xi^*_t\mathbf{g})(\bar{u}_t,\bar{u}_t)] - \bar{\pi}_t\right)\,, \label{eq:mean euler}
\end{align}
together with the deterministic advection of $\bar{D}$ by $\bar{u}$ and the volume-preserving condition that implies the incompressibility of $\bar{u}$
\begin{align}
    \left(\p_t + \mathcal{L}_{\bar{u}_t}\right) \bar{D}_t = 0\,,\quad \bar{D}_t = \mu_{\mathbf{g}} \quad \Longrightarrow \quad \operatorname{div}_{\mu_{\mathbf{g}}}\bar{u}_t = 0\,.
\end{align}
Under the constraint $\bar{D}_t = \mu_{\mathbf{g}}$, we have $\Xi_{t*}\bar{D} = \mcal{J}_{\Xi^{-1}_t} \mu_{\mathbf{g}}$, where $\mcal{J}_{\Xi^{-1}_t}$ is the determinant of the Jacobian of $\Xi^{-1}_t$. Let $\wt{\mathbf{g}} := \bbE\left[\mcal{J}_{\Xi^{-1}_t}\Xi^*_t\mathbf{g}\right]$ and $\wt{\flat}: \mathfrak{X}(\clD) \rightarrow \Lambda^1(\clD)$ the musical isomorphism associated with the weighted pullback Riemannian metric $\wt{\mathbf{g}}$.
Then, \eqref{eq:mean euler} can be expressed as
\begin{align}
    \left(\p_t + \mathcal{L}_{\bar{u}_t}\right)\bar{u}_t^{\wt{\flat}} =\mathbf{d}\left(\frac{1}{2}\bbE[\Xi^*_t(\Xi^*_t\mathbf{g})(\bar{u}_t,\bar{u}_t)] - \bar{\pi}_t\right)\,. \label{eq:averaged Euler simplified}
\end{align}
Let $\bar{q}=\mathbf{d}\bar{u}_t^{\wt{\flat}} : [0,T]\rightarrow \Lambda^2(\clD)$ be the vorticity. Then, $\bar{q}$ satisfies the following vorticity equation with an averaged Biot--Savart law:
\begin{align}
    \left(\p_t + \mathcal{L}_{\bar{u}_t}\right)\bar{q}_t = 0 \,,\quad \bar{u}_t =\wt{\sharp} \delta (-\triangle)^{-1} \bar{q}_t\,, \label{eq:avged vort eq}
\end{align}
where $\wt{\sharp}: \Lambda^1(\clD) \rightarrow \mathfrak{X}(\clD)$ is the dual operator to $\wt{\flat}$. In particular, $\bar{u}_t$ and $\bar{q}_t$ are related by a Biot--Savart law, where the metric is the expectation of the weighted pullback metric $\mcal{J}_{\Xi^{-1}_t}\Xi^*_t\mathbf{g}$.

\section{Alternative modelling perspectives} \label{prevwork}
The goal of this section is to compare the homogenisation derivation of the stochastic Lagrangian particle trajectory ansatz \eqref{eq:SALTansatz} presented in this work with the homogenisation derivation proposed in \cite{CGH2017}, hereafter referred to as CGH. The authors of CGH derived the ansatz \eqref{eq:SALTansatz} by applying a deterministic homogenisation procedure to a flow of diffeomorphisms conforming to a different fast-slow decomposition from that assumed in Section \ref{sec:homog}.

In what follows, we highlight the differences in assumptions and results between CGH and the current work. We then adapt the homogenisation arguments in Section \ref{sec:homog} to reinterpret the results of CGH and use the variational closure techniques in Section \ref{sec:vp} to obtain the corresponding stochastic Euler--Poincaré equations. Finally, we provide a physical interpretation of the modelling choices made in this work and in CGH.

\subsection{Homogenisation of mean flows}
In CGH, the full flow of diffeomorphisms is assumed to have the decomposition $g = \Xi^{\varepsilon}\circ \bar{g}^{\varepsilon}$ with $\Xi^{\varepsilon}_t = g'_{t/\varepsilon}$ for some chaotic flow $g' : [0,T] \rightarrow \operatorname{Diff}(\mcal{D})$ whose time dependence is on the quotient $t/\varepsilon$. The stochastic Lagrangian trajectory ansatz is then shown to emerge from the homogenisation limit of the map $\bar{g}^\varepsilon$. Compared with the present work (see Section \ref{sec:homog}), there are two key differences.

First, in CGH, the stochastic flow ansatz \eqref{eq:SALTansatz} arises from the limit of the map $\bar{g}^\varepsilon$, whereas in our approach, the same ansatz arises as the limit of the total map $g^{\varepsilon}$.
Second, in our approach, we assume a more general fast-slow flow decomposition of the form $g^{\varepsilon} = \Xi^{\varepsilon}\circ \bar{g}^{\varepsilon}$, cf. \eqref{eq:CoMdef}, whereas in CGH, the composite map $g$ is taken to be $\varepsilon$-independent.
Moreover, in this work, we have shown that both $g^\varepsilon$ and $\bar{g}^\varepsilon$ must be $\varepsilon$-dependent for consistency in the case of stochastic closures presented in Section \ref{sec:VP1}, which yield stochastic Euler--Poincaré equations.

When all flow maps are $\varepsilon$-dependent, one can switch between the two perspectives where either $g^{\varepsilon}$ or $\bar{g}^\varepsilon$ gives rise to the stochastic Lagrangian trajectory ansatz in the homogenisation limit via a relabelling of the flow of diffeomorphisms. The modelling interpretations of the two perspectives are discussed in Section \ref{sec:otherapproach}.

To pass to the homogenisation limit, the authors of CGH impose additional structural assumptions on the fluctuation map $g'_{t/\varepsilon}$.
In particular, the authors restrict to $\mathcal{D}=\mathbb{R}^d$ and assume $g'_{t/\varepsilon}$ factorises as
\begin{align*}
    g'_{t/\varepsilon}(X) = X + \zeta_{t/\varepsilon}(X)\,,
\end{align*}
where $\zeta: [0, T]\times \bbR^d\rightarrow \bbR^d$ is a map with the initial condition $\zeta_0(X)=0$ for all $X\in \bbR^d$. Here, we note that the decomposition of the fast map into this sum only makes sense in the case of $\mathcal{D}=\mathbb{R}^d$. For a discussion on this type of flow decomposition and its limitations in constructing Generalised Lagrangian Mean theories using ensemble averages, see, e.g., \cite{GV2018}. Continuing from the decomposition above, the composite flow map becomes
\begin{align*}
    g_t(X) = g'_{t/\varepsilon}\circ \bar{g}^{\varepsilon}_t (X) = \bar{g}^{\varepsilon}_t(X) + \zeta_{t/\varepsilon} \left(\bar{g}^{\varepsilon}_t(X)\right)\,.
\end{align*}
This factorisation is a special case of \eqref{eq:CoMdef} with $\Xi^{\varepsilon}= g'_{t/\varepsilon}$ without the additional assumptions \eqref{eq:Xi_eps} and \eqref{eq:lambda_eps}. In pursuit of obtaining a homogenised limit of $\bar{g}^{\varepsilon}$, the authors of CGH apply the time derivative and the chain rule to the decomposition of $g$, obtaining for every $X\in \mathbb{R}^d$:
\begin{align}\label{eq:deriv_g_holm}
   u_t(g_t(X))= \dot{g}_t(X)= Tg'_{t/\varepsilon}  \dot{\bar{g}}^{\varepsilon}_t(X) + \frac{1}{\varepsilon}\left[\frac{\partial  g'}{\partial t}\right]_{t/\varepsilon}\left( \bar{g}^{\varepsilon}_t(X)\right) \,,
\end{align}
where $Tg'_{t/\varepsilon}: \mathbb{R}^d\rightarrow \mathbb{R}^{d\times d}$ is the total derivative of the map $g'_{t/\varepsilon}$ in the spatial position. Assuming the matrix $Tg'_{t/\varepsilon}$ is invertible for all $t\in [0,T]$, rearranging \eqref{eq:deriv_g_holm} and inverting $T g'_{t/\varepsilon}$ yields (c.f., \cite[Eq. 3.9]{CGH2017})
\begin{equation*}
    \dot{\bar{g}}^{\varepsilon}_t(X) = [Tg'_{t/\varepsilon}]^{-1}u_t\left( g_t(X)\right) - \frac{1}{\varepsilon} [Tg'_{t/\varepsilon}]^{-1}    \left[\frac{\partial  g'}{\partial t}\right]_{t/\varepsilon} ( \bar{g}^{\varepsilon}_t(X)) \,.
\end{equation*}
The following assumption is then made:
\begin{align*}
    \left[\frac{\partial  g'}{\partial t}\right]_{t/\varepsilon} (\bar{g}^{\varepsilon}_t(X)) = \sum_{k=1}^K \lambda^k_{t/\varepsilon }\sigma_k ( \bar{g}^{\varepsilon}_t(X)), \quad \dot{\lambda}_t^{\varepsilon} = \varepsilon^{-2} h(\lambda^\varepsilon_t)\,, \quad \lambda_0=\omega \in \Omega\,, \quad \forall t\in [0,T]\,,
\end{align*}
where the dynamical system for $\lambda$ satisfies the conditions given in Assumption \ref{asm:weak_invariance} with an ergodic SRB measure $\bbP$ supported on $\Omega$ and $\sigma \in \mathfrak{X}_{C^{n+2}}(\mcal{D})^K$. Substituting the assumed decomposition yields
\begin{align*}
    \dot{\bar{g}}^{\varepsilon}_t(X) = [Tg'_{t/\varepsilon}]^{-1}u_t\left(g'_{t/\varepsilon}(\bar{g}^{\varepsilon}_t(X))\right) - \frac{1}{\varepsilon}  \sum_{k=1}^K [T g'_{t/\varepsilon}]^{-1} \sigma_k ( \bar{g}^{\varepsilon}_t(X))\lambda^k_{t/\varepsilon }.
\end{align*}
The centering condition in Section \ref{sec:homog}, Assumption \ref{asm:weak_invariance}, is replaced with
\begin{align*}
    \int_\Omega[T g'_{t/\varepsilon}]^{-1} \left[\frac{\partial  g'}{\partial t}\right]_{t/\varepsilon} \bbP ( \dd \omega) = 0 \, .
\end{align*}
With these assumptions, CGH claims one can pass to the $\varepsilon \rightarrow 0$ limit using deterministic homogenisation \cite{kelly2017deterministic} to have $\bar{g}^\varepsilon \rightarrow \bar{g}$, where $\bar{g}: \Omega \times [0,T] \rightarrow \operatorname{Diff}(\mcal{D})$ satisfies an SDE, c.f., \cite[Eq. 4.8]{CGH2017}.

However, we do not see how to do this because of the $\varepsilon$-dependent inverses that appear in both the drift and the noise. One may argue this proposal with further assumptions that there indeed exists such a form which correctly depends on the slow variables, or invoke more general theories of the type \cite[Theorem 5.5]{10.1214/21-AIHP1203}. Only in such a case can one then deduce weak convergence of the corresponding Lagrangian particles $\bar{X}^\varepsilon_t \rightarrow_\bbP \bar{X}_t$ as $\varepsilon \rightarrow 0$, where $\bar{X}^\varepsilon_t = \bar{g}^\varepsilon_t(X)$ and $\bar{X}_t = \bar{g}_t(X)$ for all $X \in \mathbb{R}^d$.

Even with convergence established via the above argument, the assumption that $g_t$ is independent of the parameter $\varepsilon$ is not natural if one has a stochastic variational closure in mind, as the map $g$ invariably couples to $\bar{g}^\varepsilon$ (or its limit) in the Euler--Poincaré equation as discussed in Section \ref{sec:VP1}. A fully explicit example of this occurring in incompressible fluid flow is shown in Section \ref{sec:examples} (equation \eqref{eq:ubar dynamics}).

Let us use the homogenisation analysis presented in Section \ref{sec:homog} with slight assumption modifications to obtain a rough limit for $\bar{g}^\varepsilon_t$. This approach reaches the original aim of CGH and bypasses any technicality of skew product forms, particular forms of the map $\Xi^\varepsilon$, and deduces that the composite flow map converges. We start with the decomposition $g^\varepsilon = \Xi^\varepsilon\circ \bar{g}^\varepsilon$ where $\varepsilon$-dependence of $g^\varepsilon$ is assumed for the consistency of stochastic variational closures and convergence properties. Let $\Xi^\varepsilon$ be the random flow of diffeomorphisms 
satisfying equation \eqref{eq:Xi_eps} and Assumption \ref{asm:weak_invariance}. By Theorem \ref{thm:fast_flow_covergence}, $\Xi^\varepsilon$ converges to $\Xi$ satisfying equation \eqref{eq:Xi_limit} as $\varepsilon \rightarrow 0$. The flow of inverse diffeomorphisms $\Theta^\varepsilon = \Xi^{\varepsilon; -1}$ satisfies the RPDE
\begin{align}
    \dd \Theta^{\varepsilon}_t({X})=-\sum\limits_{k=1}^K (T\Theta^{\varepsilon}_t)(X)\cdot \sigma_k(X)\rmd \bB^{\varepsilon;k}_t\,,  \quad \Theta_0^{\varepsilon}({X}) = {X}\,,
\end{align}
where $\sigma$ and $\mathbf{B}^{\varepsilon;k}_t$ are those defined in the flow of rough diffeomorphisms for $\Xi$ \eqref{eq:RDE Xi}. It can be shown that $\Theta^\varepsilon$ converges to $\Theta := \Xi^{-1} :\Omega \rightarrow C( [0,T] ;\operatorname{Diff}_{C^{n}}(\bbT^d))$ as $\varepsilon \rightarrow 0$, which satisfies the SPDE
\begin{align}
    \dd {\Theta}_t({X})= -\frac{1}{2}\sum \limits_{k,l=1}^K\Gamma^{kl} (T\Theta_t)(X)\cdot \qv{\xi_k}{\xi_l}(X) \dd t - \sum\limits_{k=1}^K (T\Theta_t)(X)\cdot \xi_k(X)\circ \dd W_t^k\,, \quad
\Theta_0(X) = X\,.
\end{align}

Instead of assuming $\bar{g}^\varepsilon \rightarrow_{\mathbb{P}} \bar{g}$ where $\dd \bar{g}(X) = \bar{u}(\bar{g}(X))\,\dt$ for some $\mathbb{F}$-adapted $\bar{u}:\Omega \rightarrow C([0,T]; \mathfrak{X}_{\text{Lip}^n}(\mcal{D}))$ in the homogenisation limit, c.f., Assumption \ref{asm:u_bar_eps}, we postulate that the full flow map $g^\varepsilon$ satisfies the equation
\begin{align*}
    \rmd g^\varepsilon_t (X) = u^\varepsilon(g^{\varepsilon}_t(X))\,\dt,\quad g^\varepsilon_0(X) = X\,,
\end{align*}
for some $u^\varepsilon :\Omega \rightarrow C([0,T]; \mathfrak{X}_{\text{Lip}^n}(\mcal{D}))$ that converges to
\begin{align}
    \rmd g_t (X) = u(g_t(X))\,\dt,\quad g_0(X) = X\,, \label{eq:random u}
\end{align}
for some $\mathbb{F}$-adapted $u :\Omega \rightarrow C([0,T]; \mathfrak{X}_{\text{Lip}^n}(\mcal{D}))$ in the $\varepsilon\rightarrow 0$ limit. Then, we can consider the homogenisation limit as $\varepsilon\rightarrow 0$ for the flow $\bar{g}^\varepsilon$ solving
\begin{align*}
\dot{\bar{g}}^{\varepsilon}_t(X)=\Theta^{\varepsilon}_{t*}u^{\varepsilon}_t(\bar{g}_t^{\varepsilon}(X)) - {\varepsilon^{-1}}\sum\limits_{k=1}^K \Theta^{\varepsilon }_{t*}\sigma_k(\bar{g}^{\varepsilon}_t(X))\lambda_t^{\varepsilon}\,,  \quad \bar{g}_0^{\varepsilon}(X)=X\in \clD\,.
\end{align*}
That is, we have the modelling choice to swap the roles of $u^\varepsilon$ and $\bar{u}^\varepsilon$ in the analysis of Section \ref{sec:homog} by suitably replacing $\Xi^\varepsilon$ with $\Xi^{\varepsilon;-1} = \Theta^\varepsilon$. Repeating the same arguments in Theorem \ref{thm:convergence}, we have $\bar{g}^\varepsilon \rightarrow_{\mathbb{P}} \bar{g}:\Omega \rightarrow C([0,T];\operatorname{Diff}_{C^{n}}(\mcal{D}))$ as $\varepsilon \rightarrow 0$ where the flow map $\bar{g}$ satisfies the following SDE:
\begin{align}\label{eq:g_homog_limit alt}
\rmd \bar{g}_t = \Theta_{t*}u_t(\bar{g}_t(X))\,\dt - \sum\limits_{k=1}^K\Theta_{t*}\xi_k(\bar{g}_t(X)) \circ \dW^k_t - \frac{1}{2}\sum\limits_{k,l=1}^K\Gamma^{kl}\Theta_{t*}\qv{\xi_k}{\xi_l}(\bar{g}_t(X)) \,\dt\,,
\end{align}
which can be expressed in the Eulerian form as
\begin{align}
    \rmd \bar{g}_t\bar{g}_t^{-1}= \Theta_{t*}u_t\,\dt - \sum\limits_{k=1}^K \wt{\xi}_{t,k}\circ \dW^k_t - \frac{1}{2}\sum\limits_{k,l=1}^K\Gamma^{kl}\qv{\wt{\xi}_{t,k}}{\wt{\xi}_{t,l}} \,\dt\,,\label{eq:stoch gbar}
\end{align}
where $\wt{\xi}_{t,k}:= \Theta_{t*}\xi_k$ for all $k$. We remark that in the homogenisation limits that implied the SDE dynamics for $g$ and $\bar{g}$, presented in equations \eqref{eq:g_homog_limit} and \eqref{eq:stoch gbar}, respectively, the same assumptions are placed on the fast flow map $\Xi^\varepsilon$. The assumptions that dictated whether $g$ or $\bar{g}$ satisfies an SDE are the assumptions that $\bar{g}$ or $g$ is differentiable in time, respectively.

To find the variational closure dynamics for the unknown vector field $u$, the most natural method is to insert $u$ into the Lagrangian of an action principle as $u = \dd g_t\,g_t^{-1}$ is the vector field associated with the full homogenised flow map $g$. As there are no stochastic constraints on $u$ or $g$, applying the Euler--Poincaré variational principle to a Lagrangian of $u$ will yield a deterministic equation for $u$. To couple to the flow map $\bar{g}$, we consider a modified vector field $\bar{u} := \Theta_* u = \Xi^* u$ and follow a similar procedure to that considered in Section \ref{sec:VP1}. Namely, to obtain the SALT Euler--Poincaré equation for $\bar{u}$, we consider the following. Let $V^*$ be the space of advected quantities where $a_0 \in V^*$ is some initial condition. Under the flow defined by equations \eqref{eq:random u} and \eqref{eq:stoch gbar}, we define $\bar{a}_t = a_0 \bar{g}_t^{-1}$ and $a_t = a_0 g_t^{-1}$ whose dynamics satisfy the following SPDE and random-coefficient PDE:
\begin{align}
\begin{split}
    &\dd \bar{a}_{t} + \mcal{L}_{\bar{u}_{t}} \bar{a}_{t}\,\dt - \sum\limits_{k=1}^K\mcal{L}_{\wt{\xi}_{t,k}} \bar{a}_{t}\circ \dW^k_t - \frac{1}{2}\sum\limits_{k,l=1}^K\mcal{L}_{\Gamma^{kl}\qv{\wt{\xi}_{t,k}}{\wt{\xi}_{t,l}}} \bar{a}_{t}\,\dt = 0\,,\\
    &\dd a_{t} + \mcal{L}_{u_{t}}a_{t}\,\dt = 0 \,,
\end{split} \label{eq:mean and full adv quantities alt g}
\end{align}
respectively. We consider a constrained Euler--Poincaré variational principle similar to Corollary \ref{cor:EP equivalent},
\begin{align*}
    0 = \delta S = \delta \int_0^T\ell(\bar{u}_t,\bar{a}_t)\,\dt\,,
\end{align*}
subject to constrained variations that are consistent with their definitions:
\begin{align*}
    \delta \bar{u}_t = \Ad_{\Theta_t}\left(\p_t v - \ad_{\Ad_{\Theta_t^{-1}}u_t}v\right)\,,\qquad \delta \bar{a}_t = -\left(\mcal{L}_v (\bar{a}_t\Theta^{-1}_t)\right)\Theta_t\,,
\end{align*}
where $v:\Omega \rightarrow C^1([0,T], \mathfrak{X}(\mcal{D}))$ is assumed to be an arbitrary variation. Then, the stationary conditions imply the following random-coefficient Euler--Poincaré equation:
\begin{align}
    \dd\left(\Theta^*_t\frac{\delta \ell}{\delta \bar{u}_t}\right) + \mathcal{L}_{u_t} \left(\Theta^*_t\frac{\delta \ell}{\delta \bar{u}_t}\right)\,\dt = \left(\Theta^*_t\frac{\delta \ell}{\delta \bar{a}_t}\right)\diamond a_t \,\dt\,,
\end{align}
which is equivalent to the stochastic equation
\begin{align}
    \dd \frac{\delta \ell}{\delta \bar{u}_{t}} + \ad^*_{\bar{u}_{t}} \frac{\delta \ell}{\delta \bar{u}_{t}}\,\dt - \sum\limits_{k=1}^K\ad^*_{\wt{\xi}_{t,k}} \frac{\delta \ell}{\delta \bar{u}_{t}} \circ \dW^k_t - \frac{1}{2}\sum\limits_{k,l=1}^K\ad^*_{\Gamma^{kl}\qv{\wt{\xi}_{t,k}}{\wt{\xi}_{t,l}}}\dede{\ell}{\bar{u}_{t}}\,\dt = \frac{\delta \ell}{\delta \bar{a}_{t}} \diamond \bar{a}_{t}\,\dt\,.
    \label{eq:EPA full 2017 ver}
\end{align}
We remark that the dynamics of $u_t$ in \eqref{eq:EPA full} and the dynamics of $\bar{u}_t$ in \eqref{eq:EPA full 2017 ver} are extremely similar.
In fact, by interchanging the prognostic vector fields $u_t$ and $\bar{u}_t$, the advected quantities $a_t$ and $\bar{a}_t$, and the stochastic vector fields $\xi_k$ and $-\wt{\xi}_{t,k}$ respectively, one sees that \eqref{eq:EPA full} and \eqref{eq:EPA full 2017 ver} are indeed the same.
However, the modelling assumptions that preceded these variational closures assign different interpretations to each equation.

\subsection{Modelling interpretations}\label{sec:otherapproach}
Let us denote observed advected quantities in the real world by $a^R$ and let $g^\varepsilon$ denote the full multi-scale Lagrangian flow map. Under these notations, it is most natural to model the observed advected quantities as $a_t^R=a_t^{\varepsilon}=g_{t*}^{\varepsilon} a_0$ for some $\varepsilon\ll 1$. In both CGH and the present work, the full Lagrangian flow is decomposed as $g^{\varepsilon}=\Xi^{\varepsilon}\circ \bar{g}^{\varepsilon}$\footnote{In fact, in CGH, the authors consider the case where $g$ is independent of $\varepsilon$.}, albeit with different assumptions placed on $\Xi^\varepsilon$ and $\bar{g}^\varepsilon$. Defining $\bar{a}_t=\bar{g}_{t*}^{\varepsilon}a_0$, we have $a^R_t=\Xi^{\varepsilon}_{t*}\bar{a}_{t}$.

In CGH, the authors heuristically show that $\bar{g}^{\varepsilon}$ converges to a stochastic flow map $\bar{g}$, which implies that $\bar{a}_t=\bar{g}_{t*}a_0$ satisfies an SPDE. The authors explain that this motivates the original ansatz of the SALT modelling approach \cite{Holm2015}. While the original paper \cite{Holm2015} and subsequent works, e.g., \cite{CCHOS18a}, do not denote the quantities satisfying the stochastic flow and SPDE with the over-bar $\bar{(\cdot)}$ notation, with CGH in mind, one is meant to interpret them as such; indeed, see \cite[Section 3]{CCHOS18a} which uses the word ``averaged'' to describe the stochastic Lagrangian particles in their modelling approach, which is presumably the limit of $\bar{g}^{\varepsilon}$ given in CGH. Under this model, one can recover the observed advected quantities $a_t^R$ via $a_t^R = \Xi_{t*} \bar{a}_t$; thus, the observed advected quantities are modelled by a random partial differential equation given in \eqref{eq:mean and full adv quantities alt g}.

In Section \ref{sec:homog}, we provide sufficient assumptions to prove that $g^{\varepsilon}$ converges to a stochastic flow map $g$. This implies that the observables $a_t^R=a_t=g_{t*}a_0$ satisfy an SPDE, and that $\bar{a}_t=\bar{g}_{t*}a_0$ satisfies a random PDE, both of which are given in equation \eqref{eq:mean and full adv quantities}.

The decision on how to interpret observed advected quantities is ultimately up to the preferences and viewpoint of those utilising the SALT model. Specifically, if the model dictates that the observed advected quantities should satisfy a random PDE, the model defined by equations \eqref{eq:EPA full 2017 ver} and \eqref{eq:mean and full adv quantities alt g} should be used. Alternatively, if the model dictates that the observed advected quantities should satisfy a stochastic PDE, the model defined by equations \eqref{eq:EPA full} and \eqref{eq:mean and full adv quantities} should be used.

Our contribution in the present work mathematically validates \emph{both} of these viewpoints through rigorous homogenisation arguments. Starting from the decomposition $g^\varepsilon = \Xi^\varepsilon \circ \bar{g}^\varepsilon$, either model can be justified from the homogenisation procedure described in Section \ref{sec:homog}. Furthermore, our approach is capable of deriving closures from a constrained Euler--Poincaré variational principle consistent with how one chooses to interpret the SALT modelling approach, by fixing an initial choice of ``rough'' limit map $g$ or $\bar{g}$.

\section{Concluding Remarks and Future Work}\label{sec:conclusion}

In this work, we (i) obtain a stochastic flow of diffeomorphisms as the homogenisation limit of a composition of fast-slow flows of diffeomorphisms, and (ii) use variational principles to derive stochastic closures for the slow dynamics of the homogenised flow while respecting its composition structure. Through this approach, we provide new perspectives on the derivation of the stochastic Euler--Poincaré equations for ideal fluid flows, as first proposed in \cite{Holm2015}.

In what follows, we summarise the main contributions of this work by section and discuss potential future research directions arising from our results.

In Section \ref{sec:homog}, we constructed a stochastic flow of diffeomorphisms as the flow map of a rough differential equation possessing a map composition structure using an iterated weak invariance principle (WIP) \cite{kelly2017deterministic, 10.1214/21-AIHP1203, chevyrev2019multiscale}. Our starting assumption was that $g^\varepsilon$ decomposes into slow and fast timescales: $g^\varepsilon = \Xi^\varepsilon \circ \bar{g}^\varepsilon$. We proved that the fast flow $\Xi^\varepsilon$ converges in law to a solution of an SDE via identification of a rough flow map and application of the WIP. Under the assumption that the slow flow is continuously related to a driving path of the fast map, we proved the same continuous dependence for the full map $g^\varepsilon$. This allowed us to use both the continuous mapping theorem and WIP to show that a limit $g$ exists. The Kunita-Itô-Wentzel formula \cite{de2020implications, len2023geometric} then yields an SDE representation for $g$, which we identify as satisfying the stochastic Lagrangian flow ansatz \eqref{eq:SALTansatz}.

In Section \ref{sec:vp}, we considered variational closures for the dynamics of the velocity fields $\bar{u}_t$ and $u_t := \Ad_{\Xi_t} \bar{u}_t$, where $u_t$ is associated with the drift coefficient of the stochastic flow $g = \Xi \circ \bar{g}$ resulting from the analysis of Section \ref{sec:homog}. We constructed a particular class of $\epsilon$-parameterised perturbations of $u_t$ in \eqref{eqn:perturbed dg} (distinct from the $\varepsilon$-dependence of the flow maps before homogenisation) by deforming only the $\bar{g}$ component of $g$ to $\bar{g}_\epsilon$, such that the deformed flow $g_\epsilon$ retains the form $g_\epsilon = \Xi \circ \bar{g}_\epsilon$. From these constructed perturbations, we derived the random Euler--Poincaré equation with advected quantities \eqref{eq:EPA mean 2} using a random, time-dependent Lagrangian. We show that this equation is equivalent to the stochastic Euler--Poincaré equation with advected quantities \eqref{eq:EPA full} that appeared in prior work, e.g., \cite{Holm2015}. Additionally, we considered the special case where the stochastic flow $\Xi_t$ is an isometry of the metric on the Riemannian manifold for $t\in [0,T]$. In this case, we showed that the Hamiltonian is preserved along the flow of the stochastic Euler--Poincaré equation \eqref{eq:EPA full}.

Motivated by averaging theory \cite{pavliotis2008multiscale} and generalised Lagrangian mean theory \cite{andrews_mcintyre_1978, GV2018, gilbert2024geometric}, we considered deterministic variational closure dynamics for $\bar{u}_t$ in Section \ref{sec:det vp}. In this setting, we replaced Assumption \ref{asm:u_bar_eps} with Assumption \ref{weakgbarflow} on $\bar{g}$ and established the convergence of the map $g^\varepsilon$. Then, through a Lagrangian functional that takes the expectation of semimartingale arguments, application of Hamilton's principle resulted in the deterministic equation \eqref{eq:ELASALT EP 1} for the mean momentum.

To illustrate our modelling approaches, we provided applications to the incompressible Euler equations in Section \ref{sec:examples}. In Section \ref{sec:Euler}, we explicitly derived the equivalent random-coefficient and stochastic forms of the incompressible Euler equations through an Euler--Poincaré variational principle and examined the associated Kelvin circulation and vorticity dynamics. We showed that when $\Xi_t$ is an isometry for $t\in [0,T]$, the random-coefficient form of the two-dimensional Euler equation is greatly simplified. In Section \ref{subsec:avgeuler}, we obtained a variant of the incompressible Euler equation as an averaged Euler--Poincaré equation, where the fluid velocity $\bar{u}_t$ satisfies a deterministic evolution following the averaging closure discussed in Section \ref{sec:det vp}.

In Section \ref{prevwork}, we compared our homogenisation analysis and variational closures governing the dynamics of $\bar{u}$ and $u$ with previous works, namely \cite{CGH2017}. We highlighted the differences in modelling assumptions between the present work and \cite{CGH2017}, expressed their homogenisation analysis in the framework of Section \ref{sec:homog}, and formulated stochastic variational principles to derive stochastic equations of motion based on their modelling assumptions, using constructions similar to those in Section \ref{sec:vp}. The impacts and interpretations of the modelling assumptions were discussed in Section \ref{sec:otherapproach}.

\paragraph{Open problems and future work.} Following this work, several open problems should be addressed in future research. The analysis using rough path and Lie-theoretic modelling approaches taken in this paper allows us to investigate a more general class of equations than those presented. We outline such a programme below.

\begin{itemize}
    \item Construct $\Xi$ as a rough flow of group elements solving a rough differential equation (RDE) of the type $\dd \Xi_t = \xi (\Xi_t ) \dd \bZ_t $ for a rough path $\bZ_t = (Z_t, \bbZ_t)$. This can be done with simple modifications of Assumptions \ref{asm:weak_invariance} and \ref{asm:u_bar_eps} to use a generic path $\bZ$. This way, we are not restricted to the Brownian motion case, which was originally constructed as the limit of multi-time dynamics for fast eigenvalues $\lambda^{ \varepsilon}$ by homogenisation theory. It suffices to ask whether the path $\bZ_t = (Z_t, \bbZ_t)$ satisfies such a WIP. Some iterated weak invariance principles have been proven by Gehringer and Li for fractional Brownian motion (see \cite{gehringer2019homogenization}).
    
    \item Generalise the map composition structure to an arbitrary Lie group $G$ and study the convergence of $\Xi$ and its composition to a rough flow in $G$ in the Lie group topology. Formulate the associated variational principles on an arbitrary Lie group $G$ in combination with $\Xi$ being the flow of an RDE defined on the Lie algebra of $G$. Following the same variational approach as in Section \ref{sec:vp}, we believe the rough differential equations derived in \cite{crisan2022variational} will naturally follow from this approach.
\end{itemize}

Studies of the analytical properties of stochastic Euler--Poincaré equations under transport noise are a promising direction due to their connection to random-coefficient Euler--Poincaré equations via pullback. By adopting stochastic flow arguments similar to the Yudovich theorems \cite{Yudovich1963, Yudovich1995, galeati2025wellposednessnonlinearroughcontinuity} for two-dimensional Euler equations to a variety of stochastic Euler--Poincaré equations, novel well-posedness results can be obtained.

\subsection*{Acknowledgments}
We are especially grateful to G. Gottwald and G. Pavliotis for their stimulating discussions on homogenisation theory. We also wish to thank C. Cotter, D. Holm, and O. Street for several thoughtful suggestions during the course of this work, which have improved or clarified the interpretation of its results. The work of TD is supported by an EPSRC PhD Scholarship. RH is grateful for support from the Office of Naval Research (ONR) grant award N00014-22-1-2082, Stochastic Parameterization of Ocean Turbulence for Observational Networks. JML acknowledges support received from the US AFOSR Grant FA8655-21-1-7034.

\appendix
\section{Appendix}\label{app}
\subsection{Rough paths and function spaces}
\label{app:1}
Given a path $Z:[0,T]\rightarrow \bbR^d$, we define its increment $\delta Z:[0,T]^2\rightarrow \bbR^d$ by $\delta Z_{st}=Z_t-Z_s$. Moreover, given a two-index map $\bbZ:[0,T]^2\rightarrow \bbR^d$, we define its increment $\delta Z:[0,T]^3\rightarrow \bbR^d$ by $$\delta \bbZ_{sut}=\bbZ_{st}-\bbZ_{su}-\bbZ_{ut}\,.$$

\begin{definition} \label{defRP}
Let $K\in \bbN$, $\alpha \in (\frac13 ,\frac12 ]$, and $T\in \bbR_+$. A pair $\bZ=(Z,\mathbb{Z})$ such that $Z:[0,T]\rightarrow \mathbb{R}^K$ and $\mathbb{Z}:[0,T]^2\rightarrow \mathbb{R}^{K\times K}$ is called an $\alpha-$H\"older continuous rough path on the interval $[0,T]$ if 
\begin{equation*}
[Z]_{\alpha} := \sup_{s\ne t\in [0,T]} \frac{\left|\delta Z_{st}\right|}{|t-s|^\alpha}<\infty\,, \quad [\bbZ]_{2\alpha} := \sup_{s\ne t\in [0,T]} \frac{\left|\bbZ_{st}\right|}{|t-s|^{2\alpha}}<\infty \,
\end{equation*}
and Chen's relation holds:
$$
\delta \bbZ_{sut} =\delta Z_{su}\otimes \delta Z_{ut} \quad \forall (s,t)\in [0,T]^2\,.
$$
We denote by $\clC^{\alpha}([0,T]; \mathbb{R}^K)$ the complete metric space of $\alpha-$H\"older continuous rough paths on the interval $[0,T]$, endowed with the metric
$$
d(\mb{Z}^1, \mb{Z}^2):=[Z^1-Z^2]_{\alpha} + [\bbZ^1-\bbZ^2]_{2\alpha}.
$$
We also define the the following norm-like function on $\clC^{\alpha}([0,T]; \mathbb{R}^K)$:  $$\| \mb{Z} \|_{\clC_g^{\alpha}([0,T]; \mathbb{R}^K)} := [Z]_\alpha + \sqrt{[\bbZ]_{2\alpha}}.$$
  Moreover, we let $\clC_g^{\alpha}([0,T]; \mathbb{R}^K)$ denote the closure of 
\[ \left\{(Z, \mathbb{Z}) \in C^1([0,T];\bbR^K) \oplus C^1([0,T]^2;\bbR^{K \times K}) : \mathbb{Z}_{st}=\int_s^t \delta Z_{su} \otimes \dd {Z}_u \right\}  \]
in $\clC^{\alpha}([0,T]; \mathbb{R}^K)$, which is separable \cite[Exercise 2.8]{friz2020course}.
\end{definition}

Given $d\in \bbN$, let $\bbT^d$ denote the flat rational torus. Given $n\in \bbN$, let $\fkX_{C^n}(\bbT^d)$ denote the Banach space of $n$-times continuously differentiable vector fields on $\bbT^d$. 

\begin{definition}
Let $\alpha \in (\frac13, \frac12]$,   $\bZ=(Z,\bbZ)\in \clC^{\alpha}([0,T]; \mathbb{R}^K)$, $d\in \bbN$, $X\in \bbT^d$, $b\in \fkX_{C^0}(\bbT^d)$, and $\xi \in \fkX_{C^2}(\bbT^d)^K$. We say that a path $Y: [0,T]\rightarrow \bbT^d$ is a solution of the rough differential equation
\begin{align}\label{eq:rde}
dX_t = b(X_t)\rmd t + \sum_{k=1}^K\xi_k(X_t) \rmd \bZ_t^k, \quad t\in (0,T]\,, \quad X_0=X\in \bbT^d,
\end{align}
if $Y_0=X$ and $R:[0,T]^2\rightarrow \bbT^d$ defined by
$$
R_{st}=\delta X_{st}- \int_0^t b(X_s)\rmd s -  \sum_{k=1}^K\xi_k(X_t) \delta Z_{st}^k - \sum_{k,l=1}^K \xi_k[\xi_l](X_s)\bbZ^{lk}_{st} 
$$
satisfies $$[R]_{3\alpha}=\sup_{s\ne t\in [0,T]} \frac{|R_{st}|}{|t-s|^{3\alpha}}<\infty\,.$$
\end{definition}

\begin{definition}[$C^n$ Diffeomorphisms] Given $n \in \bbN$ and a smooth compact boundaryless manifold $\mc{D}$, we define $\operatorname{Diff}_{C^n}(\mcal{D})$ to be space of $C^n$-diffeomorphisms. We endow $\operatorname{Diff}_{C^n}(\mcal{D})$ with the Whitney topology \cite{hirsch1976differential}; that is, $g^n \rightarrow g$ in $\operatorname{Diff}_{C^n} (\mcal{D})$ if and only if for any pair of charts $(\varphi, U), (\psi, V)$ on $\mcal{D}$ such that $g^n(U)$, $ g(U)$ are subsets of $V$, the maps $\wt{g}^{\, n} := \psi \circ g^n \circ \varphi^{-1}, \wt{g} :=  \psi \circ g\circ \varphi^{-1}  : \varphi(U) \subset \bbR^d \rightarrow \psi(V) \subset \bbR^d$ are such that for all multi-indices $|\beta|\le n$, 
\begin{equation*}
\begin{aligned}
& \sup_{X\in \bbR^d}\left \vert \partial^{\beta}_X\wt{g}^{\,n} (X)-\partial^{\beta}_ X\wt{g}(X) \right\vert \rightarrow 0 \quad  \textnormal{and} \quad  \sup_{X\in \bbR^d}\left \vert \partial^{\beta}_X\wt{g}^{\, n;-1} (X)-\partial^{\beta}_ X\wt{g}^{-1}(X) \right\vert \rightarrow 0\, .
\end{aligned}
\end{equation*}
Under this topology, $\operatorname{Diff}_{C^n}(\mcal{D})$ becomes a topological group with the composition operation \cite[Thm. 2.3.2]{marsden_ebin_fischer_1972}. This topological group is complete and separable, may be endowed with the structure of a Polish group \cite{BAXENDALE1984}.
\label{def:diffeo}
\end{definition}

\begin{definition} 
Let $T>0$, $\alpha\in (\frac13,\frac12]$, and $n\in \bbN$. We denote by $C^{\alpha}([0,T]; \operatorname{Diff}_{C^n}(\bbT^d))$ the set of time-dependent diffeomorphisms $g: [0,T]\rightarrow  \operatorname{Diff}_{C^n}(\bbT^d)$ satisfying 
    $$
     \sup_{X\in \clD, \beta: |\beta|\le n} [\partial^{\beta}g_{\cdot}
     (X)]_{\alpha} <\infty  \quad \textnormal{and}\quad  \sup_{X\in \clD, \beta: |\beta|\le n} [\partial^{\beta}g^{-1}_{\cdot}(X)]_{\alpha}<\infty \,.
    $$
The space  $C^{\alpha}([0,T]; \operatorname{Diff}_{C^n}(\bbT^d))$ is a polish space when endowed with the metric  \cite[Ex.\ 11.17]{friz2010multidimensional}
\[d(g,h) := \sup_{\beta : \, \, |\beta| \leq n, \, \, X \in \bbT^d } |\partial^{\beta}g(X) - \partial^{\beta}h(X)|_{\alpha}   + \sup_{\beta : \, \, |\beta| \leq n, \, \, X \in \bbT^d } |\partial^{\beta}g^{-1}(X) - \partial^{\beta}h^{-1}(X)|_{\alpha}   \, .\]
\end{definition}

\begin{definition} Let $(E,d_E)$ and $(F,d_F)$ be two metric spaces. We say $f:E \rightarrow F$ is locally Lipschitz if for all $x \in E$, there exists a $\delta>0$ and $K>0$ such that for all $y,z\in B_{\delta}(x)$,
\[ d_F(f(y), f(z)) \le K d_E (y,z) \, .\]

We denote the space of such maps by $\operatorname{Lip}_{loc}(E, F)$.
\end{definition}

Given $\gamma>0$, we write 
$
\gamma = \lfloor \gamma \rfloor + \{\gamma\},
$
where $\lfloor \gamma \rfloor \in \mathbb{N}$ and $\{\gamma\}\in (0,1]$. Let $\mathfrak{X}_{\operatorname{Lip}^{\gamma}}(
\bbT^d)$ denote the space of $\lfloor \gamma \rfloor$-times differentiable vector-fields $\xi : \bbT^d\rightarrow \bbR^d$ such that $\partial^{\beta}\xi$ is $\{\gamma\}$-H\"older for all multi-indices $|\beta| = \lfloor \gamma \rfloor$.

\begin{theorem}[Theorem 8.15 in \cite{friz2020course}, Proposition 11.1, Theorem 11.(2-3) in \cite{friz2010multidimensional}]\label{thm:rough_flow_map}
Let $K\in \bbN$, $\alpha \in (\frac13, \frac12]$, $T\in \bbR_+$, $n\in \bbN$. Assume that $b\in \fkX_{\operatorname{Lip}^{n+\delta}}(\bbT^d)$ for $\delta>0$ and $\xi \in \mathfrak{X}_{\operatorname{Lip}^{n+\gamma - 1}}(
\bbT^d)^K$ for $\gamma> \alpha^{-1}$. Then for all initial conditions $X\in \bbT^d$ and rough paths $\bZ\in \clC_g^{\alpha}([0,T]; \mathbb{R}^K)$, there exists a unique solution of \eqref{eq:rde}. Furthermore, there exists a one-parameter flow map $$\Phi \in \operatorname{Lip}_{loc}\left(\clC_g^{\alpha}([0,T]; \mathbb{R}^K);C^{\alpha}( [0,T] ;\operatorname{Diff}_{C^n}(\bbT^d))\right)$$ satisfying $X_t=\Phi_t(X,\bZ)$ for all $t\in [0,T]$ and  $X\in \bbT^d$.
\end{theorem}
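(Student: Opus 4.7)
The plan is to combine the classical rough path fixed-point construction with a bootstrap argument on spatial derivatives, closing everything with a time-reversal argument for invertibility. Since the statement is cited from Friz--Victoir and Friz--Hairer, I will sketch the standard route as it adapts to the torus setting.

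First I would establish existence and uniqueness of the RDE \eqref{eq:rde} for a fixed initial condition $X\in \bbT^d$ and rough path $\bZ \in \clC_g^\alpha$. The canonical approach is to reformulate \eqref{eq:rde} via the Gubinelli derivative: one seeks $Y$ such that $(Y, \xi(Y))$ is controlled by $Z$, so that the rough integral $\int_0^t \xi(Y_s) \rmd \bZ_s$ can be defined through the sewing lemma. A Picard iteration on a short time interval $[0,\tau]$, with $\tau$ depending on $\|\bZ\|_{\clC_g^\alpha}$ and the $\operatorname{Lip}^\gamma$-norm of $\xi$, yields a contraction in the space of $\alpha$-H\"older controlled paths since $\gamma \alpha > 1$ provides the extra regularity needed to close the sewing. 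Standard concatenation then gives a global solution on $[0,T]$, and uniqueness follows from the same contraction. The drift $b\,\rmd t$ is treated as a Young integral and adds no new subtlety.

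Next I would extract the Lipschitz dependence. The key estimate is that if $\bZ^{(1)}, \bZ^{(2)}$ are two rough paths and $X^{(1)}, X^{(2)}$ two initial conditions in a bounded ball of $\clC_g^\alpha \times \bbT^d$, then the corresponding controlled-path solutions $Y^{(1)}, Y^{(2)}$ satisfy
\begin{equation*}
[Y^{(1)} - Y^{(2)}]_\alpha + [R^{(1)} - R^{(2)}]_{3\alpha} \le C\bigl(|X^{(1)} - X^{(2)}| + d(\bZ^{(1)}, \bZ^{(2)})\bigr),
\end{equation*}
with $C$ depending on the ball radius and on $\|\xi\|_{\operatorname{Lip}^{n+\gamma-1}}$, $\|b\|_{\operatorname{Lip}^{n+\delta}}$. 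This is a standard computation where one expands the differences of rough integrals and applies the stability version of the sewing lemma; the content is essentially the local-Lipschitz statement for the It\^o--Lyons map. Taking $X^{(1)} = X^{(2)}$ fixes $\Phi_t(X, \cdot)$ as a locally Lipschitz map on $\clC_g^\alpha$, while taking $\bZ^{(1)} = \bZ^{(2)}$ gives continuity in the initial condition.

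The main obstacle is the $C^n$ regularity of $X \mapsto \Phi_t(X, \bZ)$ and invertibility. For regularity, the plan is to differentiate \eqref{eq:rde} formally and show that the Jacobian $J_t = \partial_X \Phi_t(X, \bZ)$ satisfies its own linear RDE
\begin{equation*}
\rmd J_t = \nabla b(\Phi_t(X,\bZ)) J_t \, \rmd t + \sum_{k=1}^K \nabla \xi_k(\Phi_t(X,\bZ)) J_t \, \rmd \bZ_t^k,
\end{equation*}
driven by the same rough path but with coefficients depending on $\Phi$. Since $\xi \in \operatorname{Lip}^{n+\gamma-1}$, the coefficient $\nabla \xi_k \circ \Phi$ is a controlled path of the required regularity, so the linear RDE admits a unique solution; iterating the differentiation up to order $n$ and using the regularity budget $n + \gamma - 1$ for $\xi$ and $n+\delta$ for $b$, one obtains $\Phi_t(\cdot, \bZ) \in C^n(\bbT^d)$ with estimates uniform on bounded sets of $\clC_g^\alpha$, and jointly $\alpha$-H\"older in time on the derivatives.

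Finally, for the diffeomorphism property, the standard trick is to run the time-reversed RDE driven by the reversed rough path $\bZ_{T-\cdot}$; by the same existence--regularity argument it produces a flow $\Psi$ of $C^n$ maps. Using that the rough integral satisfies the fundamental theorem of calculus, $\Psi_t \circ \Phi_t = \operatorname{id}$ and $\Phi_t \circ \Psi_t = \operatorname{id}$ on $\bbT^d$, so $\Phi_t(\cdot, \bZ) \in \operatorname{Diff}_{C^n}(\bbT^d)$. Continuity in the Whitney topology then follows by applying the Lipschitz estimate to both $\Phi$ and its inverse, combined with the flow-of-diffeomorphisms topology of \cite[Ex.\ 11.17]{friz2010multidimensional}, giving membership of $\Phi$ in $\operatorname{Lip}_{loc}(\clC_g^\alpha; C^\alpha([0,T]; \operatorname{Diff}_{C^n}(\bbT^d)))$ as claimed. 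The genuinely hard part is keeping track of how many derivatives of $\xi$ are consumed in the bootstrap for the derivative flows while retaining the sub-unit remainder budget $3\alpha > 1$ required by the sewing lemma; this is exactly where the gap between $n+\gamma-1$ and the final $n$ is spent.
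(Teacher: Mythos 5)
The paper gives no proof of this theorem — it is stated as a citation to Friz--Hairer (Theorem 8.15) and Friz--Victoir (Chapter 11), and your sketch is a correct outline of exactly the standard argument those references carry out: controlled-path/sewing existence and uniqueness, local Lipschitz continuity of the It\^o--Lyons map, the linear RDEs for the derivative flows under the $\operatorname{Lip}^{n+\gamma-1}$ budget, and time reversal for the inverse flow in the Whitney topology. No gaps to report.
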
 

\begin{theorem}[Theorem 9.1 in \cite{friz2020course}] \label{thm:RDE2SDE}
Let the assumptions of Theorem \ref{thm:rough_flow_map} hold and $(\Omega,\mathcal{F}, \bbF=\{\clF_t\}_{t\le T},\bbP)$ denote a complete filtered probability space supporting an $\bbR^K$-dimensional  Wiener processes $W=(W^1,\ldots, W^K)$ with independent components. Let $\bW=(W,\bbW)\in \clC_g^{\alpha}([0,T]; \mathbb{R}^K)$ denote the Stratonovich lift of $W$. Then  the unique solution of the rough differential equation
$$
dX_t = b(X_t)\rmd t + \sum_{k=1}^K\xi_k(X_t) \rmd \bW_t^k, \quad t\in (0,T]\,, \quad X_0=X\in \bbT^d,
$$
is the strong solution of the Stratonovich stochastic differential equation 
\begin{align}\label{eq:sde}
dX_t = b(X_t)\rmd t + \sum_{k=1}^K\xi_k(X_t) \circ \rmd W_t^k, \quad t\in (0,T]\,, \quad X_0=X\in \bbT^d\,.
\end{align}
Moreover, the stochastic flow maps \cite{kunita1990stochastic} induced by \eqref{eq:sde} and denoted $\phi: \Omega \times [0,T]\rightarrow \operatorname{Diff}_{C^n}(\bbT^d)$ satisfies $\bbP$-a.s., $\phi=\Phi(\cdot, \bW)$.
\end{theorem}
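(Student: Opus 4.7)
\textbf{Proof proposal for Theorem \ref{thm:RDE2SDE}.} The strategy is the standard Wong--Zakai bridge between Stratonovich SDEs and rough differential equations driven by the Brownian rough path, exploiting the continuity of the rough flow map $\Phi$ from Theorem \ref{thm:rough_flow_map}. The plan is to approximate $W$ by smooth paths whose rough lifts converge to $\bW$, identify the corresponding RDE solutions with classical ODE solutions, and pass to the limit on both the RDE side (using continuity of $\Phi$) and the SDE side (using the classical Wong--Zakai theorem), then match the limits by uniqueness.

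First, for each $n\in\bbN$ let $W^n$ denote the piecewise linear interpolation of $W$ along the dyadic partition $\{kT/2^n : 0\le k\le 2^n\}$, and let $\bW^n=(W^n,\bbW^n)\in \clC_g^{\alpha}([0,T];\bbR^K)$ denote its canonical lift, i.e.\ $\bbW^n_{st}=\int_s^t \delta W^n_{su}\otimes \dd W^n_u$ computed as an ordinary Riemann--Stieltjes integral. A classical result (see, e.g., \cite[Thm.~13.22]{friz2010multidimensional} or the corresponding statement in Friz--Hairer for the Stratonovich lift) asserts that $\bW^n\to \bW$ in probability in $\clC_g^{\alpha}([0,T];\bbR^K)$ as $n\to\infty$, precisely because the Stratonovich iterated integrals are the limits of iterated integrals of piecewise linear approximations.

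Next, fix $X\in\bbT^d$. Since $W^n$ is $C^1$ in time, the RDE driven by $\bW^n$ reduces to the ODE
\begin{equation*}
\dot{X}^n_t = b(X^n_t) + \sum_{k=1}^K \xi_k(X^n_t)\,\dot{W}^{n;k}_t\,,\qquad X^n_0=X\,,
\end{equation*}
and by Theorem \ref{thm:rough_flow_map} the unique solution is $X^n_t=\Phi_t(X,\bW^n)$. Applying the local Lipschitz continuity of $\Phi: \clC_g^{\alpha}([0,T];\bbR^K)\to C^{\alpha}([0,T];\operatorname{Diff}_{C^n}(\bbT^d))$ from Theorem \ref{thm:rough_flow_map} together with $\bW^n\to_\bbP \bW$, the continuous mapping theorem \cite[Thm.~2.7]{billingsley1968convergence} yields $\Phi(\cdot,\bW^n)\to_\bbP \Phi(\cdot,\bW)$ in $C^{\alpha}([0,T];\operatorname{Diff}_{C^n}(\bbT^d))$. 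In particular, $X^n_\cdot\to_\bbP \Phi_\cdot(X,\bW)$ uniformly on $[0,T]$.

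On the probabilistic side, the classical Wong--Zakai theorem for Stratonovich SDEs (e.g.\ Ikeda--Watanabe, or Kunita \cite{kunita1990stochastic}) states that the ODE solutions $X^n$ driven by the piecewise linear approximations converge in probability, uniformly on $[0,T]$, to the strong solution of the Stratonovich SDE \eqref{eq:sde}; this is exactly the statement that the correction term generated by the iterated integrals of $W^n$ is the Stratonovich correction, with no extra drift. Combining both limits and the uniqueness of limits in probability identifies $\Phi_t(X,\bW)$ with the strong solution of \eqref{eq:sde} for each fixed $X\in\bbT^d$, on a $\bbP$-full-measure set. To upgrade this to the flow identity $\phi=\Phi(\cdot,\bW)$ on a single $\bbP$-full-measure set, I would use that Kunita's stochastic flow $\phi:\Omega\times[0,T]\to \operatorname{Diff}_{C^n}(\bbT^d)$ is a continuous-in-$X$ modification of the family of SDE solutions, while $\Phi(\cdot,\bW)$ is also continuous in $X$ by Theorem \ref{thm:rough_flow_map}, so the two continuous-in-$X$ versions agreeing on the countable dense set $\bbQ^d\cap\bbT^d$ forces indistinguishability.

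The main technical obstacle is the rough-path Wong--Zakai step $\bW^n\to_\bbP \bW$ in $\clC_g^{\alpha}$, since H\"older convergence of the L\'evy area of piecewise linear approximations requires Kolmogorov-type moment estimates, not merely uniform convergence. However, this is a textbook result for Brownian motion and can be cited directly from \cite{friz2020course, friz2010multidimensional}, so in practice the proof reduces to assembling the three ingredients: rough Wong--Zakai, continuity of $\Phi$, and the classical SDE Wong--Zakai theorem.
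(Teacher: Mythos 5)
Your argument is correct, but it is worth noting that the paper does not actually prove Theorem \ref{thm:RDE2SDE}: it is imported verbatim as Theorem 9.1 of \cite{friz2020course}, and the proof there proceeds by a \emph{direct identification of integrals} rather than by approximation. Friz--Hairer show that for an adapted controlled rough path the compensated Riemann sums defining the rough integral against the Stratonovich Brownian lift $\bW$ converge, in probability, to the Stratonovich stochastic integral (the second-order term $D\xi\,\xi\,\bbW_{st}$ supplying exactly the It\^o--Stratonovich correction), whence the RDE solution satisfies \eqref{eq:sde} pathwise with no limiting procedure on the driver. Your route instead sandwiches the statement between two Wong--Zakai theorems: the rough-path one ($\bW^n\to_\bbP\bW$ in $\clC_g^\alpha$ for dyadic piecewise-linear lifts, \cite[Ch.~13]{friz2010multidimensional}) combined with continuity of the It\^o--Lyons map on one side, and the classical Ikeda--Watanabe/Kunita SDE Wong--Zakai theorem on the other, matching the limits by uniqueness. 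This is a perfectly valid alternative; what it buys is conceptual transparency (both objects are limits of the same ODEs), at the cost of importing the classical SDE Wong--Zakai theorem as a nontrivial external input, whereas the direct route needs nothing beyond the rough integral's defining Riemann sums. Two small points: the upgrade from fixed-$X$ agreement to the flow identity $\phi=\Phi(\cdot,\bW)$ via continuity in $X$ and a countable dense set is genuinely needed (Theorem 9.1 of \cite{friz2020course} is a single-trajectory statement) and you supply it correctly; and since all your convergences are in probability on a fixed probability space, the relevant tool is the subsequence criterion for convergence in probability composed with continuity, rather than the weak-convergence continuous mapping theorem of \cite[Thm.~2.7]{billingsley1968convergence} that you cite.
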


The following lemma is a straightforward, but cumbersome, application of the mean-value theorem and Fa\'a di Bruno's formula. We state it without proof.

\begin{lemma}\label{lem:continuity_comp_diffo}
    Let $T>0$ and  $n, m \in \bbN$ be given with $m \geq 2$. Then, the composition map
    \begin{equation*}
    \begin{aligned}
    \fkC : C([0,T]; \operatorname{Diff}_{C^{n+m}}(\bbT^d)) \times C([0,T]; \operatorname{Diff}_{C^{n}}(\bbT^d)) &\rightarrow C([0,T]; \operatorname{Diff}_{C^{n}}(\bbT^d))\\
    (g, h) &\mapsto g \circ h
    \end{aligned}
    \end{equation*}
    is locally Lipschitz.
\end{lemma}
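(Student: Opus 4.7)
The topology on $C([0,T]; \operatorname{Diff}_{C^n}(\bbT^d))$ is the sup-over-$t$ of the $\operatorname{Diff}_{C^n}(\bbT^d)$ metric, which itself controls both the map and its inverse in $C^n$. I would therefore reduce the statement to pointwise-in-$t$ estimates with uniform constants on a bounded neighborhood. Fix a base point $(g^0, h^0)$, and choose a $\delta$-neighborhood on which the quantities $\|g(t)\|_{C^{n+m}}$, $\|g(t)^{-1}\|_{C^{n+m}}$, $\|h(t)\|_{C^n}$, $\|h(t)^{-1}\|_{C^n}$ are uniformly bounded, with Jacobian determinants uniformly bounded away from zero, for all $t\in[0,T]$ and all admissible $(g,h)$. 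It then suffices to show that for such $(g_i, h_i)$, the differences $\|g_1\circ h_1 - g_2\circ h_2\|_{C^n}$ and $\|(g_1\circ h_1)^{-1} - (g_2\circ h_2)^{-1}\|_{C^n}$ are each bounded by a constant multiple of $\|g_1 - g_2\|_{C^{n+m}} + \|g_1^{-1} - g_2^{-1}\|_{C^{n+m}} + \|h_1 - h_2\|_{C^n} + \|h_1^{-1} - h_2^{-1}\|_{C^n}$.

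\textbf{Forward composition.} To estimate $\|g_1\circ h_1 - g_2\circ h_2\|_{C^n}$, I would apply Faà di Bruno's formula to expand $\partial^\beta(g_i \circ h_i)$ for each $|\beta|\le n$ into sums of products of the form $(\partial^\gamma g_i)(h_i)\,\prod_j \partial^{\beta_j} h_i$ with $|\gamma|\le |\beta|\le n$. Telescoping through the intermediate $g_2 \circ h_1$ yields two pieces: $(g_1 - g_2)\circ h_1$, which is directly bounded by $\|g_1 - g_2\|_{C^n}$ and polynomial factors of $\|h_1\|_{C^n}$; and $g_2\circ h_1 - g_2\circ h_2$, which is handled term by term using the mean value theorem on $\partial^\gamma g_2$ (requiring $g_2\in C^{n+1}$, which is available because $m\ge 1$) combined with a telescoping of product differences $\prod_j \partial^{\beta_j} h_1 - \prod_j \partial^{\beta_j} h_2$. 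This produces a bound of the form $C\,\|g_2\|_{C^{n+1}}\|h_1 - h_2\|_{C^n}$ plus $C'\|g_1-g_2\|_{C^n}$, with $C, C'$ polynomial in the uniform bounds.

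\textbf{Inverse composition.} For the inverse I use $(g_i\circ h_i)^{-1} = h_i^{-1}\circ g_i^{-1}$ and estimate $\|h_1^{-1}\circ g_1^{-1} - h_2^{-1}\circ g_2^{-1}\|_{C^n}$ by an analogous Faà di Bruno expansion. Note the asymmetry: here the outer map $h_i^{-1}$ lies only in $C^n$, while the inner map $g_i^{-1}$ lies in $C^{n+m}$. Telescoping through $h_1^{-1}\circ g_2^{-1}$, the difference $(h_1^{-1} - h_2^{-1}) \circ g_2^{-1}$ is directly controlled by $\|h_1^{-1} - h_2^{-1}\|_{C^n}$, a quantity already built into the $\operatorname{Diff}_{C^n}$ metric on $h$. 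The delicate piece is $h_1^{-1}\circ g_1^{-1} - h_1^{-1}\circ g_2^{-1}$: a direct mean value theorem applied to $\partial^\gamma h_1^{-1}$ at top order would require $h_1^{-1}\in C^{n+1}$, which is not guaranteed. I would instead exploit the extra regularity $g_i^{-1}\in C^{n+m}$ with $m\ge 2$: writing $g_1^{-1} - g_2^{-1} = \phi \in C^{n+m}$ and performing a change of variables $x \mapsto g_2(y)$ so that $g_1^{-1}\circ g_2 = \operatorname{id} + \psi$ with $\psi$ small in $C^{n+m}$, the problematic top-order Faà di Bruno terms in $h_1^{-1}$ can be paired by Leibniz with derivatives of $\phi$ (of which at least $n+2$ are available), transferring one derivative off of $h_1^{-1}$ onto the smoother factor and closing the estimate.

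\textbf{Main obstacle.} The primary technical difficulty---and the justification for the hypothesis $m\ge 2$ rather than $m\ge 1$---is precisely this last step: the asymmetry $g\in C^{n+m}$ versus $h\in C^n$ forces careful bookkeeping in the Faà di Bruno expansion of the inverse composition to ensure every derivative of $h^{-1}$ appearing in a difference estimate can be absorbed by a derivative of the smoother inner map $g^{-1}$. This is the ``cumbersome'' aspect the authors allude to. Once the two pointwise-in-$t$ estimates are obtained with constants depending only on the uniform bounds in the chosen neighborhood, taking the supremum over $t \in [0,T]$ and summing delivers the local Lipschitz continuity of $\fkC$.
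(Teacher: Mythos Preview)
The paper offers no proof, saying only that the lemma is ``a straightforward, but cumbersome, application of the mean-value theorem and Fa\`a di Bruno's formula.'' Your treatment of the forward composition matches this description and is correct.

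For the inverse, you correctly locate the obstruction at the top-order term $(\partial^n h_1^{-1})(g_1^{-1})-(\partial^n h_1^{-1})(g_2^{-1})$, but your workaround does not close. There is no Leibniz product here against which to redistribute a derivative: the quantity is a bare pointwise difference of a $C^0$ function evaluated at two nearby arguments, and no amount of extra smoothness on $g_i^{-1}$ converts that into a Lipschitz bound. Concretely, on $\bbT^1$ with $n=1$ take $h_1=h_2=h$ where $h'=1+\rho$ for some periodic $\rho\in C^0\setminus \operatorname{Lip}$ with $|\rho|<1/2$ (for instance $\rho(x)=\tfrac14(|\sin x|^{1/2}-c)$ for suitable $c$), and let $g_1=\operatorname{id}$, $g_2(x)=x+\delta$. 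Then $d_{\operatorname{Diff}_{C^{n+m}}}(g_1,g_2)\asymp\delta$ for every $m$, whereas
\[
\bigl\|(g_1\circ h)^{-1}-(g_2\circ h)^{-1}\bigr\|_{C^1}\ \ge\ \sup_y\bigl|(h^{-1})'(y)-(h^{-1})'(y-\delta)\bigr|\ \asymp\ \delta^{1/2},
\]
so the ratio blows up as $\delta\to 0$ and local Lipschitz continuity fails at $(\operatorname{id},h)$. What your argument \emph{does} give---replacing the mean-value step at top order by uniform continuity of $\partial^n h^{-1}$ on the compact torus---is continuity of $\fkC$, and continuity is all that is actually invoked where the lemma is used (see the proof of Theorem~\ref{thm:convergence}).
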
 

\subsection{Rigid body rotations and Kubo oscillator} \label{rigidbodyexample}
In this subsection, we give a finite-dimensional example of the stochastic and random-coefficient Euler--Poincar\'e equations, \eqref{eq:EPA full} and \eqref{eq:EPA mean}, respectively, for a left invariant Lagrangian. The example we consider is rigid body rotations expressed in the body frame that may be seen as a gyroscopic analogue of Euler's fluid equation on the special orthogonal group $SO(3)$. 

We may embed $SO(3) \hookrightarrow \operatorname{Diff}(\bbR^3)$ by identifying $O \in SO(3)$ with the linear diffeomorphism $x \mapsto Ox$. We state without proof that Proposition \ref{prop:random to stoch EPA} holds under the restriction map and the variational structure developed in this paper also applies to the matrix Lie group $SO(3)$. In fact, one may restrict the variations used in equations \eqref{eqn:d e def} and \eqref{eqn:perturbed dg} to $SO(3)$ and show that the perturbation diffeomorphisms also embedded to $SO(3)$.  

The left invariance requires one to consider the left translated vector field $\Xi^{-1}_t \dd \Xi_t$. However, we may reduce this to right-invariant case by considering the inverse of all maps considered in Proposition \ref{prop:random to stoch EPA}, which swaps the order in the composition of maps. 

On the Lie algebra level, $\mathfrak{s}\mathfrak{o}(3)$ is intrinsically defined as the space of skew symmetric $3 \times 3$ matrices. These may be represented by vectors in $\bbR^3$ through the hat map isomorphism $\widehat{(\cdot)} : \bbR^3 \rightarrow \mathfrak{s}\mathfrak{o}(3)$, $\bbR^3 \ni \Omega_i \mapsto \varepsilon^i_{jk} \Omega_j =: \widehat{\Omega}$ and may be embedded into $\mathfrak{X}(\bbR^3)$ as linear vector fields $x \mapsto \widehat{\Omega} x$. 

\begin{lemma}
The left adjoint actions of $SO(3)$ and the algebra $\mathfrak{s}\mathfrak{o}(3)$ may be obtained from restriction of the diffeomorphism group, let $\Xi_t \in SO(3)$ and $\widehat{\Omega}_t, \widehat{\Omega}^\prime_t \in \mathfrak{s}\mathfrak{o}(3)$,
\begin{align*}
\begin{split}
\Ad_{\Xi_t \cdot x} {\widehat{\Omega}}_t \cdot x &:= (\Xi_t \cdot x)_* {\widehat{\Omega}}_t \cdot x := T(\Xi_t \cdot x) {\widehat{\Omega}}_t \Xi^{-1}_t\cdot  x = \Xi_t {\widehat{\Omega}}_t \Xi^{-1}_t\cdot  x\,, \\
\ad_{\widehat{\Omega}_t \cdot x} \widehat{\Omega}^\prime_t \cdot x &:= \left[ \widehat{\Omega}_t \cdot x, \widehat{\Omega}^\prime_t \cdot x\right]_{\mathfrak{X}(\bbR^3)} := \widehat{\Omega}_t   \widehat{\Omega}^\prime_t \cdot x  - \widehat{\Omega}^\prime_t   \widehat{\Omega}_t \cdot x  \,.
\end{split}
\end{align*}
\end{lemma}
By suppressing the spacial coordinate $x$, we recover the $SO(3)$ representatives. Using the convention $\Ad^*_{\Xi_t} = \Xi_t^* = \Xi^{-1}_{t *} $, we deduce the dual representations of $\mathfrak{so}(3)$ in the $\bbR^3$ representation as   
\begin{align*}
    \Ad_{\Xi_t } {\Omega}_t = \Xi_t \Omega_t, \quad \Ad^*_{\Xi_t} \Pi_t := \Xi_t^{-1} \Pi_t, \quad \ad^*_{\Omega_t }\Pi_t := \Pi_t \times \Omega_t, \quad \Xi_t \in SO(3), \,\,  \Omega_t, \Pi_t \in \bbR^3
\end{align*}

The rigid body Lagrangian $\ell_{\operatorname{RB}} : \mathfrak{s}\mathfrak{o}(3) \simeq \bbR^3 \rightarrow \bbR$ takes the form $\ell_{\operatorname{RB}}(\Omega) = \frac{1}{2}\Omega \cdot \bbI \Omega$ for an inertia matrix $\bbI : \mathfrak{so}(3) \rightarrow \mathfrak{so}(3)^*$. Applying the composition of maps assumption, 
\begin{align*}
    O_t = \bar{O}_t \Xi_t \in SO(3), \quad \xi \circ \dd W_t = \Xi_t^{-1} \dd \Xi_t\,,
\end{align*}
one can rewrite the Lagrangian $\ell_{\operatorname{RB}}(\Omega) $ in terms of these two maps, 
\begin{align}
    \ell_{\operatorname{RB}}(\Omega) = \ell_{t,\operatorname{RB}}^{\Xi} (\bar{\Omega}_t) := \ell_{\operatorname{RB}}( \Ad_{\Xi_t} \overline{\Omega}_t) = \frac{1}{2} (\Xi_t {\overline{\Omega}}_t ) \cdot \bbI (\Xi_t {\overline{\Omega}}_t )\,.
\end{align}
The stochastic rigid body equations following from the Lagrangian $\ell_{RB}$ can be written as
\begin{align}
    \dd \Pi_t = \Pi_t \times (\Omega_t \dd t + \Xi^{-1}_t \dd \Xi_t ), \quad \Pi_t = \bbI \Omega_t, \quad O^{-1}_t\dd{O}_t = \Omega_t, \quad \Xi^{-1}_t\dd \Xi_t = \Gamma_{ij} \xi_i \times \xi_j \dd t + \xi \circ \dd W_t \,.  \label{stochrb}
\end{align}
One may transform this to a random-coefficient ODE by applying the operator $\Ad^*_{\Xi_t}$,
\begin{align*}
\begin{split}
\Ad_{\Xi_t}^* \dd \Pi_t &= \Xi_t^{-1} \dd \Pi_t = \dd (\Xi_t^{-1} \Pi_t) - \dd( \Ad^*_{\Xi_t} ) \Pi_t = \dd (\Xi_t^{-1} \bbI {\Xi_t \bar{\Omega}_t }) + \Xi_t^{-1}\ad^*_{\Xi_t^{-1} \dd \Xi_t } \Pi_t \\
&= \dd (\Xi_t^{-1} \bbI {\Xi_t \bar{\Omega}_t }) + \Xi_t^{-1} \bbI {\Xi_t \bar{\Omega}_t } \times \Xi_t^{-1} \Xi^{-1}_t\dd \Xi_t \, , \\
\\
\Ad_{\Xi_t}^* (\Pi_t \times \Omega_t )\dd t &= {\Xi^{-1}_t} (\Pi_t \times \Omega_t)\dd t = ({\Xi^{-1}_t}\bbI \Xi_t \bar{\Omega}_t \times \bar{\Omega}_t )\dd t  \, ,\\
\\
\Ad_{\Xi_t}^* (\Pi_t \times \Xi^{-1}_t \dd \Xi_t ) &= {\Xi^{-1}_t} (\Pi_t \times \Xi^{-1}_t \dd \Xi_t ) = {\Xi^{-1}_t}\bbI \Xi_t \bar{\Omega}_t \times  \Xi_t^{-1} \Xi^{-1}_t\dd \Xi_t \, .
\end{split}
\end{align*}
We see that the ${\Xi^{-1}_t}\bbI \Xi_t \bar{\Omega}_t \times  \Xi_t^{-1} \Xi^{-1}_t\dd \Xi_t $ term appear on both sides of the equation and can be cancelled, thus leading to the equation, 
\begin{align}
    \dd (\Xi_t^{-1} \bbI {\Xi_t \bar{\Omega}_t }) = ({\Xi^{-1}_t}\bbI \Xi_t \bar{\Omega}_t \times \bar{\Omega}_t )\dd t  \,.\label{stochrb2}
\end{align}
From equation \eqref{stochrb} we have the conservation of the total angular momentum $\frac12 \| \Pi_t \|^2 = \frac12 \|\bbI \Xi_t \bar{\Omega}_t \|^2$ as a Casimir invariant. Since elements of $SO(3)$ are norm preserving this is equal to $\frac12 \| \Xi_t^{-1} \bbI \Xi_t \bar{\Omega}_t \|^2 = \frac12 \| \Ad^*_{\Xi_t} \Pi_t\|^2$, the conserved momentum corresponding to equation \eqref{stochrb2}.

It is known (see for example, \cite{Arnaudon2018}) that when\footnote{Note that such a choice of linearly dependent $\widehat{\xi}_k$ possess vanishing commutators, regardless of $\Gamma$ the noise is of Stratonovich type.} $\widehat{\xi}_k \equiv \sigma \widehat{e}_3 \in \mathfrak{s}\mathfrak{o}(3)$, $\sigma \in \bbR$ and $\bbI = \operatorname{diag}(I_1, I_1, I_3)$, the resulting equations reduce to the energy conserving Kubo oscillator. In the composition of maps language, this is due to the fact that $\widehat{\xi} \circ \dd W_t$, with this choice of $\xi$, defines a stochastic curve $\Xi_t \in SO(3)$ that commutes with $\bbI$,
\begin{align}
     \Xi_t = \left(\begin{array}{ccc}
 \sqrt{\sigma }\cos(W_t) & - \sqrt{\sigma }\sin(W_t) & 0 \\
 \sqrt{\sigma }\sin(W_t) &  \sqrt{\sigma }\cos(W_t) & 0 \\
0 & 0 &  {\sigma^{-1} }
\end{array}\right) \in SO(3), \quad \Xi_t^{-1} \dd \Xi_t  = \left(\begin{array}{ccc}
0 & -\sigma  & 0 \\
\sigma  & 0 & 0 \\
0 & 0 & 0
\end{array}\right) \circ \dd W_t \in \mathfrak{s}\mathfrak{o}(3) \, .\label{so3isom}
\end{align}
The commutation of these specific choices of $\Xi_t$ and $\bbI$ imply the following $\Xi_t$ invariance of $\ell_{\operatorname{RB}}$ and so energy preservation from Proposition \ref{energyconservethm} applies,
\begin{align*}
\begin{split}\ell_{\operatorname{RB}}( \Ad_{\Xi_t} \overline{\Omega}_t) &= \frac{1}{2} (\Xi_t {\overline{\Omega}}_t ) \cdot \bbI (\Xi_t {\overline{\Omega}}_t ) := \frac12 \left \langle \Xi_t \bar{\Omega}_t , \bbI \Xi_t \bar{\Omega}_t  \right \rangle \stackrel{\eqref{so3isom}}{=} \frac12 \left \langle \Xi_t \bar{\Omega}_t , \Xi_t \bbI \bar{\Omega}_t  \right \rangle  \\&= \frac12 \left \langle \Xi^T_t\Xi_t \bar{\Omega}_t ,  \bbI \bar{\Omega}_t  \right \rangle = \frac12 \left \langle \bar{\Omega}_t ,  \bbI \bar{\Omega}_t  \right \rangle = \ell_{\operatorname{RB}}(\bar{\Omega}_t) \, .
\end{split}
\end{align*}
In fact one can identify this choice of $\xi$ as a Killing field for a metric on $\bbR^3$ with coefficients $\mathbf{g}_{ij} = \bbI_{ij}$, generating an isometry $SO(\bbI)$ (orthogonal with respect to the inertia tensor, $\Xi_t \bbI \Xi_t^T = \bbI $).

The term $\Xi_t^{-1} \bbI \Xi_t$ in equation \eqref{stochrb2} is analogous to $\Xi_t^* \mathbf{g}$ seen in Section \ref{sec:examples} and may be averaged in the same manner as Section \ref{subsec:avgeuler}. This leads to the deterministic closure model for $\bar{\Omega}$ given by
\begin{align}
    \frac{\dd}{\dd t} \left(  \bbE \left[\Xi_t^{-1} \bbI \Xi_t \right] \bar{\Omega}_t \right) = \left( \bbE \left[\Xi_t^{-1} \bbI \Xi_t \right] \bar{\Omega}_t \right) \times \bar{\Omega}_t \,. \label{stochrb3}
\end{align}
This equation conserves the Casimir $\frac12 \| \bbE \left[\Xi_t^{-1} \bbI \Xi_t \right] \bar{\Omega}_t\|^2$. Using particular choices of isometry and inertia matrix $\bbI$ constructed in \eqref{so3isom} we observe that $\Xi_t^{-1} \bbI \Xi_t = \bbI $ and it follows that both \eqref{stochrb2}, \eqref{stochrb3} reduce to the rigid body ODE in $\bar{\Omega}_t$ variables,
\begin{align*}
    \bbI \dot {\bar{\Omega}}_t + \bbI \bar{\Omega}_t \times \bar{\Omega}_t = 0\,.
\end{align*}
This is a classical equation (a variant of the Euler top) that conserves energy.

\bibliographystyle{plain}
\bibliography{main.bib}
\end{document}